\newtheorem{lemma}{Lemma}[section]
\newtheorem{theorem}[lemma]{Theorem}
\newtheorem{corollary}[lemma]{Corollary}
\newtheorem{proposition}[lemma]{Proposition}
\newtheorem{proposition-definition}[lemma]{Proposition-Definition}
\theoremstyle{definition}
\newtheorem{definition}[lemma]{Definition}
\newtheorem{remark}[lemma]{Remark}
\newtheorem{example}[lemma]{Example}
\newtheorem{problem}[lemma]{Problem}
\newcommand{\CC}{{\mathbb{C}}}
\newcommand{\ZZ}{{\mathbb{Z}}}
\newcommand{\FF}{{\mathbb{F}}}
\newcommand{\calB}{{\mathcal{B}}}
\newcommand{\calD}{{\mathcal{D}}}
\newcommand{\calE}{{\mathcal{E}}}
\newcommand{\calM}{{\mathcal{M}}}
\newcommand{\one}{{\mathbf{1}}}
\newcommand{\ISG}{{\mathsf{ISG}}}
\newcommand{\barISG}{{\overline{\mathsf{ISG}}}}
\DeclareMathOperator{\rank}{{rank}}
\DeclareMathOperator{\ance}{{\mathbf{ance}}}
\newcommand{\proj}[1]{\ket{#1}\!\bra{#1}}
\begin{document}

\title{The code distance of Floquet codes}

\author{Keller Blackwell}
\affiliation{Department of Computer Science, Stanford University}

\author{Jeongwan Haah}
\affiliation{Leinweber Institute for Theoretical Physics, Stanford University}
\affiliation{Google Quantum AI}
\orcid{0000-0002-1087-6853}
\date{}

\maketitle 

\begin{abstract}
    For fault-tolerant quantum memory defined by periodic Pauli measurements,
    called \emph{Floquet codes},
    we prove that every correctable, undetectable spacetime error
    occurring during the steady stage
    is a product of 
    (i) measurement operators inserted at the time of the measurement
    and (ii) pairs of identical Pauli operators sandwiching a measurement that commutes with
    the operator.
    We call such errors \emph{benign};
    they define a binary vector subspace of spacetime errors
    which properly generalize stabilizers of static Pauli stabilizer codes.
    Hence, the code distance of a Floquet code
    is the minimal weight of an undetectable spacetime Pauli error that is not benign.
    Our results apply more generally to families of dynamical codes for which
    every instantaneous stabilizer is inferred from measurements
    in a time interval of bounded length.
\end{abstract}

\section{Introduction}

In a conventional construction of fault-tolerant quantum memory, 
one starts with a Pauli stabilizer code 
before designing a Clifford circuit to measure the stabilizers.
The latter is as important as the former,
and it is well known that the error correction capability of a Pauli stabilizer code
as indicated by its code distance is only a performance target,
\emph{not} a guarantee.
The appreciation of circuit level noise has grown over time,
and realizable quantum error correction must be centered around circuits
rather than a subspace of the Hilbert space of a many-qubit system~\cite{Gottesman2022,McEwen2023,Delfosse2023}.

Prior work~\cite{Hastings2021} establishes
that carefully constructed measurement dynamics can function as a fault-tolerant quantum memory
despite not stabilizing a subspace sufficient to encode even a single logical qubit.
Perhaps the simplest class of such quantum memory 
is defined by periodic sequences of Pauli measurements, called \emph{Floquet codes}.
Of course, there is no fundamental reason to require measurement periodicity,
and one can also consider more general \emph{dynamical codes}~\cite{Gottesman2022}
defined by Clifford circuits with Pauli measurements.

The code distance of Floquet or dynamical codes
must be the minimal weight of an undetected spacetime error that causes a logical failure.
While the detectability of such an error is relatively straightforward~\cite{McEwen2023,Delfosse2023} 
(but see also~\cite{Fu2024}),
the question of whether the error causes a logical failure
is often subtle, being highly sensitive to time boundary conditions.

In some concrete examples of Floquet codes
({\it e.g.}, the honeycomb code without~\cite{Hastings2021,Fahimniya2023} 
and with~\cite{Haah2022,Paetznick2022,Gidney2022} boundaries,
the CSS honeycomb code~\cite{Davydova2022}, 
the X-cube Floquet code~\cite{Zhang2022},
the coupled spin chain construction~\cite{Yan2024},
and Floquet Bacon--Shor code~\cite{Alam2024}),
the code distance is estimated by analyzing the map from errors to sydromes~\cite{McEwen2023}.
These analyses happen to rely on decoding graphs,
meaning that there are exactly two flipped syndrome bits for each elementary error,
which is not always true in general Floquet codes.

It is not always straightforward 
to identify logical failures caused by spacetime error configurations.
\cite{Vuillot2021} constructs a Floquet code on the honeycomb graph for which
all instantaneous stabilizer groups have $O(\sqrt{n})$ code distance,
but whose time dynamics allow certain $O(1)$ weight spacetime errors to induce logical failures
while remaining undetected.
Moreover,
some errors may force the physical state out of the instantaneous code space;
nonetheless, under certain measurement dynamics, subsequent measurements 
can be not only unaffected --- they may even \emph{absorb} the error.
Prior works have called this phenomenon \emph{self-correction}~\cite{Alam2024}.

This subtlety is rooted in the status
that there is no criterion to tell whether an undetectable spacetime error
will eventually have implemented a nontrivial logical operation on the encoded qubit.
Such uncertainty is in contrast with conventional Pauli stabilizer codes,
where any undetectable error must fall in precisely one of two classes:
the trivial logical class consisting of all stabilizers,
and the non-trivial logical class consisting of logical operators.
The code distance of a Pauli stabilizer code is therefore
the minimum weight of an undetectable error that is not a stabilizer.
Classifying a given undetectable error within this dichotomy is efficient 
and amounts to checking its commutativity with any basis of the logical operators.

Here, we give such a dichotomy for undetectable spacetime errors in Floquet codes.
We identify a class of errors which manifestly preserve logical states,
generated by two sets of errors:
(i) a Pauli measurement operator inserted at the time of the measurement,
and (ii) a pair of identical Pauli operators
\footnote{One is the inverse of the other in case of prime qudit codes.}
inserted immediately before and after a measurement
which commutes with the operator.
We prove that any undetectable spacetime error inserted during the steady stage of a Floquet code
that is not generated by the aforementioned errors 
implements a nontrivial logical operation on the encoded qubits.
This conclusion follows from the result that, in any Floquet code,
every undetectable spacetime error is equivalent to a
logical operator of an instantaneous Pauli stabilizer code at some future time.
We show that, given an undetectable spacetime error, computation of 
the equivalent logical operator is efficient; hence, as in the stabilizer code case,
such a dichotomy can be checked efficiently.
Our results hold in any dynamical code (and in particular, Floquet codes)
for which every instantaneous stabilizer is inferred by measurements 
from a constant-width time window, which we call the 
\emph{bounded-inference property}.

\subsection{Prior work}

Delfosse and Paetznick~\cite{Delfosse2023}
considered arbitrary, finite Clifford circuits that include Pauli measurements.
They defined two closely related codes:
the \emph{outcome code} is a binary vector space (a classical code) consisting of all 
possible outcomes of the measurements in a circuit
in the absence of noise,
and the \emph{spacetime code} is a Pauli stabilizer code which maps 
spacetime Pauli errors to deviations of the measurement outcomes 
from the outcome code. 
In other words, their work focuses on \emph{detection} of spacetime errors.
This would provide sufficient machinery
to study finite Clifford circuits whose operational meaning is self-contained,
since one could encode the final logical outcome into some Pauli measurements.
However, it is not always clear how to adapt this machinery 
if the finite circuit is treated as a fault-tolerant component ({\it e.g.}, a quantum memory)
which maintains its own performance guarantees in modular applications.
For Floquet codes, 
one might consider a sequence of Clifford circuits,
starting with some logical encoding and ending with logical measurements
with increasing memory times. The problem remains as to how to
ascribe the error correction performance of the resulting circuit
to the time boundary conditions or the underlying Floquet code;
see also~\cref{rem:DP2023}.

Fu and Gottesman~\cite{Fu2024} study dynamical codes specified 
by an initial Pauli stabilizer stabilizer code $S_0$ and a finite 
sequence of Pauli measurements $\mathbf{M}$. A code state 
$\mathbf{c}$ of $S_0$ is prepared, after which $\mathbf{M}$ is 
measured through \emph{once}; their fault model considers a 
one-time error inserted immediately following the preparation of 
$\mathbf{c}$. Our setting is considerably different: we do not fix 
$S_0$ a priori, instead considering a periodic measurement 
schedule of length $P$. The instantaneous stabilizer codes are 
induced entirely by the recurrent measurement dynamics, which 
persist indefinitely. Most importantly, whereas~\cite{Fu2024} 
considers a one-time error insertion, we consider general 
spacetime errors which may be inserted anywhere in spacetime 
after the code enters \emph{steady stage} (when the 
instantaneous codes reaches a constant dimension).

There exists a graphical representation of quantum circuits, 
called ZX calculus, 
to study Clifford circuits with Pauli measurements~\cite{Wetering2020}.
Bomb\'in {\it et al.}~\cite{Bombin2024} used this approach on several examples, illustrating that detectors and logical failures can be probed
using the commutation relations between errors and ``Pauli webs''
corresponding to detectors and logical operators.
\cite{Bombin2024} focused on quantum circuits measuring Pauli observables
supported in $X, Z$.
Magdalena de la Fuente {\it et al.}~\cite{Fuente2024}
extended ZX calculus to a tri-color ZXY calculus which
more naturally characterizes circuits with arbitrary Pauli measurements,
and showed how Pauli webs can be constructed in this more
general setting.
Xu and Dua~\cite{Xu2025} applied this approach to
dynamical codes obeying certain spacetime-locality conditions, arguing 
for their fault-tolerance.
In both~\cite{Fuente2024} and~\cite{Xu2025},
it is important that there are input and output Pauli stabilizer codes
that are finitely separated in time,
and the same limitation as~\cite{Delfosse2023} applies.
See also~\cref{rem:LogicalFailureProbeExtensionQuestion}.

Given the subtlety of probing logical failures without future time boundaries,
one may wonder how numerical study of Floquet codes is even possible.
This can indeed be subtle in general 
as one must often impose temporal boundaries to make a simulated circuit finite,
and the temporal boundary may need to be tailored to 
the exact construction under consideration.
On topological codes, the issue is less severe
since the correlation between error correction operators
is short ranged in time~\cite{Dennis2001}.
Typically,
error-free time steps and noiseless measurements are inserted 
in last stages of the circuit~\cite{Gidney2021,Gidney2022,Paetznick2022,McEwen2023},
and simulation software directly gives what remains after decoding.
A justification for these noiseless final rounds
is that any logical qubit will be measured out destructively in practice,
and all these simulated quantum codes have a property
that single-qubit measurements allow for reconstruction of 
all logical and syndrome bits,
which is a property enjoyed by all CSS Pauli stabilizer codes.
It however remains true that these time boundary conditions
make it difficult to compare Floquet codes.

\subsection{Settings}

We recall the evolution of the stabilizer group under a Pauli measurement schedule. 
Suppose that the system is the maximally mixed state in a common eigenspace of
a Pauli stabilizer group, called an {\bf instantaneous stabilizer group} or $\ISG(t)$ at time~$t$,
and that we measure a Pauli operator~$M$ in the subsequent time step.
If $M \in \ISG(t)$, then the physical state is an eigenstate of~$M$.
The measurement reveals this eigenvalue, and the physical state is undisturbed;
hence, $\ISG(t+1) = \ISG(t)$.
If on the other hand $M \notin \ISG(t)$, we must consider two cases:
\begin{enumerate}
    \item[(i)] $M$ commutes with every element of~$\ISG(t)$, or
    \item[(ii)] $M$ fails commute with some element $S \in \ISG(t)$.
\end{enumerate}
In case of~(i),
$M$ is a nontrivial logical operator of the instantaneous code,
and the post-measurement physical state is projected to an eigenstate of~$M$.
Depending on the outcome, $M$ or $-M$ becomes a new stabilizer,
and $\ISG(t+1) = \langle \pm M, \ISG(t) \rangle$;
the rank of the stabilizer group increases by~$1$.
In case of~(ii),
since the post-measurement physical state must be an eigenstate of~$M$,
the anticommuting operator~$S$ cannot be in~$\ISG(t+1)$.
On the other hand, any element of~$\ISG(t)$ that commutes with~$M$
is still a stabilizer.
The new stabilizer group~$\ISG(t+1)$ 
is generated by this $M$-commutant subgroup of~$\ISG(t)$
and $\pm M$;
the rank of the stabilizer group stays the same.

\begin{definition}\label{defn:dynamicalCode}
    A {\bf dynamical code} is defined by a time sequence~$\mathbf M_1, \mathbf M_2, \ldots$ 
    of sets $\mathbf M_t = \{M_{t,i}\}_i$ of Pauli operators~$M_{t,i}$,
    where $M_{t,i}$ and $M_{t,j}$ must commute for all~$i,j$.
    There is no requirement that operators $M_{t,i}$ and $M_{t',j}$ 
    at different time steps~$t \neq t'$ need to commute.
    If the sequence $\mathbf M_t$ is periodic in~$t$,
    then we call the dynamical code a {\bf Floquet code}.
\end{definition}

\noindent 
The implementation of a dynamical code is nothing but measuring $M_{t,i}$ for all~$i$ at time~$t$.
Since all the measurements at one time step is assumed to commute with each other,
they can be implemented simultaneously.

\begin{remark}\label{rem:CliffordUnitary}
    One can include Clifford unitaries in the definition of a dynamical or Floquet code,
    but we do not consider them in this paper for simplicity.
    Note that all our results can be adapted in the unitary-allowed settings
    by conjugating Pauli errors by the unitary gates in the code.
\end{remark}

Our dynamics always start with the maximally mixed state
and the trivial instantaneous stabilizer group~$\ISG(0) = \{\one\}$.
Later instantaneous stabilizer groups~$\ISG(t)$ are not determined 
before one actually implements the schedule.
However, it is standard to calculate 
which measurements will have deterministic or nondeterministic outcomes,
and, in this paper, $\ISG(t)$ will mean an instantaneous stabilizer group
determined by any valid set of measurement outcomes 
assuming that no errors occur during the dynamics. 

Although the measured signs of Pauli operators are central to 
\emph{performing} Pauli error correction,
much of our derivation will not require detailed sign information.
Thus, we will use 
\begin{equation}
    \barISG = \langle -\one, \ISG \rangle
\end{equation}
to denote the extension of the instantaneous stabilizer group by signs.
This eases notation;
for example, $M \in \barISG$ means that either $M \in \ISG$ or $-M \in \ISG$.
Additionally, we will refer to ``logical operators of $\ISG$'' as shorthand for the more precise phrase ``logical operators of the Pauli stabilizer code defined by~$\ISG$.''

\begin{definition}
    Let $\mathcal P$ be the Pauli group on~$n$ qubits,
    and let $\mathcal P_t$ for any integer $t \ge 0$ be a copy of $\mathcal P$.
    A {\bf spacetime Pauli error} $E = (\ldots, E_t, \ldots)$, 
    or just an {\bf error} for short,
    is an element of the infinite direct sum~$\calE = \bigoplus_{t \ge 0} \mathcal P_t$.
    If $I$ is a set of time steps such that $E_t \in \CC\one$ for all $t \notin I$,
    then we say $E$ is {\bf supported} on~$I$.
    We write $[O]_{t'}$ for any Pauli operator~$O$
    to mean the Pauli operator~$O$ inserted at time~$t'$;
    that is, the bracket~$[\cdot]$ strips off the time coordinate, if any,
    and resets it by the subscript.
\end{definition}

\noindent
If we mod out all phase factors, the set~$\calE$ becomes a $\FF_2$-linear space
where the addition of vectors $E = (\ldots,\, E_t, \, E_{t+1},\, \ldots)$
and $E' = (\ldots,\, E_t', \, E_{t+1}',\, \ldots)$ is inherited from an obvious multiplication rule of operators:
$EE' = (\ldots,\, E_t E_t',\, E_{t+1} E_{t+1}',\, \ldots)$.

\begin{definition}[Error timing convention]
    Consider a dynamical code with measurement schedule $\mathbf{M}_1, \mathbf{M}_2, \ldots$ 
    where $\mathbf{M}_t$ denotes the set of measurements performed at time~$t$. 
    For any error~$E = (\ldots, E_t, \ldots) \in \calE$, 
    the error component~$E_t \in\mathcal{P}_t^n$ is inserted immediately 
    \emph{following} all measurements in~$\mathbf{M}_t$ have been performed.
\end{definition}

\subsection{Action of errors}

We can regard any measurement as a quantum channel 
that projects the system and adjoins measurement outcomes.
For clarity, let us momentarily assume (without loss of generality) that there is one measurement at a time.
Then, the measurement channel will be
\begin{equation}
    \calM_{t+1} : \rho_t \mapsto \rho_{t+1} 
    = 
    \Pi \rho_t \Pi \otimes \proj 0 
    + (\one - \Pi) \rho_t (\one - \Pi) \otimes \proj 1
\end{equation}
If a measurement is deterministic, one of the two terms is zero;
otherwise, they both have trace~$1/2$ in the steady stage.
The measurement result register gets enlarged by one bit every measurement.
At any given execution of a dynamical code,
we will have a pure state on the measurement outcome register,
but here we consider the probabilistic ensemble over all measurement outcomes.
A spacetime error~$E$ gives additional insertions of Pauli channels 
in the sequence of measurement channels,
and modifies the chain of outcome-recorded density matrices:
\begin{equation}
    E : \begin{pmatrix}
        \rho_0\\ 
        \rho_1 = \calM_1(\rho_0)\\
        \vdots\\
        \rho_t = \calM_t(\rho_{t-1})\\
        \vdots
    \end{pmatrix} 
    \mapsto 
    \begin{pmatrix}
        \rho'_0 = E_0 \rho_0 E_0^\dagger\\
        \rho'_1 = E_1 \calM_1(E_0 \rho_0 E_0^\dagger) E_1^\dagger\\
        \vdots\\
        \rho'_t = E_t \calM_t(\rho'_{t-1}) E_t^\dagger\\
        \vdots
    \end{pmatrix}
\end{equation}

\begin{proposition}\label{thm:ErrorAction}
    We have a group action of the group~$\calE$ of all spacetime errors
    on the set of all chains of outcome-recorded density matrices.
\end{proposition}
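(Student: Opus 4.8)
The plan is to verify the two defining axioms of a group action directly from the explicit formula for how an error $E$ transforms a chain of outcome-recorded density matrices. Write $\Phi(E)$ for the map sending the chain $(\rho_0, \rho_1, \ldots)$ to the chain $(\rho_0', \rho_1', \ldots)$ with $\rho_t' = E_t\, \calM_t(\rho_{t-1}')\, E_t^\dagger$ and $\rho_0' = E_0 \rho_0 E_0^\dagger$. I must show (a) $\Phi(\one)$ is the identity map, where $\one = (\ldots, \one, \ldots)$ is the trivial spacetime error, and (b) $\Phi(E'E) = \Phi(E') \circ \Phi(E)$ for all $E, E' \in \calE$, matching the group law on $\calE$ given by componentwise multiplication $E'E = (\ldots, E_t' E_t, \ldots)$.

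Axiom (a) is immediate: with every component equal to~$\one$, the recursion $\rho_t' = \one\, \calM_t(\rho_{t-1}')\, \one = \calM_t(\rho_{t-1}')$ together with $\rho_0' = \rho_0$ reproduces the original chain by induction on~$t$. For axiom (b), I would proceed by induction on~$t$ to show that the $t$-th component of $\Phi(E')\big(\Phi(E)(\rho_\bullet)\big)$ equals the $t$-th component of $\Phi(E'E)(\rho_\bullet)$. The base case $t=0$ reads $E_0'(E_0 \rho_0 E_0^\dagger)E_0'^\dagger = (E_0'E_0)\rho_0(E_0'E_0)^\dagger$, which holds because $(E_0'E_0)^\dagger = E_0^\dagger E_0'^\dagger$. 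For the inductive step, let $\sigma_t$ denote the $t$-th component of $\Phi(E)(\rho_\bullet)$ and $\tau_t$ the $t$-th component of $\Phi(E')(\sigma_\bullet)$; by the inductive hypothesis $\tau_{t-1}$ equals the $(t-1)$-th component of $\Phi(E'E)(\rho_\bullet)$. Then
\begin{equation}
    \tau_t = E_t'\, \calM_t(\tau_{t-1})\, E_t'^\dagger
    = E_t'\, \calM_t\!\big([\Phi(E'E)(\rho_\bullet)]_{t-1}\big)\, E_t'^\dagger,
\end{equation}
so it remains to check that conjugating the output of $\calM_t$ first by $E_t$ and then by $E_t'$ has the same effect as conjugating by $E_t'E_t$ applied to the output of $\calM_t$ on the already-$E'E$-corrected input. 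Unwinding the definition of $\Phi(E)$ and $\Phi(E')$ carefully, $\sigma_t = E_t \calM_t(\sigma_{t-1}) E_t^\dagger$, and then $\tau_t = E_t' \calM_t(\tau_{t-1}) E_t'^\dagger$ where $\tau_{t-1}$ is built from $\sigma_\bullet$; the key point is that the correction at time~$t$ depends only on $E_t$ (resp.\ $E_t'$) and the incoming state, so the composition telescopes into a single conjugation by $E_t'E_t$, and associativity of operator multiplication closes the induction.

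I expect the only real subtlety — not an obstacle so much as a bookkeeping point to get right — is the interaction between the conjugation and the measurement channel: one must confirm that inserting a Pauli $E_t$ \emph{after} $\calM_t$ in the $E$-corrected chain and then later inserting $E_t'$ \emph{after} $\calM_t$ in the $E'$-corrected chain genuinely corresponds to a single insertion of $E_t'E_t$, rather than producing cross terms or a reordering of measurement and error. This is true precisely because the error timing convention places every $E_t$ immediately following all of $\mathbf{M}_t$, so successive corrections at the same time step stack adjacently with no intervening channel; once this is spelled out, the argument is the routine telescoping induction sketched above. One should also note in passing that since each $E_t$ is unitary, conjugation by it is a well-defined automorphism of density matrices, so $\Phi(E)$ indeed maps chains to chains and is invertible with inverse $\Phi(E^{-1}) = \Phi(E)$ (up to the irrelevant global phase that we have modded out), which is consistent with (a) and (b) and confirms we have a genuine group action rather than just a monoid action.
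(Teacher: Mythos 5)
There is a genuine gap in the inductive step for axiom (b), and it is not merely a bookkeeping point. With your definition, $\Phi(E)$ depends on its input chain only through $\rho_0$: the recursion $\rho'_t = E_t\calM_t(\rho'_{t-1})E_t^\dagger$ determines $\rho'_1,\rho'_2,\ldots$ from $\rho'_0 = E_0\rho_0E_0^\dagger$ alone. Consequently the recursion for $\tau_\bullet = \Phi(E')(\sigma_\bullet)$ reads $\tau_t = E_t'\,\calM_t(\tau_{t-1})\,E_t'^\dagger$ for $t\ge1$, which never involves $E_t$, so $\Phi(E')\circ\Phi(E)$ sees $E$ only through $E_0$. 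Already at $t=1$ this gives $\tau_1 = E_1'\,\calM_1(\tau_0)\,E_1'^\dagger$, which disagrees with $[\Phi(E'E)(\rho_\bullet)]_1 = E_1'E_1\,\calM_1(\tau_0)\,E_1^\dagger E_1'^\dagger$ whenever $E_1\neq\one$. There is no ``telescoping'' into conjugation by $E_1'E_1$: the conjugation by $E_1$ lives inside the separate recursion for $\sigma_\bullet$ and is simply not consulted by $\Phi(E')$.

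The paper's proof avoids this by not treating the action as a self-map on bare chains of density matrices. When $F$ is applied to the $E$-modified history, the time-$t$ update is taken to be conjugation by $F_t$ of the \emph{effective} channel $E_t\calM_t(\cdot)E_t^\dagger$ of that history, giving $\rho''_t = F_tE_t\,\calM_t(\rho''_{t-1})\,E_t^\dagger F_t^\dagger$ directly. In other words, the set being acted on should be understood as carrying the data of the error insertions (equivalently, of the effective circuit), so that the $\calE$-action is composition of insertions --- essentially left multiplication in each $\mathcal{P}_t$ --- and associativity is then immediate. To repair your argument, redefine what it means to apply $E'$ to the output of $\Phi(E)$: the time-$t$ correction should conjugate the effective $t$-th channel $E_t\calM_t(\cdot)E_t^\dagger$ rather than the bare $\calM_t$, yielding $\tau_t = E_t'E_t\,\calM_t(\tau_{t-1})\,E_t^\dagger E_t'^\dagger$, after which your induction closes.
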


\begin{proof}
    Clearly, the empty error acts by the identity.
    We have to show the associativity.
    Let $E,F \in \calE$.
    Applying~$E$ and then~$F$, we have
    \begin{equation}
        \begin{pmatrix}
            \vdots\\
            \rho_t = \calM_t(\rho_{t-1})\\
            \vdots
        \end{pmatrix} 
        \mapsto 
        \begin{pmatrix}
            \vdots\\
            \rho'_t = E_t \calM_t(\rho'_{t-1}) E_t^\dagger\\
            \vdots
        \end{pmatrix}
        \mapsto
        \begin{pmatrix}
            \vdots\\
            \rho''_t = F_t E_t \calM_t(\rho''_{t-1}) E_t^\dagger F_t^\dagger\\
            \vdots
        \end{pmatrix} \, ,
    \end{equation}
    which is the same as applying $FE \in \calE$ at once:
    \begin{equation}
        \begin{pmatrix}
            \vdots\\
            \rho_t = \calM_t(\rho_{t-1})\\
            \vdots
        \end{pmatrix} 
        \mapsto 
        \begin{pmatrix}
            \vdots\\
            \rho''_t = F_t E_t \calM_t(\rho''_{t-1}) E_t^\dagger F_t^\dagger\\
            \vdots
        \end{pmatrix} \, .\qedhere
    \end{equation}
\end{proof}

\section{Ancestries of stabilizers and detectors}

In this section, 
we will understand how error detection works in dynamical codes.
This has been discussed previously in various forms~\cite{McEwen2023,Bombin2024},
but our exposition clarifies how instantaneous stabilizer groups are affected by errors.
We will first examine how a current stabilizer 
is constructed from past measurements in the absence of any errors,
and then explain why certain errors leave detectable signatures in measurement outcomes.
By considering reference systems, this knowledge will prove useful to understand logical failures.
We will identify for each detector an element of~$\calE$
whose commutation relation with a given error determines whether the detector will be triggered.

To avoid unnecessary complication,
we assume in this section that exactly one measurement is taken at a time 
even if many measurements could be implemented simultaneously.
Under this assumption, every element~$S \in \ISG(t)$ 
is derived from a unique ancestry,
defined recursively as follows.

\begin{definition}\label{defn:ancestry}
    Let $0 \le \tau \le t$ be time steps,
    $S_t \in \ISG(t)$ a stabilizer,
    and $M_t$ the measurement operator at time~$t$ with outcome~$m_t \in \{\pm 1\}$
    so that $m_t M_t \in \ISG(t)$.
    We define the {\bf ancestry} of~$S_t$, denoted $\ance_\tau^t (S_t) \in \calE$, by
    \begin{equation}
        \ance_\tau^t (S_t) = 
        \begin{cases}
            \one_t \cdot \ance^{t-1}_\tau([S_{t}]_{t-1}) & \text{if } [S_t]_{t-1} \in \ISG(t-1) \text{ and } t > \tau,\\
            (m_t M_t) \cdot \ance^{t-1}_\tau([(m_t M_t)^\dagger S_t ]_{t-1}) & \text{if } [S_t]_{t-1} \notin \ISG(t-1) \text{ and } t > \tau,\\
            S_t & \text{if } t = \tau.
        \end{cases} \label{eq:ancestry}
    \end{equation}
\end{definition}

\begin{lemma}\label{thm:ancestryWellDefined}
    In the second case of~\eqref{eq:ancestry}, 
    we always have $[ (m_t M_t)^\dagger S_t]_{t-1} \in \ISG(t-1)$ which commutes with~$[M_t]_{t-1}$.
    Therefore, the ancestry is well-defined.
\end{lemma}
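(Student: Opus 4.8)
The plan is to push the hypothesis of the second branch of \eqref{eq:ancestry} through the three-way case analysis of the stabilizer update recalled earlier. That hypothesis says $S_t\in\ISG(t)$ while $[S_t]_{t-1}\notin\ISG(t-1)$; in particular $\ISG(t)\neq\ISG(t-1)$, so we are not in the case $M_t\in\barISG(t-1)$ (which would give $\ISG(t)=\ISG(t-1)$), and we land in exactly one of the two remaining situations. Either (i) $M_t$ commutes with every element of $\ISG(t-1)$, so that $\ISG(t)=\langle m_tM_t,\,\ISG(t-1)\rangle$; or (ii) $M_t$ anticommutes with some element, so that $\ISG(t)=\langle m_tM_t,\,C\rangle$ where $C\le\ISG(t-1)$ is the subgroup of operators commuting with $M_t$. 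In either case $m_tM_t\notin\ISG(t-1)$, since $M_t\notin\barISG(t-1)$.

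Next I would extract the structure common to (i) and (ii): setting $H=\ISG(t-1)$ in case (i) and $H=C$ in case (ii), the group $\ISG(t)$ is generated by $m_tM_t$ together with the subgroup $H\le\ISG(t-1)$, every element of $H$ commutes with $M_t$, and $m_tM_t\notin H$. Since $(m_tM_t)^2=\one$, $m_tM_t$ commutes with $H$, and $\ISG(t)$ contains no element equal to $-\one$, the group $\ISG(t)$ is the internal direct product $\langle m_tM_t\rangle\times H$. Consequently every element of $\ISG(t)$ --- with its sign, as it is $\ISG(t)$ and not $\barISG(t)$ that is in play --- factors uniquely as $(m_tM_t)^a\,S'$ with $a\in\{0,1\}$ and $S'\in H\subseteq\ISG(t-1)$ commuting with $M_t$. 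Applying this to $S_t$: if $a=0$ then $S_t=S'\in\ISG(t-1)$, contradicting $[S_t]_{t-1}\notin\ISG(t-1)$; hence $a=1$, and therefore $(m_tM_t)^\dagger S_t=S'$ is an element of $\ISG(t-1)$ that commutes with $M_t$. Reinstating the time label gives $[(m_tM_t)^\dagger S_t]_{t-1}\in\ISG(t-1)$ commuting with $[M_t]_{t-1}$, which is the assertion.

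Well-definedness of $\ance^t_\tau$ then follows by downward induction on the superscript: the base case $t=\tau$ is immediate, and for $t>\tau$ exactly one of the first two branches of \eqref{eq:ancestry} applies and invokes $\ance^{t-1}_\tau(\,\cdot\,)$ on a genuine stabilizer of $\ISG(t-1)$ --- namely $[S_t]_{t-1}$ in the first branch, by the branch condition itself, and $[(m_tM_t)^\dagger S_t]_{t-1}$ in the second, by the step just established --- so the recursion is always well-posed, and it terminates since the superscript strictly decreases to $\tau$. I expect the only delicate point to be the sign bookkeeping in the factorization $S_t=(m_tM_t)^a S'$: it is essential that the operator adjoined when forming $\ISG(t)$ carries the measured sign $m_t$, so that $S'=(m_tM_t)^\dagger S_t$ is genuinely an element of $\ISG(t-1)$ rather than merely of $\barISG(t-1)$. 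The case split itself, and the exclusion of the $M_t\in\barISG(t-1)$ branch, are routine once this is set up.
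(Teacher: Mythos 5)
Your proof is correct and takes essentially the same approach as the paper's, which also relies on the ISG update rule (in both the non-increasing and increasing rank cases, $\ISG(t)$ is generated by $m_t M_t$ together with the $M_t$-commutant in $\ISG(t-1)$) to factor $S_t$ as $m_t M_t$ times an element of that commutant. Your write-up is more explicit about the internal-direct-product decomposition and the resulting sign bookkeeping, whereas the paper compresses the same reasoning into a single sentence plus a parenthetical remark that $-[S_t]_{t-1}\notin\ISG(t-1)$; the content is the same.
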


\begin{proof}
    The condition that $[S_t]_{t-1} \notin \ISG(t-1)$ 
    implies that $S_t$ is a new member introduced to~$\ISG(t)$
    because of the measurement of~$M_t$.
    (The measurement of~$M_t$ cannot just change the sign of~$S_t$,
    and hence $-[S_t]_{t-1} \notin \ISG(t-1)$.)
    That is, $S_t \in (m_t M_t) \cdot \mathcal S \subseteq (m_t M_t) \cdot [\ISG(t-1)]_t$
    where $\mathcal S \subseteq [\ISG(t-1)]_t$ consists of all those that commute with~$M_t$.
\end{proof}

The ancestry of a stabilizer tells us which prior measurement outcomes 
are used to infer the eigenvalue of a present stabilizer.
The recursive definition gives an algorithm to find the ISG elements in the past
upon which the eigenvalue of a current stabilizer depends.

\begin{corollary}\label{thm:StabilizerFromAncestry}
    Let $S_t \in \ISG(t)$ and
    let $A_t A_{t-1} \cdots A_{\tau +1} B_\tau = \ance_\tau^t(S_t)$
    with $B_\tau \in \ISG(\tau)$ 
    where
    $A_j \neq \one_j$ only if $A_j = m_j M_j \in \ISG(j)$
    (with the outcome~$m_j = \pm 1$ of~$M_j$).
    Define for each $j \in [\tau+1, t]$
    \begin{equation}
        P_j = [A_{j}]_j [A_{j-1}]_j \cdots [A_{\tau+1}]_j [B_\tau]_j  \in \ISG(j) \, .
    \end{equation}
    Then, for each $j \in [\tau+1, t]$, the operator~$A_j$ commutes with~$P_j$ and 
    $[A_j^\dagger P_j]_{j-1} \in \ISG(j-1)$.
    We also have that $S_t = P_t$.
\end{corollary}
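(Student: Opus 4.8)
The plan is to induct on the time difference $t - \tau$, peeling off the most recent step exactly as the recursive \cref{defn:ancestry} does. The base case $t = \tau$ is immediate: then $\ance_\tau^t(S_t) = S_t$ forces $B_\tau = S_\tau$ and $P_\tau = [B_\tau]_\tau = S_\tau$, with no $A_j$ to check.

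For the inductive step I would first unwind one level of the recursion. Let $A_t$ be the time-$t$ component of $\ance_\tau^t(S_t)$. By \cref{defn:ancestry}, either $A_t = \one_t$ with $[S_t]_{t-1} \in \ISG(t-1)$, or $A_t = m_t M_t \in \ISG(t)$ with $[S_t]_{t-1} \notin \ISG(t-1)$; in both situations $S'_{t-1} := [A_t^\dagger S_t]_{t-1} \in \ISG(t-1)$, the first case by the defining condition and the second by \cref{thm:ancestryWellDefined}. The recursion then reads $\ance_\tau^t(S_t) = A_t \cdot \ance_\tau^{t-1}(S'_{t-1})$, and since the ancestry decomposition is unique (\cref{thm:ancestryWellDefined}) the lower-time factors $A_{t-1}, \ldots, A_{\tau+1}, B_\tau$ are precisely those appearing in $\ance_\tau^{t-1}(S'_{t-1})$. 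So I may apply the inductive hypothesis to $S'_{t-1}$, obtaining for all $j \in [\tau+1, t-1]$ the partial products $P'_j = [A_j]_j \cdots [A_{\tau+1}]_j [B_\tau]_j \in \ISG(j)$ with $A_j$ commuting with $P'_j$, with $[A_j^\dagger P'_j]_{j-1} \in \ISG(j-1)$, and with $S'_{t-1} = P'_{t-1}$.

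The key observation is that $P_j = P'_j$ for every $j \le t-1$, because the product defining $P_j$ does not involve $A_t$. This yields the three conclusions outright for $j \in [\tau+1, t-1]$, so only $j = t$ remains. Since the bracket $[\cdot]_t$ merely relocates a Pauli operator, the telescoped tail $[A_{t-1}]_t \cdots [A_{\tau+1}]_t [B_\tau]_t$ equals $[P'_{t-1}]_t = [S'_{t-1}]_t = [A_t^\dagger S_t]_t$, hence $P_t = [A_t]_t [A_t^\dagger S_t]_t = S_t$, using that $A_t^\dagger A_t = \one$ exactly (whether $A_t = \one_t$ or $A_t = m_t M_t$). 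This gives $S_t = P_t \in \ISG(t)$. Then $A_t$ commutes with $P_t = S_t$: trivially if $A_t = \one_t$, and because $m_t M_t, S_t \in \ISG(t)$ with $\ISG(t)$ abelian otherwise; and $[A_t^\dagger P_t]_{t-1} = [A_t^\dagger S_t]_{t-1} = S'_{t-1} \in \ISG(t-1)$ as already established. This closes the induction.

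The argument is largely bookkeeping, so I do not expect a genuine obstacle; the two points needing care are tracking the time-stripping bracket (so that the tail of $P_t$ is recognized as the relocated terminal stabilizer of the shorter ancestry) and keeping the measurement-sign conventions consistent, so that $P_t = S_t$ holds exactly rather than up to a phase --- which works precisely because $m_t M_t$, sign included, is the stabilizer produced by the measurement of $M_t$, making $A_t^\dagger A_t = \one$.
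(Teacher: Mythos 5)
Your proof is correct and follows the same underlying strategy as the paper's: unwind the recursive definition of the ancestry, identify $P_j$ with the intermediate stabilizer at time $j$ that appears in the recursion, and then invoke \cref{thm:ancestryWellDefined} for the membership and commutation claims. The paper's own proof is much terser (it declares $S_t = P_t$ ``clear from the definition'' and cites the lemma directly), while your version makes the implicit identification $P_j = [A_j^\dagger \cdots A_t^\dagger S_t]_j$ explicit through a clean induction on $t - \tau$; this is the same argument, spelled out more carefully. One cosmetic nit: you cite \cref{thm:ancestryWellDefined} for ``uniqueness of the ancestry decomposition,'' but uniqueness is simply a consequence of the recursion being deterministic; that lemma establishes well-definedness of the second branch, not uniqueness.
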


\begin{proof}
    The second claim that~$S_t = P_t$ is clear from the definition of the ancestry.
    If $A_j = \one_j$, the first claim is trivial,
    which appears in the first case of~\eqref{eq:ancestry}.
    If $A_j$ is the measurement operator at time~$j$,
    the second case of~\eqref{eq:ancestry} must have happened.
    By~\cref{thm:ancestryWellDefined}, we know that 
    $[A_j^\dagger P_j]_{j-1} \in \ISG(j-1)$ and it must commute with~$[A_j]_{j-1}$.
\end{proof}

\begin{example}
    Suppose that we measure $XI$ at time~$1$, $IX$ at time~$2$, $XX$ at time~$3$, and $ZZ$ at time~$4$.
    Then, $YY \in \barISG(4)$ has the ancestry $[ZZ]_4 [II]_3 [IX]_2 [XI]_1$ up to signs.
\end{example}

From~\cref{thm:StabilizerFromAncestry}
we have an expression for any current stabilizer~$S_t$
as a product of measurement operators,
which is correct including all the signs.
Note that every nonidentity operator~$A_j$ for $j \in [\tau+1,t]$
is associated with a \emph{non}deterministic measurement;
if it were deterministic, the definition of ancestry must have skipped it.
Hence, there cannot be any information about errors in the outcomes of~$M_j$.
The only useful information is obtained when~$S_t$ gets measured in a subsequent step.

Suppose an error~$E_\tau$ is inserted in one time step~$\tau$.
This is after $M_\tau$ is measured, but before $M_{\tau+1}$ is measured.
Before the error, the physical state is stabilized by~$B_\tau$,
but after the error it is stabilized by~$\delta B_\tau$
where $\delta = 1$ if $E_\tau$ commutes with~$B_\tau$ and $\delta = -1$ otherwise.
At each subsequent time step~$j$, we have
\begin{equation}
    P'_j = [m_j M_j]_j [m_{j-1} M_{j-1}]_{j} \cdots [m_{\tau+1} M_{\tau+1}]_j [\delta B_\tau]_j \in \ISG'(j)
\end{equation}
where $m_j$ is the outcome of the measurement by~$M_j$.
This differs by~$\delta$ from what is predicted from measurement outcomes:
\begin{equation}
    P_j = [m_j M_j]_j [m_{j-1} M_{j-1}]_{j-1} \cdots [m_{\tau+1} M_{\tau+1}]_j [B_\tau]_j \in \ISG(j).
\end{equation}
Suppose that an additional one-time-slice error~$F_{\tau'}$ 
is inserted at time~$\tau' \ge \tau$.
Before $F_{\tau'}$ is inserted but after $M_{\tau'}$ is measured,
the physical state is stabilized by~$P'_{\tau'} = \delta P_{\tau'}$.
Inserting~$F_{\tau'}$, the physical state is stabilized by~$\eta P'_{\tau'} = \eta \delta P_{\tau'}$
where $\eta = \pm 1$ is determined by the commutation relation 
between~$P_{\tau'}$ and $F_{\tau'}$.
Arriving at time $t \ge \tau' \ge \tau$,
we see that the state with $F_{\tau'} E_\tau$ inserted
is stabilized by~$\eta \delta S_t$.
Observe that $P_{\tau}$ that was responsible for the commutation relation
is the time-$\tau$ component of~$\ance_\tau^t(S_t)$.
The same is true for~$P_{\tau'}$ with $\tau'$ replacing~$\tau$.

More generally, we arrive at
\begin{proposition}\label{thm:HowDetectorIsTriggered}
    Let $\pi_\tau : (\calE = \bigoplus_t \mathcal P_t) \to \mathcal P_\tau$ 
    be the canonical projection of the direct sum.
    Let $S_t \in \ISG(t)$ be an instantaneous stabilizer 
    and $E = (\ldots,E_j,\ldots)$ be an error supported on a time interval~$[a,b]$.
    The physical state at time~$t \ge b$ after the insertion of~$E$ is stabilized by~$(-1)^\sigma S_t$
    where
    \begin{align}
        \sigma &= \sum_{j = a}^b \lambda\Big(E_j,~ \pi_j \circ \ance_j^t(S_t)\Big) \mod 2 \, , \\
        \lambda(P,Q) &= \begin{cases} 0 & \text{if $P$ and $Q$ commute}, \\ 1 &\text{othewise} \end{cases} \nonumber
    \end{align}
\end{proposition}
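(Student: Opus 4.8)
The plan is to reduce the general statement to the one-time-slice case already worked out in the discussion preceding the proposition, and then iterate. First I would observe that the group action of $\calE$ established in \cref{thm:ErrorAction} lets us factor any error $E$ supported on $[a,b]$ as a product $E = [E_b]_b \cdots [E_{a+1}]_{a+1} [E_a]_a$ of one-time-slice errors, applied in order of increasing time. Since the claimed formula for $\sigma$ is additive over the time slices $j \in [a,b]$, and since composing the effects of successive errors multiplies the sign corrections, it suffices to prove the following single-slice statement: if $F$ is an error supported on the single time step $\tau$ with $a \le \tau \le b$, and the physical state just before the insertion of $F$ (but after $M_\tau$ is measured) is stabilized by some $(-1)^{\epsilon} S_t$ for the relevant $S_t \in \ISG(t)$, then after inserting $F$ it is stabilized by $(-1)^{\epsilon + \lambda(F_\tau,\, \pi_\tau \circ \ance_\tau^t(S_t))} S_t$.

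The key step is the single-slice claim, and it is exactly the computation sketched in the paragraphs before the proposition, which I would make precise as follows. Write $\ance_\tau^t(S_t) = A_t A_{t-1} \cdots A_{\tau+1} B_\tau$ as in \cref{thm:StabilizerFromAncestry}, so that $\pi_\tau \circ \ance_\tau^t(S_t) = B_\tau \in \ISG(\tau)$, and recall from that corollary that for each $j \in [\tau+1,t]$ the partial product $P_j = [A_j]_j \cdots [A_{\tau+1}]_j [B_\tau]_j$ lies in $\ISG(j)$, that $A_j$ commutes with $P_j$, and that $S_t = P_t$. Now inserting $F$ at time $\tau$ changes the stabilizer of the physical state from $(-1)^\epsilon [B_\tau]_\tau$ to $(-1)^{\epsilon}\delta [B_\tau]_\tau$ with $\delta = (-1)^{\lambda(F_\tau, B_\tau)}$, since conjugating a Pauli stabilizer by a Pauli operator flips its sign precisely when the two anticommute. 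I then propagate this sign forward by induction on $j$: assuming the physical state at time $j-1$ is stabilized by $(-1)^{\epsilon}\delta [P_{j-1}]_{j-1}$, the measurement $\calM_j$ of $M_j = \pm A_j$ is deterministic on this state (because $A_j$ commutes with $P_{j-1}$ — here one uses that $A_j$ commutes with $P_j$ and with all earlier measurement operators, or more directly \cref{thm:ancestryWellDefined} which gives $[A_j^\dagger P_j]_{j-1} \in \ISG(j-1)$), so it does not disturb the state and merely yields the outcome consistent with the sign already present; hence the state at time $j$ is stabilized by $(-1)^{\epsilon}\delta [P_j]_j$. At $j = t$ this gives $(-1)^\epsilon \delta S_t$ as claimed, with $\delta = (-1)^{\lambda(F_\tau,\, \pi_\tau\circ\ance_\tau^t(S_t))}$.

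Finally I would assemble the iteration. Enumerate the nontrivial slices of $E$ as times $a \le \tau_1 < \tau_2 < \cdots < \tau_r \le b$ and insert them one at a time. Each insertion of $[E_{\tau_k}]_{\tau_k}$ multiplies the current stabilizer sign of the eventual $S_t$ by $(-1)^{\lambda(E_{\tau_k},\, \pi_{\tau_k}\circ\ance_{\tau_k}^t(S_t))}$, by the single-slice claim applied with the accumulated sign $\epsilon$ playing the role of $\epsilon$; crucially, the ancestry $\ance_{\tau_k}^t(S_t)$ is computed from the \emph{error-free} dynamics and is the same regardless of which earlier errors were inserted, since ancestries depend only on the schedule and the (fixed) error-free outcomes, not on $E$. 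Multiplying the $r$ contributions and reducing mod $2$ in the exponent yields $\sigma = \sum_{j=a}^b \lambda(E_j, \pi_j\circ\ance_j^t(S_t)) \bmod 2$, completing the proof. The main obstacle, and the place that needs the most care, is the inductive sign-propagation in the single-slice claim: one must argue cleanly that every intermediate measurement $M_j$ appearing in the ancestry acts deterministically on the perturbed state and therefore only reads off — rather than alters — the already-displaced sign, which is where \cref{thm:ancestryWellDefined} and \cref{thm:StabilizerFromAncestry} do the real work.
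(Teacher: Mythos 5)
Your overall structure---factoring $E$ into one-time-slice pieces via the group action of \cref{thm:ErrorAction}, proving a single-slice sign formula, and then iterating and summing mod~$2$---matches the informal exposition the paper gives immediately before the proposition, so the route is essentially the same. There is, however, a concrete error in the justification of the sign-propagation step, which you yourself identify as the delicate point. You assert that ``the measurement $\calM_j$ of $M_j = \pm A_j$ is deterministic on this state'' and that it ``merely yields the outcome consistent with the sign already present.'' This is backwards: as the paper notes explicitly just after \cref{thm:StabilizerFromAncestry}, whenever $A_j \neq \one_j$ appears in the ancestry the measurement $M_j$ is \emph{non}deterministic (if it were deterministic, the recursion in \cref{defn:ancestry} would have taken the first case and set $A_j = \one_j$). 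The commutation fact you cite from \cref{thm:ancestryWellDefined}/\cref{thm:StabilizerFromAncestry}, namely that $A_j$ commutes with $P_{j-1}$, does not imply determinism; commuting with a single element of $\ISG(j-1)$ is far weaker than lying in $\barISG(j-1)$. Consequently the outcome $m_j$ is not ``read off from the sign already present'': it is random.

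The correct argument fills the gap differently. Because $M_j$ commutes with $P_{j-1}$, the projective channel $\calM_j$ preserves $\delta P_{j-1}$ as a stabilizer of the post-measurement state; and because the proposition is stated relative to a fixed history of measurement outcomes (the same outcome $m_j$ is used to define both $\ISG(j)$ and the error-perturbed chain), the post-measurement state is also stabilized by $m_j M_j$. Their product $[m_j M_j]_j\,\delta[P_{j-1}]_j = \delta P_j$ is therefore a stabilizer of the perturbed state at time~$j$, which is what the paper encodes in writing $P_j' = [m_j M_j]_j \cdots [\delta B_\tau]_j \in \ISG'(j)$. The case $A_j = \one_j$ is similar and easier. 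With this repair, the remaining pieces of your proof---the identification $\pi_\tau \circ \ance_\tau^t(S_t) = B_\tau$, the factorization of $E$, the key observation that each $\ance_{\tau_k}^t(S_t)$ is computed from the error-free schedule and hence independent of previously inserted slices, and the mod~$2$ accumulation---are all sound.
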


Hence, the measurement of an existing stabilizer is a {\bf detector}~\cite{Gidney2021,McEwen2023}.
Conversely, since we make one measurement at a time in this section
(even though many measurements can be implemented simultaneously),
any detector is a measurement at time step~$t$ of an element of~$\ISG(t-1)$.
\footnote{
    In general, a detector is a linear constraint on measurement outcomes at time~$t$
    given all the outcomes from the earlier time steps~\cite{Gidney2021,McEwen2023,Delfosse2023}.
}
We define the {\bf ancestry of a detector} at time~$t$ 
to be the ancestry of the stabilizer at time~$t-1$ that the detector measures.
\Cref{thm:HowDetectorIsTriggered} says that
we can test whether a given detector at time~$t$ is triggered 
by a spacetime error~$E$ 
from the commutation relation between $E$ and
\begin{equation}
    D = \Big( \pi_j \circ \ance^{t-1}_j (\pm [M_{t}]_{t-1}) \Big)_{j=0}^{t-1} \quad \in \calE \, . \label{eq:detectorAsError}
\end{equation}

\begin{definition}
    An {\bf undetectable} error is one that does not trigger any detector any time.
\end{definition}

The syndrome bit from a detector is the \emph{difference} 
between the inferred eigenvalue of a stabilizer
from its ancestry and the current measured value.
\footnote{
    In~\cite{Hastings2021} the calculation for syndrome bits for a given error
    was performed by looking at measurements that are ``flipped'' by an error
    and the criterion for a flipped measurement is whether the measurement operator
    anticommutes with a given error ignoring all the time coordinates.
    While this rule gave correct results,
    it is not entirely sound reasoning to say that 
    ``the measurement outcome is flipped because the error anticommutes with the measurement operator''
    especially if the measurement is nondeterministic regardless of the error.
    After all, any detector probes the flip of some element of~$\ISG(\tau)$.
}
In~\cite{Bombin2024,Fuente2024,Xu2025},
detectors are described by certain ``Pauli webs'' or ``Pauli flows''
that are contained in the interior of a ZX diagram,
which is the same as~$D$ in~\eqref{eq:detectorAsError}.
Our exposition emphasizes operational meaning of~$D$.

For detectors associated with persistent stabilizers
that are used for the syndrome bits of the honeycomb code~\cite{Hastings2021},
the differences of the measured eigenvalues between nearer time steps
are taken as a basis for the syndrome.
More generally, we can understand such a basis as follows.

\begin{proposition}\label{thm:anceIsLinear}
    For any $t \ge \tau \ge 0$,
    the map~$\ance_\tau^t : \ISG(t) \to \calE$ with all signs ignored
    is $\FF_2$-linear.
\end{proposition}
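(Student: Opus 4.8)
The plan is to prove $\FF_2$-linearity of $\ance_\tau^t$ by induction on $t - \tau$. The base case $t = \tau$ is immediate, since $\ance_\tau^\tau(S_\tau) = S_\tau$ and multiplication of Pauli operators (mod phases) is exactly $\FF_2$-addition. For the inductive step I would fix two stabilizers $S_t, S_t' \in \ISG(t)$ and compare $\ance_\tau^t(S_t S_t')$ with $\ance_\tau^t(S_t)\ance_\tau^t(S_t')$. The key structural fact I would isolate first: the recursion in \eqref{eq:ancestry} branches on whether $[S_t]_{t-1} \in \ISG(t-1)$, i.e. on whether $S_t$ lies in the subgroup $\mathcal{G} := [\ISG(t-1)]_t \cap \{M_t\text{-commutant}\}$ of $\ISG(t)$ of operators whose demotion to time $t-1$ is still a stabilizer, versus $S_t \in (m_t M_t)\cdot \mathcal{G}$. (From the analysis preceding \cref{defn:dynamicalCode}, when $M_t \notin \ISG(t-1)$ these are the only two cosets, and $\ISG(t) = \langle \pm m_t M_t, \mathcal{G}\rangle$; when $M_t \in \ISG(t-1)$ we have $\ISG(t) = \ISG(t-1)$ and only the first branch ever triggers.)

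With that in hand, the first branch defines an $\FF_2$-linear map on $\mathcal{G}$: there $\ance_\tau^t(S_t) = \one_t \cdot \ance_\tau^{t-1}([S_t]_{t-1})$, and the time-$t$ demotion map $S_t \mapsto [S_t]_{t-1}$ is a group isomorphism onto $[\ISG(t-1)]_t \cap \{M_t\text{-commutant}\} \subseteq \ISG(t-1)$, so composing with the inductively-linear $\ance_\tau^{t-1}$ and then multiplying by the fixed $\one_t$ keeps everything $\FF_2$-linear. For the case $M_t \notin \ISG(t-1)$, I would pick once and for all a coset representative, namely $m_t M_t$ itself, and write any $S_t \in \ISG(t)$ uniquely as either $g$ or $(m_t M_t) g$ with $g \in \mathcal{G}$. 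The second branch of \eqref{eq:ancestry} gives $\ance_\tau^t((m_t M_t)g) = (m_t M_t)\cdot \ance_\tau^{t-1}([(m_t M_t)^\dagger (m_t M_t) g]_{t-1}) = (m_t M_t)\cdot \ance_\tau^{t-1}([g]_{t-1})$, which is just $(m_t M_t)$ times the first-branch value on $g$ — so modulo phases, $\ance_\tau^t$ is "the linear map from the first branch, plus the constant vector $[m_t M_t]_t \oplus 0 \oplus \cdots$ whenever the $M_t$-coordinate of $S_t$ is nontrivial." That description is manifestly $\FF_2$-affine in $S_t$; but since $\ance_\tau^t(\one_t) = \one$ (the empty error), the affine part vanishes and the map is linear. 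Concretely: for $S_t, S_t'$ in possibly different cosets, the product $S_t S_t'$ lands in the coset labelled by the $\FF_2$-sum of their labels, and the "$m_t M_t$" prefactors multiply to $(m_t M_t)^2 = \one$ exactly when both are in the nontrivial coset, matching the parity of the label sum; the residual $g$-parts multiply inside $\mathcal{G}$, where linearity of the first branch applies.

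The main obstacle is bookkeeping the phases: $\ance$ is defined with genuine signs $m_t$, and $(m_t M_t)(m_t M_t) = M_t^2$ which need not equal $\one$ as an operator even though it does mod phases, and more subtly the two factors $\ance_\tau^t(S_t)$ and $\ance_\tau^t(S_t')$ need not commute at each time slice, so their product picks up slice-by-slice signs. Since the proposition only asserts linearity \emph{with all signs ignored}, I would from the outset pass to the quotient $\calE / (\text{phases})$ — equivalently work in the $\FF_2$-vector space described after the spacetime-error definition — so that every "$=$" in the induction is an equality of $\FF_2$-vectors and all of these phase ambiguities evaporate. The only genuinely non-formal input is the coset structure of $\ISG(t)$ over $\ISG(t-1)$, which is exactly what \cref{thm:ancestryWellDefined} and the pre-\cref{defn:dynamicalCode} discussion supply, so once that is cited the induction is short.
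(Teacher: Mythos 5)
Your proposal is correct and follows essentially the same route as the paper: induct on $t$, and in the inductive step use the coset decomposition of $\ISG(t)$ over the $M_t$-commutant subgroup $\mathcal{G}\subseteq[\ISG(t-1)]_t$ to reduce to the two-branch parity check you spell out at the end, which is exactly the paper's case analysis over whether $[A_t]_{t-1}$ and $[B_t]_{t-1}$ lie in $\ISG(t-1)$. One small remark: the ``affine'' framing in the middle is a bit off --- the map $(\epsilon,g)\mapsto L(g)+\epsilon\,[M_t]_t$ over $\FF_2$ is outright linear, not merely affine, since $\epsilon\mapsto\epsilon\,[M_t]_t$ is itself $\FF_2$-linear --- but the concrete coset/parity verification in your last sentence carries the argument and matches the paper's cases (ii) and (iii) exactly.
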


\begin{proof}
    Since we ignore all signs for Pauli operators, we will suppress signs in this proof.
    We use induction in~$t$.
    If $t = \tau$, the map is an inclusion map.
    Assume the claim for $\ance_\tau^{t-1}: \ISG(t-1) \to \calE$.
    Let $A_t, B_t \in \ISG(t)$ and $M_t$ the measurement operator at time~$t$.
    
\begin{itemize}
    \item[(i)] Suppose that $[A_t]_{t-1}, [B_t]_{t-1} \in \ISG(t-1)$.
    Then, it follows that $\ance_\tau^t(A_t) = \ance_\tau^{t-1}([A_t]_{t-1})$ and 
    $\ance_\tau^t(B_t) = \ance_\tau^{t-1}([B_t]_{t-1})$.
    By the induction hypothesis, their product is given by
    $\ance_\tau^{t-1}([A_t B_t]_{t-1})$,
    which equals $\ance_\tau^t(A_tB_t)$ since $[A_t B_t]_{t-1} \in \ISG(t-1)$.
    
    \item[(ii)] Suppose that $[A_t]_{t-1} \in \ISG(t-1)$ and $[B_t]_{t-1} \notin \ISG(t-1)$.
    Then, $[A_t B_t]_{t-1} \notin \ISG(t-1)$.
    The three relevant images of~$\ance_\tau$ are
    $\ance_\tau^t(A_t) = \ance_\tau^{t-1}([A_t]_{t-1})$,
    $\ance_\tau^t(B_t) = M_t \cdot \ance_\tau^{t-1}([M_t^\dagger B_t ]_{t-1})$ 

    $\ance_\tau^t(A_t B_t) = M_t \cdot \ance_\tau^{t-1}([M_t^\dagger A_t B_t ]_{t-1})$,
    which are consistent with linearity.
    
    \item[(iii)] Suppose that $[A_t]_{t-1} \notin \ISG(t-1)$ and $[B_t]_{t-1} \notin \ISG(t-1)$.
    Then, $\ISG(t) = \langle M_t \rangle \oplus W$ for some $W \subseteq [\ISG(t-1)]_t$.
    It follows that $[A_t B_t]_{t-1} \in \ISG(t-1)$.
    Now the three relevant images are
    $\ance_\tau^t(A_t) = M_t \cdot \ance_\tau^{t-1}([M_t^\dagger A_t ]_{t-1})$,
    $\ance_\tau^t(B_t) = M_t \cdot \ance_\tau^{t-1}([M_t^\dagger B_t ]_{t-1})$, and
    $\ance_\tau^{t}(A_t B_t) = \ance_\tau^{t-1}([A_t B_t]_{t-1})$,
    which are consistent with linearity.\footnote{This part might appear
    to use something special about~$\FF_2$, but does not.
    Over $p$-dimensinal qudits, one must take the commutation value~$c \in \ZZ_p$ into account 
    and consider $M^{\pm c} A$.}

\end{itemize}
\end{proof}

\Cref{thm:anceIsLinear}
says that 
$\ance_{\tau} : \bigoplus_{t \ge \tau} \ISG(t) \to \calE$ with signs ignored
is $\FF_2$-linear.
The kernel of this linear map reveals interdependence among detectors.
For example, the ancestry of a persistent stabilizer~$S$ 
($[S]_r \in \ISG(r)$ for all $r \ge \tau$)
is the stabilizer~$[S]_\tau$ itself.
Hence, two detectors measuring the same persistent stabilizer at different time steps
have the empty ancestor $\one \in \ISG(\tau)$ at~$\tau$.
In this case, the two measurements
must reveal the same eigenvalue in the absence of any errors.
A spacetime-local generating set of the kernel, if exists, 
is useful since the spacetime location of a syndrome bit is more directly 
associated with the spacetime location of a component of an error.
For known Floquet codes~\cite{Hastings2021,Haah2022,Davydova2022}
that are based on topological codes,
there exists such a spacetime-local generating set for the kernel.

\begin{remark}\label{rem:LogicalFailure}
    To probe logical failures, we bring an independent reference system~$R$
    that is at time~$0$ maximally entangled with a code system~$C$.
    The measurements are made on qubits of~$C$ only.
    Some entanglement will break down by measurements on~$C$,
    but if the dynamical code encodes $k$ logical qubits,
    there must be $k$ Bell pairs
    between~$C$ and~$R$.
    At any moment~$t$,
    the entire system~$CR$ is a pure Pauli stabilizer state,
    and every logical operator~$L = L^C$ of~$\ISG(t)$ 
    is a tensor factor
    \footnote{
    This observation was also stated in~\cite{Fuente2024}, where they note that a logical flow from an input stabilizer code to an output stabilizer code defines a logically entangled state between the two.
    } 
    of a stabilizer~$L^C U^R$ of~$CR$.
    By considering the ancestry of~$L^C U^R$,
    we generate an element~$D(L)^{CR} \in \calE^{CR}$ by~\eqref{eq:detectorAsError},
    and the commutation relation between an error and~$D(L)^{CR}$ 
    will tell us if the insertion of the error flips the eigenvalue of~$L^C U^R$.
    Since the error of interest is on~$C$ but not on~$R$,
    we may truncate~$D(L)^{CR}$ to have an element~$D(L)$ on the spacetime of~$C$,
    and use it as a probe for the failure of the logical operator~$L$.
\end{remark}

\begin{remark}\label{rem:LogicalFailureProbeExtensionQuestion}
    Note that it is not clear how to extend~$D(L)$ beyond the time~$t$
    that was used to construct~$D(L)$ in~\cref{rem:LogicalFailure}.
    At minimum, one has to assume that the logical operator~$L$ 
    is not going to be measured at a future time.
    In~\cite{Bombin2024,Fuente2024,Xu2025},
    this problem does not exist because there is always a future temporal boundary,
    which leads to a set of open legs in a corresponding ZX-diagram.
    In our case, we do not have any future boundary
    and a logical failure will be determined by
    finding an instantaneous logical operator at a future time step
    that is equivalent to an undetectable error.
\end{remark}

We record a simple observation that the ``CSS-ness'' of the measurement operators
implies the ``CSS-ness'' of the detectability.

\begin{proposition}\label{thm:CSSmeasurements}
    Suppose that every measurement operator of a dynamical code 
    is either an $X$-type Pauli operator or $Z$-type.
    Then, a spacetime error $E = E_X E_Z$ is undetectable
    if and only if both $E_X$, the $X$-part of~$E$,
    and $E_Z$, the $Z$-part of~$E$, are undetectable.
\end{proposition}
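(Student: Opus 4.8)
The \emph{if} direction is immediate. For each detector the parity $\sigma$ of \cref{thm:HowDetectorIsTriggered} is an $\FF_2$-linear function of the error, since $\lambda(PQ,\cdot) \equiv \lambda(P,\cdot) + \lambda(Q,\cdot) \bmod 2$; hence the undetectable errors form a subgroup of $\calE$, and if $E_X$ and $E_Z$ are undetectable then so is $E = E_X E_Z$. The content is the converse, which I plan to establish as follows. Working under the standing assumption of this section that exactly one measurement is made per time step, recall from the discussion around \eqref{eq:detectorAsError} that every detector is the measurement, at some time $t$, of an operator $M_t \in \ISG(t-1)$; by the CSS hypothesis $M_t$ is $X$-type or $Z$-type, and I call the detector an \emph{$X$-detector} or a \emph{$Z$-detector} accordingly. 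Such a detector is triggered by $E$ exactly when $\sum_j \lambda(E_j, D_j)$ is odd, where $D = \bigl(\pi_j \circ \ance^{t-1}_j(\pm [M_t]_{t-1})\bigr)_j \in \calE$ is the probe of \eqref{eq:detectorAsError}.

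The crux is the following claim: for an $X$-type stabilizer $S \in \ISG(t)$ and any $\tau \le t$, every time-component of $\ance^t_\tau(S)$ is an $X$-type Pauli operator (possibly the identity); symmetrically for $Z$-type $S$. I would prove this in two steps. First, each instantaneous stabilizer group is ``CSS'': $\ISG(t)$ is generated by its $X$-type and $Z$-type elements. This follows by induction on $t$ from the three measurement-update cases reviewed in Section~2 --- the relevant point being that an $X$-type operator automatically commutes with every $X$-type stabilizer, so only its commutation with the $Z$-type part of $\ISG(t)$ is ever in play (and symmetrically). Second, with this in hand, induct on $t$ (keeping $\tau$ fixed) along the recursion \eqref{eq:ancestry}. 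In the first case of \eqref{eq:ancestry} the appended component is $\one_t$ and $[S]_{t-1}$ is again an $X$-type stabilizer, so the inductive hypothesis applies. In the second case, $S$ is a new member of $\ISG(t)$ produced by measuring $M_t$; in particular $M_t \notin \ISG(t-1)$, and $S = \pm M_t T$ with $T \in \ISG(t-1)$, exactly as in the proof of \cref{thm:ancestryWellDefined}. Since $S$ is $X$-type, $M_t$ must itself be $X$-type: if $M_t$ were $Z$-type, then comparing $Z$-parts in $S = \pm M_t T$ (and using that $S$ has trivial $Z$-part) would identify $M_t$, up to inverse, with the $Z$-type part of $T$, which by CSS-ness lies in $\ISG(t-1)$, forcing $\pm M_t \in \ISG(t-1)$ and contradicting that $S$ is new. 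Hence $T$ is $X$-type as well, so the appended component $m_t M_t$ and the recursive call $\ance^{t-1}_\tau([\pm T]_{t-1})$ both stay $X$-type, completing the induction. I expect this second case to be the only subtle point: it is precisely where CSS-ness of the instantaneous codes is needed to keep a $Z$-type measurement operator out of the ancestry of an $X$-type stabilizer.

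With the claim available, the conclusion is a one-line computation per detector. For an $X$-detector, $M_t$ is $X$-type, so by the claim every component $D_j$ of its probe $D$ is $X$-type, and therefore $\lambda(E_j, D_j) = \lambda((E_X)_j, D_j) + \lambda((E_Z)_j, D_j) = \lambda((E_Z)_j, D_j)$, because the $X$-type operator $(E_X)_j$ commutes with the $X$-type operator $D_j$. Thus an $X$-detector is triggered by $E$ if and only if it is triggered by $E_Z$; dually, a $Z$-detector is triggered by $E$ if and only if it is triggered by $E_X$. Moreover a $Z$-detector is never triggered by $E_Z$ (its probe is entirely $Z$-type, so $\lambda((E_Z)_j, D_j) = 0$ for all $j$), and an $X$-detector is never triggered by $E_X$. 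Consequently, if $E$ is undetectable then $E_Z$ triggers no $X$-detector (since $E$ triggers none) and no $Z$-detector, so $E_Z$ is undetectable; the mirror argument shows $E_X$ is undetectable. This proves the converse, and hence the proposition.
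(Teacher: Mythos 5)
Your proof is correct and takes essentially the same route as the paper's: both hinge on showing that each detector probe $D$ in \eqref{eq:detectorAsError} is purely $X$-type or purely $Z$-type, with the CSS-ness of the instantaneous stabilizer groups as the key ingredient in the ancestry step. You argue by forward induction (and separately prove the ``if'' direction via linearity) where the paper argues by contradiction at the greatest time where the ancestry would break type, but the underlying mechanism is identical; your explicit inductive verification that each $\ISG(t)$ is CSS is a detail the paper leaves implicit.
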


\begin{proof}
    It suffices to show that the element~$D$ of~\eqref{eq:detectorAsError}
    is either $X$-type or $Z$-type.
    Let $M_t$ be a deterministic measurement and suppose it is $Z$-type.
    If $D$ had a nontrivial $X$-component, 
    there must be the greatest time step~$\tau < t$
    and some $Z$-type element~$S \in \ISG(\tau)$
    such that $M_\tau$ is $X$-type, 
    and $[S]_{\tau-1} \notin \ISG(\tau-1)$, 
    but $[M_\tau^\dagger S]_{\tau-1} \in \ISG(\tau-1)$.
    Since all measurements are either $X$- or $Z$-type,
    $\ISG(\tau-1)$ is a direct sum of its $X$-type and $Z$-type subgroups.
    The $Z$-part of~$[M_\tau^\dagger S]_{\tau-1}$ is $[S]_{\tau-1} \in \ISG(\tau-1)$,
    which contradicts the hypothesis that $\tau$ existed.
\end{proof}

\section{Initial and steady stages of dynamical codes}

We will later consider errors that happens during the steady stages of dynamical codes.
Here we give definitions for initial and steady stages of dynamical codes,
and remark on how long initialization step can take for Floquet codes.
Fu and Gottesman~\cite[\S 5]{Fu2024} have shown optimal bounds on the initialization time;
our exposition is perhaps easier to understand.
A simple result (\cref{thm:steadyISG}) in this section will be importantly used
in a later argument that every Floquet code is bounded-inference.

Every Pauli measurement does not decrease the rank of~$\ISG$.
Therefore, given a measurement schedule defined on $n$ qubits, 
there exists a limit
\begin{equation}
    n - k = \lim_{t \to \infty} \rank \ISG(t) \le n,
\end{equation}
from which we read off the number~$k$ of the encoded qubits in the dynamical code.

\begin{definition}
    A time step~$t$ is said to be in the {\bf steady} stage
    if $\rank \ISG(t) = \rank \ISG(t')$ for all~$t' > t$.
    Otherwise, $t$ is in the {\bf initial} stage.
    The {\bf initialization time} is the least~$t \ge 0$ in the steady stage.
    We will often speak of \emph{an error in the steady stage} 
    to mean that the error is supported on a time interval~$[t_0,t_1]$ 
    where $t_0$ is in the steady stage. 
\end{definition}

\begin{proposition}\label{thm:steady}
    If $t$ is in the steady stage and
    if a measurement operator~$M$ at time~$t+1$ is not in~$\barISG(t)$,
    then there exists $S \in \ISG(t)$ that anticommutes with~$M$.
\end{proposition}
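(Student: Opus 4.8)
The plan is to read off the conclusion from the rank bookkeeping recalled in the Settings section. The governing observation is that the steady-stage hypothesis forbids the rank of the instantaneous stabilizer group from ever increasing, and this is precisely what rules out the one scenario in which a non-stabilizer $M$ commutes with all of $\ISG(t)$.

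Concretely, since $t$ is in the steady stage, the definition gives $\rank \ISG(t+1) = \rank \ISG(t)$. Now classify the measurement of $M$ at time $t+1$ according to the possibilities recalled in the Settings section. Because $M \notin \barISG(t)$, neither $M$ nor $-M$ lies in $\ISG(t)$; in particular $M \notin \ISG(t)$, so the deterministic case $\ISG(t+1) = \ISG(t)$ does not occur. Suppose, for contradiction, that $M$ commutes with every element of $\ISG(t)$. Then we are in case (i): $M$ is a nontrivial logical operator of the code defined by $\ISG(t)$, the measurement is nondeterministic, and $\ISG(t+1) = \langle \pm M, \ISG(t) \rangle$ has rank $\rank \ISG(t) + 1$. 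This contradicts $\rank \ISG(t+1) = \rank \ISG(t)$. Hence $M$ fails to commute with some $S \in \ISG(t)$, and since any two Pauli operators either commute or anticommute, $S$ anticommutes with $M$, as claimed.

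There is essentially no obstacle here; the only subtlety worth flagging is why the hypothesis is stated as $M \notin \barISG(t)$ rather than merely $M \notin \ISG(t)$. If $-M \in \ISG(t)$, then $M$ commutes with all of $\ISG(t)$, yet the measurement of $M$ is deterministic (revealing outcome $-1$) and leaves the rank unchanged, so no anticommuting stabilizer need exist; passing to $\barISG$ excludes exactly this degenerate case. A fully rigorous write-up amounts to making this sign remark explicit and otherwise quoting the case analysis of the Settings section verbatim.
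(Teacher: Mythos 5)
Your proof is correct and follows essentially the same reasoning as the paper's one-line argument: if $M$ commuted with all of $\ISG(t)$, the rank of $\ISG(t+1)$ would exceed that of $\ISG(t)$, contradicting the steady-stage hypothesis. The additional remark on why $\barISG$ rather than $\ISG$ appears in the hypothesis is a useful clarification but not a departure in method.
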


\begin{proof}
    Otherwise, $\ISG(t+1) \supseteq \langle \pm M, \ISG(t) \rangle$ whose rank is higher.
\end{proof}

\begin{proposition}\label{thm:steadyISG}
    Let $P$ be the period and $T$ the initialization time of a Floquet code.
    Then, 
    $[\barISG(t)]_{t+P} \subseteq \barISG(t+P)$ for any~$t \ge 0$
    where the equality holds if and only if $t \ge T$.
\end{proposition}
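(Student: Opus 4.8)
The plan is to reduce the statement to the \emph{monotonicity} of a single Pauli measurement on sign-free stabilizer groups, and then exploit periodicity. For a Pauli operator $M$ and a stabilizer group $G$ on $n$ qubits, let $\Phi_M(G)$ denote the sign-free stabilizer group resulting from measuring $M$ in the maximally mixed state of $G$; by the discussion in the Settings section this is independent of the outcome once signs are discarded. Inspecting the three possibilities --- $M$ or $-M$ lies in $G$; $M$ commutes with all of $G$ but $\pm M\notin G$; $M$ anticommutes with some element of $G$ --- one sees that they all collapse to the single formula
\[
\Phi_M(G) = \langle\, M,\; C_G(M)\,\rangle,
\]
where $C_G(M)$ is the subgroup of elements of $G$ commuting with $M$ (in the first two cases $C_G(M)=G$). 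Since $G\subseteq H$ forces $C_G(M)\subseteq C_H(M)$, we get $\Phi_M(G)\subseteq\Phi_M(H)$ immediately. As the measurements within one time step commute, the sign-free one-step update is a composition of maps $\Phi_M$, hence the sign-free stabilizer group produced by measuring any finite schedule of time steps is a monotone function of the input group; write $\Psi$ for this channel associated to a fixed finite schedule.

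Next I would refactor $\barISG(t+P)$ using periodicity. Producing $\ISG(t+P)$ amounts to measuring $\mathbf M_1,\dots,\mathbf M_{t+P}$ from $\{\one\}$; split this as first measuring $\mathbf M_1,\dots,\mathbf M_P$ (yielding $\ISG(P)$) and then measuring $\mathbf M_{P+1},\dots,\mathbf M_{t+P}$, which by periodicity is the schedule $\mathbf M_1,\dots,\mathbf M_t$. Thus, letting $\Psi$ be the sign-free channel of $\mathbf M_1,\dots,\mathbf M_t$, the sign-free group underlying $\ISG(t+P)$ equals $\Psi$ applied to the sign-free group underlying $\ISG(P)$, while the sign-free group underlying $\ISG(t)$ equals $\Psi$ applied to $\{\one\}$. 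Since $\{\one\}\subseteq\ISG(P)$ as sign-free groups, monotonicity of $\Psi$ gives the sign-free inclusion of $\ISG(t)$ into $\ISG(t+P)$, which is exactly $[\barISG(t)]_{t+P}\subseteq\barISG(t+P)$.

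For the equality clause: since these are finite groups, the inclusion is an equality if and only if $\rank\ISG(t)=\rank\ISG(t+P)$. If $t\ge T$, then $t$ is in the steady stage and $\rank\ISG(t)=\rank\ISG(t+P)$ by definition, giving equality. Conversely, if $\rank\ISG(t)=\rank\ISG(t+P)$, the inclusion is an equality, so $\ISG(t)$ and $\ISG(t+P)$ carry the same sign-free stabilizer group after the time relabel; running one further period of measurements --- the same schedule $\mathbf M_{t+1},\dots,\mathbf M_{t+P}$ on both by periodicity, and the sign-free update depends only on the underlying Pauli group --- preserves this coincidence, so inductively $\rank\ISG(t+mP)=\rank\ISG(t)$ for all $m\ge 0$. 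Because $\rank\ISG$ is non-decreasing and the times $t+mP$ are cofinal, $\rank\ISG(t')=\rank\ISG(t)$ for every $t'\ge t$, i.e.\ $t$ is in the steady stage, i.e.\ $t\ge T$. This proves the ``if and only if''.

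The main obstacle is the first step: recognizing that the three branches of the measurement update are the single expression $\langle M,\,C_G(M)\rangle$, which makes monotonicity --- and then the entire proposition --- fall out with only bookkeeping. A secondary point that needs care is the last argument, where one must track that the sign-free instantaneous code depends only on the underlying Pauli group, not on signs or time labels, so that ``the same schedule on both'' genuinely transports the coincidence $\ISG(t)\cong\ISG(t+P)$ forward.
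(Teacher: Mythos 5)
Your proof is correct, and the inclusion is established by a genuinely different route than the paper's. The paper argues pointwise: for $S \in \ISG(t)$, \cref{thm:StabilizerFromAncestry} provides an ancestry decomposition of $S$ as a product of past measurement operators obeying step-by-step commutation constraints, and shifting the whole decomposition forward by $P$ steps and invoking periodicity then shows $[S]_{t+P} \in \barISG(t+P)$. You instead isolate a monotonicity lemma for the sign-free one-measurement update $G \mapsto \langle M,\, C_G(M)\rangle$ --- unifying the three cases of the update into one formula is the key step --- and obtain the inclusion by splitting the schedule for $\ISG(t+P)$ into a prefix producing $\ISG(P)$ and a periodic tail identical to $\mathbf M_1,\ldots,\mathbf M_t$, then feeding $\{\one\} \subseteq \ISG(P)$ through the same monotone channel. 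This bypasses the ancestry machinery entirely and yields a clean, reusable structural fact about sign-free measurement dynamics; the paper's route reuses apparatus it has already built for detectors and will need again. Both arguments are of comparable length. On the equality clause, the paper merely asserts that it ``follows by the definition of the initialization time,'' but the inductive propagation $\rank\ISG(t+mP)=\rank\ISG(t)$ for all $m\ge 0$, together with rank monotonicity and cofinality of $\{t+mP\}$ in $[t,\infty)$, that you spell out is genuinely needed to pass from $\rank\ISG(t)=\rank\ISG(t+P)$ to $t\ge T$; your fuller treatment fills in a step the paper leaves implicit.
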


\begin{proof}
    If we show that $\barISG(t) \subseteq \barISG(t+P)$ for any~$t \ge 0$,
    then the equality condition follows 
    by the definition of the initialization time.

    For the set inclusion, 
    we may assume that there is one measurement at a time without loss of generality.
    If $t = 0$, there is nothing to show.
    Let $S \in \ISG(t)$ with $t \ge 1$.
    \Cref{thm:StabilizerFromAncestry} gives a time sequence of operators~$P_1,\ldots,P_t = S$
    where $A_j^\dagger P_j = [P_{j-1}]_j$ commutes with the measurement operator~$M_j$ at time~$j$ 
    for all~$j \in [1,t]$.
    Due to the periodicity,
    we have that $[A_j^\dagger P_j]_{j+P}$ commutes with~$M_{j+P}$
    and hence $[P_t]_{t+P}$ is a member of~$\barISG(t+P)$.
    Therefore, $[S]_{t+P} \in \barISG(t+P)$.
\end{proof}

\begin{corollary}\label{thm:initializationBound}
    For a Floquet code on $n$ qubits with period $P$,
    the initialization time $T$ satisfies
    \begin{equation}
        T \le nP.
    \end{equation}
\end{corollary}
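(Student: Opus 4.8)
The plan is to iterate \cref{thm:steadyISG} to get a chain of inclusions of stabilizer groups sampled at spacing $P$, and then argue that such a chain cannot strictly grow forever once its rank saturates, bounding the number of strict growth steps by $n$. Concretely, by \cref{thm:steadyISG} we have $[\barISG(jP)]_{(j+1)P} \subseteq \barISG((j+1)P)$ for every integer $j \ge 0$, and (translating through the copies $\mathcal P_{jP}$) this yields a nondecreasing sequence of ranks $r_j := \rank \ISG(jP)$. Since every measurement is rank-nondecreasing, $\rank \ISG(t)$ is nondecreasing in $t$ as well, so it suffices to control the $j$-indexed subsequence.

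First I would observe that $r_0 = \rank \ISG(0) = 0$ and $r_j \le n$ for all $j$. The key step is that the sequence $(r_j)$ is strictly increasing until it stabilizes: if $r_{j+1} = r_j$, then \cref{thm:steadyISG}'s equality clause (or directly: equal rank with one group contained in the other, after identifying time slices) forces $[\barISG(jP)]_{(j+1)P} = \barISG((j+1)P)$, and periodicity then propagates this equality forward, so $r_{j'} = r_j$ for all $j' \ge j$ — i.e., $jP$ is already in the steady stage. Hence the first index $j^\star$ with $r_{j^\star} = r_{j^\star+1}$ satisfies $j^\star \le n$, because on $\{0,1,\dots,j^\star\}$ the values $r_0 < r_1 < \cdots < r_{j^\star}$ are distinct integers in $[0,n]$, which admits at most $n+1$ values, forcing $j^\star \le n$. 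Actually the bound $j^\star \le n$ follows since $r_0 = 0$ and each of the $j^\star$ strict increments adds at least $1$, so $r_{j^\star} \ge j^\star$, hence $j^\star \le n$.

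Then the initialization time $T$ is at most $j^\star P \le nP$: once $j^\star P$ is in the steady stage, $T \le j^\star P$ by definition of $T$ as the least time step in the steady stage. This completes the proof. The main obstacle — really the only subtlety — is making the propagation-of-equality step airtight: one must check that equality $[\barISG(jP)]_{(j+1)P} = \barISG((j+1)P)$ at one period-step genuinely implies it at all later period-steps. This is exactly the content of the ``if and only if'' in \cref{thm:steadyISG} combined with the definition of the initialization time, so no new work is needed; one just has to invoke it carefully rather than re-deriving it. A minor bookkeeping point is to keep the time-shift brackets $[\cdot]_{t}$ consistent when comparing $\ISG$ at different time slices, but this is purely notational.
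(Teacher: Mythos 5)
Your proof is correct and takes essentially the same approach as the paper: both invoke \cref{thm:steadyISG} to show the rank of $\ISG$ strictly increases across each period in the initial stage, then bound the number of such increases by $n$. The only stylistic difference is that you argue via the contrapositive (stalled rank implies steady stage) and explicitly track the period-sampled sequence $r_j$, while the paper states the direct implication more tersely; the ``propagation-of-equality'' subtlety you flag is already handled by the definition of $T$ as the least steady time together with the iff clause of \cref{thm:steadyISG}, so no extra work is needed.
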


\begin{proof}
    By \cref{thm:steadyISG}, if $t < T$ is in the initial stage,
    then $[\barISG(t)]_{t+P} \subsetneq \barISG(t+P)$,
    which implies $\rank \ISG(t) < \rank \ISG(t+P)$.
    Since $\rank \ISG(t) \le n$ for all $t \ge 0$,
    the rank can increase at most $n$ times.
    Therefore, the initialization time satisfies $T \le nP$.
\end{proof}

\begin{proposition}
    For each even number~$n$ 
    there exists a Floquet code of period~$4$ on $n$ qubits
    whose initialization time is~$T = 2n$.
\end{proposition}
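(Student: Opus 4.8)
The plan is to write down an explicit Floquet code and then follow its instantaneous stabilizer groups round by round. Place the $n$ qubits on a line $1,\dots,n$, write $X_j,Z_j$ for the single-qubit Paulis on qubit $j$, and take the period-$4$ schedule
\begin{align*}
  \mathbf M_1 &= \{X_{2i-1}X_{2i}\}_{i=1}^{n/2}, & \mathbf M_2 &= \{Z_{2i}\}_{i=1}^{n/2}, \\
  \mathbf M_3 &= \{X_{2i}X_{2i+1}\}_{i=1}^{n/2-1}, & \mathbf M_4 &= \{Z_{2i-1}\}_{i=1}^{n/2}.
\end{align*}
That is, one measures $X$ on the odd bonds, then $Z$ on the even sites, then $X$ on the even bonds, then $Z$ on the odd sites, and repeats. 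Each $\mathbf M_t$ is a commuting set (uniformly $X$-type in rounds $1,3$ and uniformly $Z$-type in rounds $2,4$), so this is a legitimate period-$4$ Floquet code in the sense of \cref{defn:dynamicalCode}; for $n=2$ the set $\mathbf M_3$ is empty, which is allowed.

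The core of the argument will be an induction computing $\ISG(t)$ from the standard update rule: a measured commuting set $\mathbf M_t$ replaces $\ISG(t-1)$ by $\langle\,\pm\mathbf M_t,\ \text{(the subgroup of }\ISG(t-1)\text{ commuting with every element of }\mathbf M_t)\,\rangle$. First I would check that $\rank\ISG(t)=n/2$ for $t\in\{1,2,3\}$ and that $\ISG(4)=\langle Z_1,Z_3,\dots,Z_{n-1},Z_n\rangle$ up to signs. The inductive claim is that for $1\le m\le n/2$,
\[
  \ISG(4m)=\big\langle\, Z_1,Z_3,\dots,Z_{n-1},\ \ Z_n,Z_{n-2},\dots,Z_{n-2(m-1)}\,\big\rangle
\]
up to signs --- all odd-site $Z$'s together with the $Z$'s on the $m$ rightmost even sites --- and that $\ISG(4m+1),\ISG(4m+2),\ISG(4m+3)$ each have rank $n/2+m$. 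Granting this, $\rank\ISG(t)$ equals $n/2$ on $[1,3]$ and $\min(n/2+m,\,n)$ on $[4m,4m+3]$ for $m\ge1$, hence $\rank\ISG(t)<\rank\ISG(t+4)$ for every $t<2n$ while $\rank\ISG(2n)=\rank\ISG(2n+4)=n$. Combined with the inclusion in \cref{thm:steadyISG}, this gives $[\barISG(t)]_{t+4}\subsetneq\barISG(t+4)$ for all $t<2n$ and $[\barISG(2n)]_{2n+4}=\barISG(2n+4)$, so the initialization time is exactly $T=2n$.

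The substance, and the main obstacle, will be the inductive step: pushing $\ISG(4m)$ through one period. I expect this to be governed by a ``domain wall'' that advances one even site per period, seeded at the right boundary qubit $n$ --- the unique even site not touched in round $3$, since there is no even bond to its right. Round $\mathbf M_1$ should expel every odd-site $Z$ and leave the $n/2$ fresh $X$-bonds together with the $m$ commuting pairs $Z_{2i-1}Z_{2i}$ on the already-pinned bonds; round $\mathbf M_2$ then expels the $X$-bonds, restores all even-site $Z$'s, and converts those pairs into bare odd-site $Z$'s near the right end; round $\mathbf M_3$ is the mirror image with the even bonds; and round $\mathbf M_4$ re-admits the odd-site $Z$'s while pinning exactly one further even site, namely $n-2m$, because its right neighbor has just been pinned whereas everything further left is still held down by an adjacent $X$-bond. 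Each of these four transitions should be a direct application of the update rule --- one only has to identify the subgroup commuting with the whole measured set and decide which measured operators lie outside $\barISG$ --- so the real work is to state the generators of $\ISG(4m+r)$ for all $r\in\{0,1,2,3\}$ precisely as a strengthened induction hypothesis and to handle the right-end boundary carefully; the full pattern is already visible in the cases $n=4$ and $n=6$.
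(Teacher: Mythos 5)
Your proposal is correct and takes essentially the same route as the paper: your schedule is just the $X\leftrightarrow Z$ (Hadamard-conjugated) version of the paper's own example, and your accounting --- the right boundary pinning one additional even site per period so that $\rank\ISG(4m)=n/2+m$ until rank~$n$ is reached at $t=2n$, with strictly smaller rank at every earlier step --- is exactly the computation the paper itself leaves at the level of ``continuing this accounting, one can verify.'' The intermediate groups you describe for each of the four rounds are the right ones, so completing the strengthened induction you outline is routine.
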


While this particular code is not useful for quantum error correction 
since there will not be any logical qubit,
this demonstrates that the initialization time can scale linearly with system size,
establishing that $T = \Theta(nP)$ in the worst case.
This bound was shown in~\cite[\S 5]{Fu2024}.

\begin{proof}
    Index the qubits by $j \in \{1,2,\ldots,n\}$ and define the measurement schedule:
    \begin{itemize}
        \item At $t = 1 \bmod 4$: measure $Z_j Z_{j+1}$ for $j \in \{1,3,5,\ldots,n-1\}$.
        \item At $t = 2 \bmod 4$: measure $X_j$ for $j \in \{2,4,6,\ldots,n\}$.
        \item At $t = 3 \bmod 4$: measure $Z_j Z_{j+1}$ for $j \in \{2,4,6,\ldots,n-2\}$.
        \item At $t = 4 \bmod 4$: measure $X_j$ for $j \in \{1,3,5,\ldots,n-1\}$.
    \end{itemize}
    After the first measurement at $t=1$, we have $\rank \ISG(1) = n/2$
    from the $ZZ$ measurements.
    The subsequent $X$ measurements at $t=2$ anticommute with these stabilizers.
    The $Z$ measurements at $t=3$ leave $X_{n-1}$ in the ISG, 
    maintaining $\rank\ISG(1) = \rank\ISG(2) = n/2$.
    However, at $t=4$, the ISG consists of $X_j$ on all odd $j$, plus $X_{n}$, 
    so $\rank\ISG(4) = (n/2) + 1$.
    Continuing this accounting, one can verify that in each subsequent period,
    $\rank \ISG(4m) = n/2 + m$ for $m \ge 1$,
    reaching rank~$n$ after $n/2$ periods.
\end{proof}

\section{Benign errors and equivalent errors}

In this section, we elaborate on a seemingly small subset of spacetime errors,
that we call benign,
whose action on the history of density matrices is trivial.
We will see that benign errors and only benign errors 
may define a meaningful equivalence relation on undetectable errors
in view of their logical action.

\subsection{Benign errors}\label{sec:BenignErrors}

\begin{definition}\label{def:BenignErrors}
    A {\bf sandwiching} error $E_{t+1}E_t$ is one that is 
    supported on two consecutive time steps~$t$ and~$t+1$
    such that~$[E_{t+1}]_t  = E_{t}^\dag$ and~$E_{t+1}$ commutes with every measurement operator at time~$t+1$.
    A {\bf vacuous} error is one that equals the measurement operator at a time~$t$. 
    A {\bf benign} error is any finite product of vacuous and sandwiching errors.
    We will refer to either a vacuous or a sandwiching error as a \textbf{benign generator}.
\end{definition}

\noindent
Note that 
while a vacuous error $E_t$ could also be interpreted to be a measurement operator~$[M_{t+1}]_t$, 
such an error is the product of benign generators:
\begin{equation}
    [M_{t+1}]_t = ([M_{t+1}]_t [M_{t+1}^\dagger]_{t+1}) M_{t+1}.
\end{equation}

The action by a vacuous error is the identity 
since it is a multiplication by an operator on its eigenstate
and density matrices are invariant under scalar conjugations.
The action by a sandwiching error is not necessarily the identity action 
on the chain of all density matrices (\cref{rem:excitingBenignError}),
but the second of the sandwiching error cancels the first,
and no measurement outcome is affected.
By~\cref{thm:ErrorAction}, we have
\begin{proposition}\label{thm:BenignIsUndetectable}
    Every benign error is undetectable.
\end{proposition}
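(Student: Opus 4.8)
The plan is to reduce the claim to the already-established group action of $\calE$ on chains of outcome-recorded density matrices (\cref{thm:ErrorAction}) together with the observation that detectability is decided entirely by commutation relations with the spacetime elements $D$ of~\eqref{eq:detectorAsError}. Since a benign error is by definition a finite product of benign generators, and since $\FF_2$-linearity of the detector-triggering map (implicit in \cref{thm:HowDetectorIsTriggered}: the sign $\sigma$ is a sum of $\lambda$-contributions, which is additive in $E$) means the set of undetectable errors is a subgroup of~$\calE$, it suffices to show that each benign generator --- a vacuous error and a sandwiching error --- is individually undetectable.

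First I would handle the vacuous error $[M_{t+1}]_t$, where $M_{t+1}$ is a measurement operator at time~$t+1$. Here I would argue directly from \cref{thm:ErrorAction}: at the moment this error is inserted (immediately after $\mathbf{M}_t$, hence immediately before $\mathbf M_{t+1}$), the physical state is stabilized by $\ISG(t)$, and $M_{t+1}$ will be measured next. One shows $[M_{t+1}]_t$ conjugates every $P'_j$ in the chain back to itself up to a phase that is invisible to density matrices: the only measurement that could be affected is $M_{t+1}$ itself, but conjugating the state by $M_{t+1}$ does not change the state's eigenvalue structure with respect to the subsequent measurement of $M_{t+1}$ (the state before and after differ by a scalar conjugation, hence are equal as density matrices). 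So the entire chain of outcome-recorded density matrices is unchanged; in particular no detector --- being a linear constraint on measurement outcomes --- changes its verdict.

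Next, the sandwiching error $E_{t+1} E_t$ with $[E_{t+1}]_t = E_t^\dagger$ and $E_{t+1}$ commuting with every operator in $\mathbf M_{t+1}$. By \cref{thm:ErrorAction} it acts as the group element $E_{t+1}E_t \in \calE$; I would trace the modified chain. Between times $t$ and $t+1$, the density matrix is conjugated by $E_t$; then the measurements $\mathbf M_{t+1}$ are applied; then the density matrix is conjugated by $E_{t+1}$. Because $E_{t+1}$ commutes with every $M \in \mathbf M_{t+1}$, conjugation by $E_{t+1}$ commutes with the measurement channel $\calM_{t+1}$, so I can push $E_{t+1}$ through $\calM_{t+1}$: the chain at time $t+1$ becomes $\calM_{t+1}(E_{t+1}E_t\,\rho_t\,E_t^\dagger E_{t+1}^\dagger)$ --- wait, more carefully, $E_{t+1}\calM_{t+1}(E_t \rho_t E_t^\dagger)E_{t+1}^\dagger = \calM_{t+1}(E_{t+1}E_t \rho_t E_t^\dagger E_{t+1}^\dagger)$, and since $[E_{t+1}]_t = E_t^\dagger$ as Pauli operators (up to phase), $E_{t+1} E_t$ is a scalar, so this equals $\calM_{t+1}(\rho_t)=\rho_{t+1}$. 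Thus from time $t+1$ onward the chain coincides with the error-free chain, and at time $t$ the density matrix differs only by a unitary conjugation that never records an outcome. Hence no measurement outcome anywhere in the chain is altered, so the error is undetectable.

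The main obstacle --- really the only point requiring care --- is the sandwiching step: one must be precise that "$E_{t+1}$ commutes with every measurement operator at time $t+1$" lets conjugation by $E_{t+1}$ pass through the full measurement channel $\calM_{t+1}$ (including the case of several simultaneous commuting measurements at time $t+1$, which is the general setting rather than the one-measurement-at-a-time simplification), and that the leftover factor $E_{t+1}[E_t]_{t+1} = E_{t+1}(E_{t+1})^{-1}$ is genuinely a scalar so that its conjugation is trivial on density matrices. Everything else is bookkeeping with \cref{thm:ErrorAction}; once the two generators are handled, closure of the undetectable-error subgroup under products finishes the proof.
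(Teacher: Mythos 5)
Your proposal follows the same overall route as the paper's proof: reduce to benign generators using the group structure, then argue each generator separately by tracking the chain of outcome-recorded density matrices via \cref{thm:ErrorAction}. The sandwiching step is handled correctly and carefully; commuting the conjugation by $E_{t+1}$ through $\calM_{t+1}$ (which uses that $E_{t+1}$ commutes with \emph{all} operators in $\mathbf M_{t+1}$, not just one) and then observing $E_{t+1}E_t$ is a scalar is exactly right.

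There is one wrinkle worth flagging in the vacuous-error step. The paper's definition and proof treat the vacuous error as $[M_t]_t$ — the operator \emph{just measured}, inserted immediately after the measurement — for which the argument is a one-liner: the post-measurement state is an eigenstate of $M_t$, so conjugation is scalar and hence invisible to density matrices. You instead chose $[M_{t+1}]_t$, which the paper's note after \cref{def:BenignErrors} confirms is benign (as a product of the other generators), so your choice is not wrong. But your parenthetical justification, ``(the state before and after differ by a scalar conjugation, hence are equal as density matrices),'' is literally false for $[M_{t+1}]_t$: the state $\rho_t$ (after $\mathbf M_t$, before $\mathbf M_{t+1}$) is generally not an eigenstate of $M_{t+1}$, so $M_{t+1}\rho_t M_{t+1}^\dagger \neq \rho_t$. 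What is true — and what the rest of your sentence correctly says — is that conjugation by $M_{t+1}$ commutes with $\calM_{t+1}$ and becomes scalar only once $M_{t+1}$ has been measured, so the chain is unaffected from $t+1$ onward. In other words, your vacuous case is really a special case of your sandwiching-case argument; the parenthetical should be deleted or replaced by that reasoning. Alternatively, handle the primitive generator $[M_t]_t$ directly and avoid the issue entirely.
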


\begin{remark}\label{rem:DP2023}
    At the end of~\S5 of~\cite{Delfosse2023},
    essentially the same set of errors as our benign errors is mentioned.
    The authors of~\cite{Delfosse2023} assert
    that these exhaust all undetectable errors ``with trivial effect.''
    Note that the set of all benign errors supported on a time interval~$I$
    can be properly larger than
    the set generated by benign generators on~$I$.
    (See also~\cref{thm:BenignInAWindow}.)
    For example, an instantaneous stabilizer at time~$t$ may have deep ancestry,
    in which case the stabilizer cannot be written as a product of benign generators
    supported on that one time slice~$t$;
    a benign error that is a product of benign generators supported in the future
    may not commute with some elements of~$\ISG(t)$,
    in which case it cannot be a product of benign generators supported in the past and present.
\end{remark}

\begin{remark}\label{rem:excitingBenignError}
    A benign error can make the physical state at some point in time
    different from that predicted by measurement outcomes.
    For example, consider a period-two measurement dynamics on one qubit
    where we measure $X$ at time~1 and then~$Z$ at time~2.
    The insertion of~$Z_2$ immediately after the $Z$ measurement
    is benign (where the subscript is the time), but a sandwiching error $Z_1 Z_2$ 
    makes a benign error $Z_1$ where the underlying state is an eigenstate of~$X$.
\end{remark}

Note that the ancestry of a stabilizer is always a finite product of vacuous errors.
When we recursively find ancestors of a stabilizer,
the placement of $M_t^\dagger S_t $ to the preceding time step 
($[M_t^\dagger S_t]_{t-1}$)
is the multiplication by a sandwiching error.
The commutation requirement is fulfilled by~\cref{thm:StabilizerFromAncestry}.
Hence, an ancestry~$\ance(S_t)$ can always be multiplied 
by a finite product of sandwiching errors to result in~$S_t$.
Explicitly, 
if $\ance_\tau^t(S_t) = \prod_{i=\tau}^t M_i$, 
then we have a telescopic product 
\begin{equation}
    S_t = \ance_\tau^t(S_t) \cdot 
        \prod_{j=\tau+1}^{t} 
        \left(
            \left(\prod_{i = \tau}^j [M_i]_{j} \right)
            \left(\prod_{i = \tau}^j [M_i]_{j-1}^\dagger \right)
        \right) \, .
\end{equation}
As an example, if $\ance_\tau^t(S_t) = M_{t-2} M_{t-1} M_t$, then 
\begin{align}
    S_t 
    &=
    M_{t} M_{t-1} M_{t-2} \cdot  
    \left( \left[M_{t-2}\right]_{t-1} M_{t-2}^\dagger \right) 
    \left( \left[M_{t-1} M_{t-2} \right]_{t} \left[M_{t-1} M_{t-2} \right]_{t-1}^\dagger \right)
    \, .
\end{align}
This implies

\begin{proposition}\label{thm:ISGisBenign}
    For any~$t \ge 0$, every element of~$\ISG(t)$ is benign.
    In fact, it is a product of vacuous errors and sandwiching errors,
    each supported on~$[0,t]$.
    Conversely, if a product~$E$ of benign generators, each supported on~$[0,t]$,
    is supported on one time step~$t$, then $E \in \barISG(t)$.
\end{proposition}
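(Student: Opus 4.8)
The plan is to prove the two directions of \cref{thm:ISGisBenign} separately, using the telescopic identity displayed just before the statement for the forward direction and an induction on $t$ for the converse.

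\medskip

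\noindent\textbf{Forward direction.} Let $S_t \in \ISG(t)$. First I would reduce to the one-measurement-at-a-time setting (as is done throughout this section) so that the ancestry machinery of \cref{defn:ancestry} applies; since simultaneous commuting measurements can be performed sequentially without changing any $\ISG$, this is harmless. By \cref{thm:StabilizerFromAncestry}, the ancestry $\ance_0^t(S_t) = A_t A_{t-1}\cdots A_1 B_0$ with $B_0 \in \ISG(0) = \{\one\}$, so $\ance_0^t(S_t) = A_t\cdots A_1$ is a product of terms $A_j$ that are each either $\one_j$ or the (signed) measurement operator $m_j M_j$ inserted at time $j$ — i.e.\ a product of vacuous errors, each supported on the single time slice $\{j\} \subseteq [0,t]$. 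It remains to account for the difference between $\ance_0^t(S_t)$ and $S_t$ itself. This is exactly the telescopic product displayed before the proposition: $S_t = \ance_0^t(S_t)\cdot\prod_{j=1}^t\big(\prod_{i\le j}[M_i]_j\big)\big(\prod_{i\le j}[M_i]_{j-1}^\dagger\big)$, where I abbreviate $[A_i]_j$ by $[M_i]_j$ (the identity factors drop out). Each bracketed factor $\big(\prod_{i\le j}[M_i]_j\big)\big(\prod_{i\le j}[M_i]_{j-1}^\dagger\big)$ is a candidate sandwiching error supported on the two consecutive slices $\{j-1,j\}$: its time-$j$ component is the time-$(j-1)$ adjoint as required. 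I must verify the commutation condition in \cref{def:BenignErrors}: the time-$(j)$ piece $\prod_{i\le j}[M_i]_j$ must commute with every measurement operator at time $j$, equivalently with $M_j$. But $\prod_{i\le j}[A_i]_j = [P_j]_j$ in the notation of \cref{thm:StabilizerFromAncestry} (after restoring identity factors), and that corollary states precisely that $A_j$ commutes with $P_j$ and that $[A_j^\dagger P_j]_{j-1} \in \ISG(j-1)$; when $A_j = \one$ the first case of \eqref{eq:ancestry} gives $[P_j]_{j-1} = [P_{j-1}]_{j-1} \in \ISG(j-1)$, which commutes with $M_j$ anyway — in either case $P_j$ commutes with $M_j$. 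Hence each factor is a genuine sandwiching error supported on $[0,t]$, and $S_t$ is a product of vacuous and sandwiching errors each supported on $[0,t]$. (Signs are irrelevant since benign generators are defined up to phase, or one absorbs the sign into the vacuous factor $m_jM_j$.)

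\medskip

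\noindent\textbf{Converse direction.} Suppose $E$ is a product of benign generators, each supported on $[0,t]$, and $E$ itself is supported on the single slice $\{t\}$; I want $E_t \in \barISG(t)$. I would induct on $t$. For $t=0$ the only available benign generators supported on $[0,0]$ are vacuous errors at time $0$, i.e.\ (multiples of) the measurement operators $M_{0,i}$; but by our convention $\ISG(0)=\{\one\}$ and in fact the first measurements occur at time $1$, so actually there is nothing at time $0$ and $E_0 = \one \in \barISG(0)$ — or, if one allows a formal time-$0$ measurement, the statement is immediate. For the inductive step, write $E = G_1 G_2 \cdots G_r$ with each $G_a$ a benign generator supported in $[0,t]$. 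Group the generators by their latest occupied time slice. Any generator whose support touches slice $t$ is either a vacuous error $[M_{t,i}]_t$ (contributing a factor in $\barISG(t)$ directly, since $M_{t,i}\in\ISG(t)$) or a sandwiching error $[O]_{t}[O^\dagger]_{t-1}$ with $O$ commuting with every measurement at time $t$... wait, a sandwiching error on slices $\{t-1,t\}$ has its \emph{later} component at $t$; by \cref{def:BenignErrors} that later component $E_{t}$ equals $(E_{t})$ and $[E_t]_{t-1}$ is the earlier one, and the commutation is with measurements at time $t$. The key structural fact I need: after multiplying out, the requirement that $E$ vanish on slice $t-1$ forces the time-$(t-1)$ components of all generators touching slices $\le t$ to cancel, so the product of those components at time $t-1$ is $\one$; meanwhile the product of the time-$t$ components that come from sandwiching errors is exactly $[\;(\text{that time-}(t-1)\text{ product})\;]_t$ up to the generators genuinely living only on slice $t$. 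This is where the real work is. The cleanest formulation: let $F$ be the product of all generators supported within $[0,t-1]$ and the time-$(t-1)$ faces of the sandwiching generators straddling $\{t-1,t\}$; then $F$ is a product of benign generators supported in $[0,t-1]$, and $E_{t}$ is $[F_{t-1}]_t$ times a product of the vacuous slice-$t$ generators. Since $E$ is supported on $\{t\}$ only, the net time-$(t-1)$ component of $E$ is trivial; but that net component is $F_{t-1}$ times possibly other contributions on slice $t-1$ — carefully, \emph{only} sandwiching errors straddling $\{t-1,t\}$ and generators inside $[0,t-1]$ touch slice $t-1$, and the former's slice-$t$ faces are determined by their slice-$(t-1)$ faces, so collecting everything, $F$ itself is supported on $[0,t-1]$ and its restriction $F_{t-1}$ to slice $t-1$ equals (the relevant part of) $E_t$ transported back. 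Applying the induction hypothesis to the part of $F$ that is supported purely on slice $t-1$ — after recursively pushing its deeper structure forward — gives that it lies in $\barISG(t-1)$. Then $[\barISG(t-1)]_t$ together with the slice-$t$ vacuous generators $M_{t,i}$ generate $\barISG(t)$ provided each such $[S]_t$ with $S \in \ISG(t-1)$ actually lies in $\barISG(t)$: this is exactly the $P=1$-type content, and more robustly it follows because the straddling sandwiching condition forces the transported operator to commute with every time-$t$ measurement, so by the stabilizer-update rules recalled in the Settings, $[S]_t \in \barISG(t)$. Hence $E_t \in \barISG(t)$.

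\medskip

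\noindent\textbf{Main obstacle.} The forward direction is essentially bookkeeping once \cref{thm:StabilizerFromAncestry} is invoked — the only nontrivial check is the commutation condition on the sandwiching factors, and that corollary hands it to us. The real difficulty is the converse: one must show that an arbitrary \emph{product} of benign generators that happens to collapse onto a single time slice must land in $\barISG$, and the subtlety (flagged in \cref{rem:DP2023}) is that the product can involve deep cancellations across many slices, so a naive slice-by-slice argument does not immediately close. The induction works, but organizing the regrouping of generators — isolating "what lives on slice $t-1$ after pushing the future faces of straddling sandwiches back" and checking it is again a legitimate product of benign generators supported in $[0,t-1]$ — is the delicate step, and it is where one must be careful that the commutation conditions defining sandwiching errors are preserved under this regrouping (they are, precisely because a sandwich's slice-$(t-1)$ face inherits the commutation-with-$M_t$ property, which is what makes $[\,\cdot\,]_t$ of an element of $\ISG(t-1)$ land in $\barISG(t)$).
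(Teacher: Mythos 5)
Your forward direction is essentially the paper's own argument: the paper proves it exactly by the telescopic product displayed just before the proposition, with the commutation condition on each sandwiching factor supplied by \cref{thm:StabilizerFromAncestry}, and you reproduce that argument faithfully (including the small case split on whether $A_j = \one$). The paper gives no separate proof of the converse — it is asserted as part of the proposition — so your induction is genuinely supplementing the text, and the overall strategy is right: factor $E = UWV$ where $U$ collects the generators supported entirely in $[0,t-1]$, $W$ the sandwiches straddling $\{t-1,t\}$, and $V$ the vacuous errors at slice $t$; observe $E_j=\one$ for $j<t-1$ forces $U$ to be concentrated on slice $t-1$; apply the induction hypothesis to $U$; and use the commutation clause in \cref{def:BenignErrors} together with the stabilizer-update rule to land $W_t = [U_{t-1}]_t$ in $\barISG(t)$.

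However, your "cleanest formulation" introduces a genuine gap. You set $F$ to be the product of the generators supported within $[0,t-1]$ \emph{and} the time-$(t-1)$ faces of the sandwiches straddling $\{t-1,t\}$, and then assert $F$ is a product of benign generators supported in $[0,t-1]$. It is not: an isolated time-$(t-1)$ face of a straddling sandwich is a single-slice Pauli operator that is neither a vacuous error nor a sandwiching error on $[0,t-1]$, so the induction hypothesis cannot be applied to this $F$. The definition is also internally inconsistent with what you say next — with the straddling faces included, $F_{t-1}=E_{t-1}=\one$, contradicting your claim that $F_{t-1}$ equals the transported part of $E_t$. The correct choice is simply $F=U$: it \emph{is} a product of benign generators each supported on $[0,t-1]$; the constraint on slices $j<t-1$ shows it is supported only on $\{t-1\}$ (so the phrase "after recursively pushing its deeper structure forward" is unnecessary — nothing remains to push); the IH gives $U_{t-1}\in\barISG(t-1)$; the cancellation on slice $t-1$ gives $W_{t-1}=U_{t-1}^\dagger$, hence $W_t = [U_{t-1}]_t$; and since $W_t$ commutes with every measurement at time $t$ by the sandwiching condition, $W_t\in\barISG(t)$, so $E_t=W_tV_t\in\barISG(t)$. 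With that correction the argument closes.
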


\begin{lemma}\label{thm:BenignLogicalIsStabilizer}
    Any benign error~$(\ldots,\one_{\tau+1}, E_\tau, \one_{\tau-1}, \ldots)$ 
    supported on one time step~$\tau$ in the steady stage
    is an element of~$\barISG(\tau)$, 
    or else, $E_\tau$ anticommutes with some element of~$\ISG(\tau)$.
\end{lemma}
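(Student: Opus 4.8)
The plan is to analyze a benign error $E = (\ldots, \one_{\tau+1}, E_\tau, \one_{\tau-1}, \ldots)$ that happens to be supported on the single time step $\tau$, and show that the only way a product of benign generators can ``collapse'' onto one time slice in the steady stage is for that slice to land inside $\barISG(\tau)$, unless it anticommutes with some stabilizer there. Write $E = G_1 G_2 \cdots G_\ell$ as a product of benign generators (vacuous errors $M_{s}$ inserted at time $s-1$, or equivalently $M_s$ at time $s$, and sandwiching errors $[O]_{s+1}[O^\dagger]_s$ with $O$ commuting with every measurement at time $s+1$). The key structural fact is that a vacuous error is already a special case of a sandwiching error product as noted after \cref{def:BenignErrors}, so up to regrouping we may treat $E$ as a product of sandwiching errors together with genuine single-slice measurement insertions; but a cleaner route is to argue directly about which time slices can appear.

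First I would dispose of the case where $E_\tau$ anticommutes with some $S \in \ISG(\tau)$; this is the ``or else'' branch, so assume from now on that $E_\tau$ commutes with every element of $\ISG(\tau)$, i.e. $E_\tau$ is either in $\barISG(\tau)$ or is a nontrivial logical operator of $\ISG(\tau)$. The goal is to rule out the logical case. Here I would invoke \cref{thm:BenignIsUndetectable}: $E$ is undetectable. Then I would use \cref{thm:HowDetectorIsTriggered} together with the detector-as-error description~\eqref{eq:detectorAsError}: for every future time $t \ge \tau$ and every $S_t \in \ISG(t)$, the single-slice error $E_\tau$ must commute with $\pi_\tau \circ \ance_\tau^t(S_t)$, since that is the only nonzero component of $E$ in the relevant interval. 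By \cref{thm:anceIsLinear} the map $S_t \mapsto \pi_\tau \circ \ance_\tau^t(S_t)$ is $\FF_2$-linear, so $E_\tau$ commutes with the entire $\FF_2$-span of these time-$\tau$ ancestor components. The crux is then to show that this span, ranging over all $t \ge \tau$ and all $S_t \in \ISG(t)$, generates enough of the Pauli group modulo $\ISG(\tau)$ that the centralizer condition forces $E_\tau \in \barISG(\tau)$.

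To pin down that span I would use \cref{thm:steadyISG} and the steady-stage hypothesis: because $\tau$ is in the steady stage, for the Floquet period $P$ we have $[\barISG(\tau)]_{\tau+P} = \barISG(\tau+P)$, and more generally the rank is constant, so every logical operator of $\ISG(\tau)$ survives (possibly dressed by stabilizers) as a logical operator at later times, and in particular any logical $L$ of $\ISG(\tau)$ can be realized as $[L']_{\tau}$ for some $S_t$ at a later time with $\pi_\tau \circ \ance_\tau^t(S_t)$ having time-$\tau$ component $L$ modulo $\ISG(\tau)$. I expect the main obstacle to be exactly this step: showing that the ancestor components at time $\tau$ exhaust (modulo $\barISG(\tau)$) a set of Pauli operators whose joint centralizer is just $\barISG(\tau)$. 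The cleanest way I can see is to note that since $\tau$ is in the steady stage, the ISG rank is already maximal, so $\ISG(\tau)$ has a full complement of $2k$ independent logical operators; each such logical operator $L$, when measured or inherited at an appropriate future time, appears as $S_t$ with $\ance_\tau^t(S_t)$ having time-$\tau$ component equal to $L$ times a stabilizer (using \cref{thm:StabilizerFromAncestry} to trace the ancestry backward through the periodic schedule). Commuting with all of these plus all of $\ISG(\tau)$ leaves only $\barISG(\tau)$ by the nondegeneracy of the symplectic form on the Pauli group modulo a stabilizer group. Hence $E_\tau \in \barISG(\tau)$, completing the proof. If tracing every logical operator to a concrete future ancestry proves awkward, the fallback is to appeal to \cref{rem:LogicalFailure}: attach a reference system, so that each logical $L^C$ extends to a genuine stabilizer $L^C U^R$ of $CR$, whose ancestry is well-defined and whose time-$\tau$ component's $C$-part is $L$ modulo $\ISG(\tau)$; undetectability of $E$ (which lives only on $C$) against this enlarged family then yields the same conclusion.
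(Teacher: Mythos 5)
Your proof has a genuine gap, and in fact the primary route cannot work as stated. The quantity you want to exploit, $\pi_\tau \circ \ance_\tau^t(S_t)$, always lies in $\ISG(\tau)$: \cref{thm:StabilizerFromAncestry} explicitly says the time-$\tau$ component of the ancestry is $B_\tau \in \ISG(\tau)$. So commuting $E_\tau$ with the $\FF_2$-span of these operators yields nothing beyond the already-assumed commutation with $\ISG(\tau)$, and you cannot rule out that $E_\tau$ is a nontrivial logical operator this way. Moreover, your step 3 over-reaches even on its own terms: undetectability gives you commutation only with ancestries of stabilizers that are actually remeasured as detectors, not with the ancestry of every $S_t \in \ISG(t)$.

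The fallback via \cref{rem:LogicalFailure} has a different but equally fatal gap. You write that ``undetectability of $E$ (which lives only on $C$) against this enlarged family then yields the same conclusion,'' but undetectability on $C$ does \emph{not} imply that $E$ fails to flip the extended probes $D(L)^{CR}$. Flipping $D(L)^{CR}$ while triggering no real detector is precisely what an undetectable logical failure is; the extended probes are logical-failure witnesses, not detectors. To close the gap you would have to argue from benign-ness, not undetectability — e.g.\ that a benign error leaves the physical state of $CR$ unchanged after the support of its benign generators, so that the eigenvalue of $L^C U^R$ is preserved. That step is the real content of the lemma and you never establish it. The paper's argument is structurally different and avoids the issue entirely: it writes $E_\tau$ as a product of benign generators from $[\tau, T]$, peels off the unique sandwiching generator $W_{\tau+1,\tau}$ with support at $\tau$, and inducts on $T - \tau$, using \cref{thm:steady} (the steady-stage rank condition) at the key step to conclude $E_\tau$ lands in $[\ISG(\tau)] \cap [\ISG(\tau+1)]$. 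That induction genuinely uses the benign-generator structure, which your commutation-theoretic approach does not.
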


\noindent
This is roughly the converse of~\cref{thm:ISGisBenign}.
This is however false without the assumption of~$\tau$ being in the steady stage.
Suppose that a measurement of an operator~$M_{\tau+1}$ at time~$\tau+1$ 
makes~$\rank \ISG(\tau+1) > \rank \ISG(\tau)$.
This means that $M_{\tau+1}$ commutes with all elements in~$\ISG(\tau)$ but lies outside~$\ISG(\tau)$.
The vacuous error~$M_{\tau+1}$ at~$\tau+1$ can be commuted back to time step~$\tau$,
which is a multiplication by one vacuous error and one sandwiching error,
and becomes a nontrivial logical operator of~$\ISG(\tau)$.
Nonetheless, we will see that the residual impact of a benign error 
always disappears as the dynamics evolves.

\begin{proof}
    We may serialize all measurements~$\{M_t\}$.

    Suppose that $E_\tau$ commutes with every elements of~$\ISG(\tau)$,
    as otherwise the claim is trivially true.
    The error~$E_\tau$ is a product of vacuous and sandwiching errors from the past, present, and future.
    Those from the past and the present can only generate an element of~$\ISG(\tau)$ by~\cref{thm:ISGisBenign}, 
    so we may assume that $E_\tau$ is a product of vacuous errors $\{V_t \in \calE \,:\, \tau < t \le T \}$ 
    and sandwiching errors $\{W_{t,t-1} \in \calE \,:\, \tau < t \le T\}$ from the future. 
    Put $\calB = \{V_t, W_{t,t-1}\,:\, \tau < t \le T \}$.

    We use induction in~$T-\tau$.
    If $T = \tau$, then $\calB = \emptyset$ and we have proven the claim.
    Suppose $T > \tau$.    
    The only error in~$\calB$ that has support on~$\tau$ is the sandwiching error~$W_{\tau+1,\tau}$.
    The product~$W_{\tau+1,\tau}E_\tau$ is then supported on $[\tau+1,T]$.
    This means that $W_{\tau+1,\tau} = E_{\tau+1} E_\tau^\dagger$ where $E_{\tau+1} = [E_\tau]_{\tau+1}$
    and $[M_{\tau+1}]$ commutes with~$[E_\tau]$.
    The next group $\ISG(\tau+1)$ is generated by a subset of~$\ISG(\tau)$ and $\{ M_{\tau+1} \}$,
    both of which element-wise commute with~$E_{\tau+1}$.
    Since $E_{\tau + 1}$ is now a product of benign generators from~$[\tau+1,T]$,
    the induction hypothesis implies that $\pm E_{\tau+1} \in \ISG(\tau+1)$.
    If $\pm [M_{\tau+1}]_\tau \in \ISG(\tau)$, then $\ISG(\tau) = [\ISG(\tau+1)]_{\tau} \ni E_\tau$.
    If $\pm [M_{\tau+1}]_\tau \notin \ISG(\tau)$, 
    then \cref{thm:steady} supplies~$S_\tau \in \ISG(\tau)$ 
    that anticommutes with~$[M_{\tau+1}]_\tau$.
    Now, $E = [E_\tau]$ commutes with~$M = [M_{\tau+1}]$, $S = [S_\tau]$,
    and all the logical operators (trivial or not) of~$[\ISG(\tau+1)]$.
    It follows that $E$ belongs to the intersection of~$[\ISG(\tau)]$ and $[\ISG(\tau+1)]$.
\end{proof}

\subsection{Equivalent errors}

\begin{definition}
    Two spacetime Pauli errors are {\bf equivalent} 
    if their difference is benign.    
\end{definition}

\noindent
Tautologically, every benign error is equivalent to no error.
\Cref{thm:ISGisBenign} says that every element of~$\ISG(t)$ 
for any~$t$ is equivalent to no error.

\begin{corollary}\label{thm:equivalentLogicals}
    If $\tau$ is in the steady stage,
    equivalent logical operators of~$\ISG(\tau)$ differ by~$\ISG(\tau)$.
\end{corollary}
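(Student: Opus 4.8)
The plan is to read this off directly from \cref{thm:BenignLogicalIsStabilizer}. Let $L$ and $L'$ be logical operators of $\ISG(\tau)$ that are equivalent when regarded as the spacetime errors $[L]_\tau$ and $[L']_\tau$ supported on the single time step $\tau$. By the definition of equivalence, their difference $[LL']_\tau$ is a benign error, and it is supported on the one time step $\tau$, which lies in the steady stage by hypothesis; so \cref{thm:BenignLogicalIsStabilizer} applies to it verbatim.

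The one thing to check is which of the two alternatives of that lemma holds. Since $L$ and $L'$ are each logical operators of $\ISG(\tau)$ --- that is, they lie in the normalizer of $\ISG(\tau)$ --- each of them commutes with every element of $\ISG(\tau)$, and therefore so does their product $LL'$. This rules out the second alternative of \cref{thm:BenignLogicalIsStabilizer}, namely that $LL'$ anticommutes with some element of $\ISG(\tau)$. We are thus left with the first alternative, $LL' \in \barISG(\tau)$; since all Pauli signs are being ignored, $\barISG(\tau)$ and $\ISG(\tau)$ are the same $\FF_2$-subspace, so $L$ and $L'$ differ by an element of $\ISG(\tau)$.

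There is essentially no obstacle here: the corollary is an immediate specialization of \cref{thm:BenignLogicalIsStabilizer} once one observes that a logical operator commutes with the entire instantaneous stabilizer group. The only point requiring a moment's care is that the difference of the two errors is genuinely supported on a single time slice, so the lemma can be invoked without modification. It is also worth noting that the converse inclusion --- any two logical operators differing by an element of $\ISG(\tau)$ are equivalent --- is already contained in \cref{thm:ISGisBenign}, so the two statements together say that in the steady stage the equivalence classes of logical operators of $\ISG(\tau)$ are precisely the cosets of $\ISG(\tau)$.
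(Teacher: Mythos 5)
Your proof is correct and follows the paper's own argument essentially verbatim: both note that the difference of two equivalent logical operators is benign and commutes with all of $\ISG(\tau)$, then invoke \cref{thm:BenignLogicalIsStabilizer} to conclude it lies in $\barISG(\tau)$. Your proposal merely spells out the dichotomy in the lemma more explicitly; the added remark about the converse via \cref{thm:ISGisBenign} is a nice complement but not needed for the stated corollary.
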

\begin{proof}
    The difference between two logical operators is assumed to be benign
    and commutes with every element of~$\ISG(t)$.
    \Cref{thm:BenignLogicalIsStabilizer} proves the claim.
\end{proof}

\noindent
In the context of static Pauli stabilizer codes,
equivalent logical operators are \emph{defined} in this way.
Here, we have defined the equivalence by our own notion of benign spacetime errors,
and hence the statement is more than a definition.
\Cref{thm:equivalentLogicals} justifies that our definition of equivalence 
is a generalization of that of static Pauli stabilizer codes.
Note that $\tau$ being in the steady stage is an essential assumption
as remarked below~\cref{thm:BenignLogicalIsStabilizer},
which is always satisfied for any static Pauli stabilizer code,
for which we measure every stabilizer every time.

\begin{proposition}\label{thm:benignDoesNothing}
    Equivalent errors $E,E'$ induce the same physical state eventually.
    Specifically,
    if $E'E^\dagger$ is a product of benign generators, each supported on~$[0,t]$,
    then the physical states at time~$t$ obtained by inserting~$E$ and $E'$
    are the same for any valid history of measurement outcomes.
\end{proposition}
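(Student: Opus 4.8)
The plan is to reduce to the two kinds of benign generators and handle each by the group-action framework of \cref{thm:ErrorAction}, tracking the effect on the chain of outcome-recorded density matrices up to time~$t$. Since $E$ and $E'$ induce the same chain if and only if $E'E^\dagger$ acts trivially on the chain (this is exactly the group-action statement), and since $E'E^\dagger$ is by hypothesis a finite product of benign generators each supported on $[0,t]$, it suffices to show that each such benign generator acts trivially on the chain of density matrices \emph{truncated at time~$t$}. Here the truncation matters: a sandwiching error $W_{s+1,s}$ with $s+1 = t+1$ would fall outside $[0,t]$, so every benign generator in the product is either a vacuous error $[M]_s$ with $s \le t$ or a sandwiching error $E_{s+1}E_s^\dagger$ with $s+1 \le t$, i.e. $s \le t-1$.

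First I would dispatch the vacuous errors: a vacuous error at time $s$ is multiplication of $\rho_s'$ by the measurement operator $M$ at time $s+1$, and by the error timing convention it is inserted immediately after the measurements at time $s$. Since $s \le t$ and we only look up to time $t$, and since $M \in \ISG(s+1)$ means (after the measurement at $s+1$, if $s+1 \le t$) the state is an eigenstate of~$M$ — but more directly, \emph{at the level of the chain of density matrices}, conjugation by a scalar multiple of a unitary that stabilizes the relevant state is the identity, and for the time slices strictly before $s+1$ the vacuous error does nothing at all. Actually the cleanest route is to observe that a vacuous error is a product of sandwiching errors and one measurement operator inserted right after its own measurement, as in the displayed identity below \cref{def:BenignErrors}: $[M_{s+1}]_s = ([M_{s+1}]_s[M_{s+1}^\dagger]_{s+1})\,M_{s+1}$. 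The factor $M_{s+1}$ inserted at time $s+1$ conjugates $\rho_{s+1}$, which is an $M_{s+1}$-eigenstate, so the conjugation is trivial on that slice and irrelevant on all later slices (it commutes through future channels as the identity); but if $s+1 > t$ this factor is never reached. So vacuous errors reduce to sandwiching errors plus a harmless tail.

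For a sandwiching error $W = E_{s+1}E_s^\dagger$ with $s+1 \le t$ and $E_{s+1} = [E_s]_{s+1}$ commuting with every measurement operator at time $s+1$: trace through the action formula. Inserting $E_s$ at time $s$ gives $\rho_s' = E_s \rho_s E_s^\dagger$; then the channel $\calM_{s+1}$ is applied, and because $E_{s+1}$ commutes with all measurement projectors at time $s+1$, we have $\calM_{s+1}(E_s\rho_s E_s^\dagger) = E_{s+1}\calM_{s+1}(\rho_s)E_{s+1}^\dagger$ (the measurement outcomes are unchanged, so no outcome bits differ); finally inserting $E_{s+1}$ at time $s+1$ conjugates by $E_{s+1}^\dagger E_{s+1} = \one$ after composing, so $\rho_{s+1}'' = \rho_{s+1}$. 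All slices at times $< s$ are untouched, and all slices at times $> s+1$ see exactly the same input $\rho_{s+1}$ as with no error, hence are unchanged — and since $s+1 \le t$ this covers the whole truncated chain. Iterating over the finitely many benign generators in the product (the order does not matter since we are proving each acts as the identity on the chain, and $\calE$ acts by a group action), we conclude that $E'E^\dagger$ acts as the identity on the chain up to time $t$, so $E$ and $E'$ yield the same physical state at time~$t$.

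The main obstacle I anticipate is the bookkeeping around the time boundary at~$t$ and the precise meaning of ``supported on $[0,t]$'': one must be careful that a sandwiching error genuinely lives on $[s,s+1]$ with $s+1\le t$, and that the commutation of $E_{s+1}$ with the measurements at time $s+1$ is exactly what is needed to push $E_{s+1}$ through $\calM_{s+1}$ without altering outcome registers — this is where \cref{def:BenignErrors} is used in full. A secondary subtlety is making the vacuous-error case airtight when $s = t$ (so the measurement operator it ``is'' would be $[M_{t+1}]_t$, whose partner $[M_{t+1}^\dagger]_{t+1}$ is off-window): here one should argue directly that $[M_{t+1}]_t$ conjugates $\rho_t$, and $\rho_t$ need not be an $M_{t+1}$-eigenstate before the measurement — so instead one must note that $[M_{t+1}]_t$ as an element of $\barISG(t)$ (which it is, being vacuous at time $t$ and hence in $\barISG(t)$ by \cref{thm:ISGisBenign}) stabilizes $\rho_t$, making the conjugation trivial on that last slice.
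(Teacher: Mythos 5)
Your approach matches the paper's: by \cref{thm:ErrorAction} the chain for $E'$ is obtained from that for $E$ by acting with the benign error $E'E^\dagger$, and one then argues that each benign generator leaves all slices at or past the end of its support unchanged. Your sandwiching-error computation is correct and spells out what the paper only asserts in one sentence; note that it works for \emph{any} chain, not just the error-free one (because conjugation by $E_{s+1}$ commutes past $\calM_{s+1}$ and past whatever other error factor already sits at time $s+1$, up to a sign that cancels), which is exactly what lets you apply the generators one at a time.

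The vacuous-error branch, however, rests on a misreading of \cref{def:BenignErrors}. A vacuous error at time $s$ is $[M_s]_s$ --- the operator that was \emph{just} measured, inserted immediately after that measurement --- not $[M_{s+1}]_s$. Since the post-measurement state $\rho_s$ is always an $M_s$-eigenstate (conjugation by any Pauli error at time $s$ maps an $M_s$-eigenstate to an $M_s$-eigenstate), conjugation by $[M_s]_s$ is the identity on $\rho_s$ and therefore on all later slices, with nothing further to argue and no boundary case at $s=t$. Your detour through $[M_{t+1}]_t = \big([M_{t+1}]_t[M_{t+1}^\dagger]_{t+1}\big)M_{t+1}$ concerns a different object (the paper's remark below \cref{def:BenignErrors} is precisely explaining that $[M_{t+1}]_t$ is \emph{not} a vacuous error but a product of two benign generators), and the ``secondary subtlety'' paragraph makes a false claim: $[M_{t+1}]_t$ need not lie in $\barISG(t)$ --- for a nondeterministic measurement it does not --- and the second part of \cref{thm:ISGisBenign} requires the generators to be supported on $[0,t]$, which $[M_{t+1}^\dagger]_{t+1}$ is not. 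That whole worry disappears once the definition is read correctly. One last nit: ``each benign generator acts trivially on the chain truncated at $t$'' overstates what you prove --- a sandwiching error on $[s,s+1]$ does change $\rho_s$; what you actually show, and all that is needed, is that slices at times $\ge s+1$ are unchanged.
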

\begin{proof}
    \Cref{thm:ErrorAction} says that
    the chain of outcome-recorded density matrices
    upon insertion of~$E'$ can be obtained from that of~$E$ by inserting a benign error~$E'E^\dagger$.
    Clearly, the action of a benign error changes the chain of density matrices
    at only finitely many time steps,
    and the possible changes are only possible in the support of benign generators
    that constitute~$E'E^\dagger$.
\end{proof}

Moreover, the action of an undetectable error supported on the past
can always be reproduced by an error at the present:

\begin{proposition}\label{thm:PushingUndetectable}
    Let $E$ be an undetectable error supported on a time interval~$I = [t_0,t_1]$
    in the steady stage.
    Then, for any time interval~$I' = [t'_0, t'_1]$ with $t'_0 \ge t_0$ and $t'_1 \ge t_1$,
    there exists an undetectable error~$E'$ supported on~$I'$ equivalent to~$E$.
    Moreover, $E'E^\dagger$ is a product of benign generators supported on~$[t_0,t_0']$ and some element of~$\ISG(t_0)$.
    In particular, $E'E^\dagger$ is a product of benign generators 
    supported on~$[0,t_0']$.
\end{proposition}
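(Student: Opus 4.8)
The plan is to reduce the general statement to a one-step-at-a-time pushing argument and then iterate. First I would reduce to the case $t_1' = t_1$: if $t_1' > t_1$, then $E$ is already supported on $[t_0,t_1] \subseteq [t_0,t_1']$, so it suffices to handle $I' = [t_0',t_1']$ with $t_1' = \max(t_1,t_0')$ and then relax. So assume $t_1' = t_1$ and $t_0' \ge t_0$; by induction on $t_0' - t_0$ it is enough to treat $t_0' = t_0 + 1$, i.e.\ to show that an undetectable error $E$ supported on $[t_0,t_1]$ is equivalent to an undetectable error $E'$ supported on $[t_0+1,t_1]$, with $E'E^\dagger$ a product of benign generators supported on $[t_0,t_0+1]$ times an element of $\ISG(t_0)$.

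The key step is to understand the component $E_{t_0}$ inserted at time $t_0$. Since $E$ is undetectable, no detector is triggered; in particular, by \cref{thm:HowDetectorIsTriggered} applied to detectors whose ancestry reaches back only to time $t_0$ (the measurements at time $t_0+1$ of elements of $\ISG(t_0)$), $E_{t_0}$ must commute with every element of $\ISG(t_0)$ (the contributions of other $E_j$, $j > t_0$, to these particular detectors vanish because their ancestries do not extend past $t_0+1$ — this needs a small check using that $t_0$ is in the steady stage so that every persistent relation is local). Thus $E_{t_0}$ is a logical operator (possibly trivial) of $\ISG(t_0)$. Now multiply $E$ by the sandwiching error $W = [E_{t_0}^\dagger]_{t_0} [E_{t_0}]_{t_0+1}$, which is a legitimate benign generator precisely because $E_{t_0}$ commutes with all measurement operators at time $t_0+1$ — but this last fact is exactly what needs care: $E_{t_0}$ commutes with $\ISG(t_0)$, and by \cref{thm:steady} any measurement operator $M$ at time $t_0+1$ that is not in $\barISG(t_0)$ fails to commute with some element of $\ISG(t_0)$, so a priori $E_{t_0}$ need not commute with $M$. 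This is the main obstacle.

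I would resolve it as follows. Write $E_{t_0} = L \cdot S$ where $S \in \ISG(t_0)$ (which we may peel off into the claimed ``element of $\ISG(t_0)$'' factor, since $\ISG(t_0)$ elements are benign and supported on $[0,t_0]$ by \cref{thm:ISGisBenign}) and $L$ is a canonical logical representative. For each measurement $M$ at time $t_0+1$ with $M \notin \barISG(t_0)$: if $M$ commutes with all of $\ISG(t_0)$ then $M$ would raise the rank, contradicting $t_0$ steady, so $M$ anticommutes with some $S_M \in \ISG(t_0)$; since $\{M\}_{t_0+1}$ is a commuting set, the map sending each such $M$ to its commutation value with a fixed logical basis is what governs things, and I can further adjust $L$ by multiplying with more elements of $\ISG(t_0)$ and with the measurement operators themselves (which are vacuous errors at $t_0+1$, hence benign generators with support in $[t_0,t_0+1]$) to arrange that the residual logical representative commutes with every $M_{t_0+1,i}$. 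Concretely, if the residual $L$ anticommutes with some $M_{t_0+1,i}$, then $[M_{t_0+1,i}]_{t_0+1}$ is a vacuous error; folding $L$ through the measurement channel at $t_0+1$ is handled by the second case of the ancestry recursion, and one checks that $\ISG(t_0+1)$ is generated by the $M$-commuting part of $\ISG(t_0)$ together with the $M_{t_0+1,i}$, all of which can be made to commute with the adjusted representative. After this adjustment, $W$ is a genuine product of benign generators supported on $[t_0,t_0+1]$, $WE$ is supported on $[t_0+1,t_1]$, and it remains undetectable because $W$ is benign (\cref{thm:BenignIsUndetectable}) and undetectability is a linear condition. Tracking all the peeled-off pieces — the $\ISG(t_0)$ factor and the finitely many benign generators on $[t_0,t_0+1]$ — gives the stated decomposition of $E'E^\dagger$; the final ``supported on $[0,t_0']$'' claim follows since $\ISG(t_0) \subseteq \ISG$-elements are supported on $[0,t_0]$ by \cref{thm:ISGisBenign}. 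Finally, iterating this one-step argument $t_0' - t_0$ times and composing the benign decompositions yields the full statement.
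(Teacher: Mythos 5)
The proposal rests on a false claim: that undetectability of $E$ forces $E_{t_0}$ to commute with \emph{every} element of~$\ISG(t_0)$. This is not true, and the paper's own \cref{rem:excitingBenignError} is a counterexample: with the one-qubit schedule measuring $X$ at odd times and $Z$ at even times, the single-slice error $[Z]_1$ (inserted after the $X$ measurement at $t=1$) is benign, hence undetectable, yet it anticommutes with $X\in\ISG(1)$. What undetectability actually yields is much narrower. The detectors at time $t_0+1$ are precisely the measurements $M$ there with $[M]_{t_0}\in\barISG(t_0)$; these are typically a small subset of~$\ISG(t_0)$, not the whole group. Detectors at later times also have ancestry components at time~$t_0$, but those detectors are simultaneously influenced by $E_j$ for $j>t_0$, so no constraint on $E_{t_0}$ alone can be extracted from them. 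Since the claimed commutativity fails, the decomposition $E_{t_0}=L\cdot S$ into a logical operator times a stabilizer does not exist in general, and the subsequent reasoning built on~$L$ collapses.

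The later part of your proposal --- invoking \cref{thm:steady} to produce, for each non-detector measurement $M$ at $t_0+1$, a stabilizer $S_M\in\ISG(t_0)$ that anticommutes with~$M$, and multiplying by such stabilizers to restore commutativity --- is essentially the paper's case~(ii) and is correct. But once the false claim is removed, the detector case (where $[M_{t_0+1}]_{t_0}\in\barISG(t_0)$) is no longer covered: it requires a \emph{separate} argument that directly uses undetectability. The paper serializes measurements (one per time step) and handles the two cases cleanly: if $M_{t_0+1}$ is a detector, its ancestry lives at times $\le t_0$ so only $E_{t_0}$ can influence it, and undetectability forces $[E_{t_0},M_{t_0+1}]=0$; if not, \cref{thm:steady} supplies an anticommuting $S_{t_0}\in\ISG(t_0)$ to multiply by when needed, and $S_{t_0}$ is benign with support on $[0,t_0]$ by \cref{thm:ISGisBenign}. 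I'd suggest restructuring the proof along this dichotomy in the measurement operator rather than attempting to show $E_{t_0}$ is a logical operator of~$\ISG(t_0)$ --- that intermediate target is both false and unnecessary.
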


\begin{proof}
    We may assume that all measurements are serialized.
    It suffices to show the claim for $t'_0 = t_0 + 1$ and $t'_1 = t_1$.
    Let $M_{t_0+1}$ be a measurement operator at time~$t_0 + 1$.
    
    (i) Suppose that $[M_{t_0 + 1}]_{t_0} \in \barISG(t_0)$.
    Then, the measurement of~$M_{t_0+1}$ is a detector,
    and since $E$ is undetectable, 
    the error component~$E_{t_0}$ and the measurement operator~$M_{t_0+1}$ 
    must commute with each other.
    By a sandwiching error $[E_{t_0}]_{t_0+1}E_{t_0}^\dagger$, 
    we can push~$E_{t_0}$ to the next time step.

    (ii) Suppose that $[M_{t_0+1}]_{t_0} \notin \barISG(t_0)$.
    By~\cref{thm:steady},
    there exists $S_{t_0} \in \ISG(t_0)$ that anticommutes with~$[M_{t_0+1}]_{t_0}$.
    If $[M_{t_0+1}]_{t_0}$ commutes with $E_{t_0}$ we use a sandwiching error 
    $[E_{t_0}]_{t_0+1}E_{t_0}^\dagger$ to push~$E_{t_0}$ to the next time step.
    Otherwise, we know that $S_{t_0} E_{t_0}$ can be pushed to the next time step.
    By~\cref{thm:ISGisBenign}, $S_{t_0}$ is benign and is a product of benign generators,
    each of which is supported on~$[0,t_0]$ 
    and hence the push-up is done by a benign error from the past.
\end{proof}

\section{The code distance}

In this section, 
we finally obtain a full characterization of undetectable errors
in what we call bounded-inference dynamical codes,
and will arrive at a well-motivated definition 
of the code distance of dynamical code.
The class of bounded-inference dynamical codes 
includes all Floquet codes,
and is defined by a mild property 
that we believe must be satisfied by all useful dynamical codes.
We will conclude with some facts on Floquet codes,
with which one can write an algorithm for computing the code distance of Floquet codes.
Our definition of the code distance involves the infinite time axis,
and the results at the end of this section guarantee 
that some finite calculation suffices.

\begin{definition}
    The {\bf inference window width} at time~$t$ for a dynamical code
    is the least number~$\mu(t) \in [0,t-1]$
    such that every element~$S \in \ISG(t)$ 
    is determined by measurements supported on
    time interval~$[t-\mu(t),t]$;
    that is, $S$ is a product of benign generators supported on~$[t-\mu(t),t]$.
    A {\bf bounded-inference} dynamical code is one in which $\mu = \sup_t \mu(t) < \infty$.
\end{definition}

Now we use the periodicity of a Floquet code.

\begin{proposition}\label{thm:InferenceWindow}
    Every Floquet code is bounded-inference with an inference window width~$\mu \le T + P -2$
    where $T \ge 1$ is the initialization time and $P \ge 1$ is the period.
\end{proposition}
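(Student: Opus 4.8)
The plan is to bootstrap the elementary bound $\mu(t) \le t-1$ --- which is built into the definition of the inference window width and supplied by \cref{thm:ISGisBenign} --- to all times, using periodicity of the schedule together with the stabilization statement \cref{thm:steadyISG}. First I would dispose of the small-time case: for $1 \le t \le T+P-1$ one has $\mu(t) \le t-1 \le T+P-2$. Here it is worth recording why $\mu(t)$ is finite (hence well-defined): \cref{thm:ISGisBenign} writes every $S \in \ISG(t)$ as a product of benign generators supported on $[0,t]$, and since no measurement occurs before time $1$, the ancestry of $S$ bottoms out at an element of $\ISG(1)$, which is itself a product of time-$1$ vacuous errors; so the decomposition in fact stays within $[1,t]$, giving $\mu(t) \le t-1$.

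Second, and this is the crux, I would transport the small-time case forward in time. Given $t \ge T+P$, let $t_0$ be the unique time step in $\{T, T+1, \dots, T+P-1\}$ with $t \equiv t_0 \pmod P$, and write $t = t_0 + mP$ with $m \ge 1$. Applying \cref{thm:steadyISG} once for each of the times $t_0, t_0+P, \dots, t_0+(m-1)P$ --- each of which is $\ge t_0 \ge T$, so the corresponding inclusion is an equality --- one obtains $[\barISG(t_0)]_t = \barISG(t)$. Hence every $S \in \ISG(t)$ equals $[S_0]_t$ up to a phase for some $S_0 \in \ISG(t_0)$, and by the small-time case $S_0$ is a product of benign generators supported on the window $[t_0 - \mu(t_0),\, t_0] \subseteq [1, t_0]$, of width $\mu(t_0) \le t_0 - 1 \le T+P-2$.

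Third, I would shift this decomposition forward by $mP$ in time. Because the schedule has period $P$, a time-shift by $mP$ sends a vacuous error (a measurement operator at some time $\ge 1$) to a vacuous error and a sandwiching error to a sandwiching error (its commutation with the measurements at its time step being preserved by periodicity), and, being a group homomorphism on $\calE$, it carries the decomposition of $S_0$ to one of $[S_0]_t = S$ into benign generators supported on $[t_0 - \mu(t_0) + mP,\, t] \subseteq [1, t]$, of the same width $\mu(t_0) \le T+P-2$; the intervening phase is immaterial since benign errors form a binary subspace. Thus $\mu(t) \le T+P-2$ for all $t \ge 1$, and $\mu = \sup_t \mu(t) \le T+P-2 < \infty$.

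I expect the one point that needs real care --- the rest being bookkeeping --- to be verifying that a time-shift by a multiple of $P$ carries benign generators to benign generators: this is exactly where periodicity of the measurement schedule is used, and one must check both the ``equals a measurement operator'' clause for vacuous errors and the ``commutes with the measurements at that time step'' clause for sandwiching errors. The subsidiary remark that the decomposition of \cref{thm:ISGisBenign} can be taken inside $[1,t]$ (so that $\mu(t) \le t-1$) is immediate from the absence of measurements before time $1$.
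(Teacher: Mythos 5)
Your proof takes essentially the same route as the paper's: reduce via \cref{thm:steadyISG} to a representative time in $[T, T+P-1]$ where the ancestry-based bound $\mu(t_0) \le t_0 - 1 \le T+P-2$ is immediate, then use periodicity to transport the decomposition into benign generators forward by $mP$. The only difference is one of exposition: you spell out explicitly that the time-shift by $mP$ carries vacuous errors to vacuous errors and sandwiching errors to sandwiching errors, a step the paper's proof leaves implicit in the phrase ``is also inferred by the measurements in the time window $[t-f+1,t]$.''
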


A static Pauli stabilizer code may be promoted to a Floquet code by measuring 
all stabilizer generators in parallel at every step, so that $P=1$. 
If the $n-k$ generators are measured at $t=1$, then $\rank\ISG(1)=n-k$ and $T=1$. 
The bound $\mu\leq T+P-2=0$ is therefore saturated: each $S\in\ISG(t)$ is inferred 
from the single-round window $[t,t]$, reflecting that all stabilizers are remeasured every step. 
In contrast, the bound can be loose --- 
for the honeycomb code~\cite{Hastings2021}, $T=4$ and $P=3$ while $\mu=3$.

\begin{proof}
    Let $t$ be a time,
    and let $f \in [T,T+P-1]$ be an integer such that $f = t \bmod P$.
    Every element of~$\ISG(f)$ has ancestry supported on~$[1,f]$
    and therefore is inferred by the measurements on the same time interval.
    Since $[\barISG(f)]_t = \barISG(t)$ by~\cref{thm:steadyISG},
    every element of~$\ISG(t)$ is also inferred by the measurements 
    in the time window~$[t - f + 1, t]$.
\end{proof}

The bounded-inference property will be important because of

\begin{lemma}\label{thm:Stabilization}
    Let $E_\tau$ be an undetectable error supported on one time step~$\tau$
    in a bounded-inference dynamical code with a uniform inference window width~$\mu$.
    The physical state of the system at time~$\tau+\mu+1$ with $E_\tau$ inserted 
    is in the code space of~$\ISG(\tau+\mu+1)$.
\end{lemma}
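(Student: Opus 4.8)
The plan is to show that after a single-time-step undetectable error $E_\tau$, the instantaneous stabilizer group $\ISG(\tau+\mu+1)$ is rebuilt entirely from measurements whose deterministic outcomes have already ``absorbed'' the error, so that the physical state is genuinely stabilized by $\ISG(\tau+\mu+1)$ itself (with the correct signs), not merely by $\pm$ of its generators. First I would invoke the bounded-inference property: every $S \in \ISG(\tau+\mu+1)$ is a product of benign generators supported on the window $[\tau+1,\tau+\mu+1]$ — note this window starts \emph{strictly after} $\tau$, which is the whole point. So it suffices to prove that the physical state at time $\tau+\mu+1$ after inserting $E_\tau$ is stabilized by every benign generator supported on $[\tau+1,\tau+\mu+1]$.

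Next I would split benign generators into the two types. A vacuous error is a measurement operator $M_t$ for some $t \in [\tau+1,\tau+\mu+1]$; since the physical state after that measurement (and after any error insertion, which happens \emph{after} the measurement per the timing convention) is always an $M_t$-eigenstate, and subsequent evolution at times $> t$ keeps $M_t$ in the ISG only through commuting updates, I must argue that $M_t$ (or the appropriately-signed version) remains a stabilizer at time $\tau+\mu+1$. Here the subtlety is signs: the measured sign $m_t$ is the sign of $M_t$ on the \emph{actual} (error-afflicted) physical state, so the state is stabilized by $m_t M_t$; whatever benign-generator product with the correct signs represents $S \in \ISG(\tau+\mu+1)$ uses exactly these measured signs, so consistency is automatic. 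For a sandwiching error $[E']_{t}[E'^\dagger]_{t-1}$ with $t-1 \ge \tau+1$, it commutes with all measurements at time $t$ and its two halves cancel; inserting it into the density-matrix chain (via \cref{thm:ErrorAction}) changes nothing about the state at time $\tau+\mu+1$, hence the state is unaffected by it.

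The cleanest route, which I would actually follow, is to use \cref{thm:PushingUndetectable}: since $E_\tau$ is undetectable and supported at time $\tau$ in the steady stage, there is an equivalent undetectable error $E'$ supported on $[\tau+\mu+1,\tau+\mu+1]$, with $E'E_\tau^\dagger$ a product of benign generators supported on $[0,\tau+\mu+1]$. By \cref{thm:benignDoesNothing}, the physical states at time $\tau+\mu+1$ obtained from $E_\tau$ and from $E'$ agree (for any valid history of outcomes). But $E'$ is a single Pauli inserted at time $\tau+\mu+1$, \emph{after} all measurements defining $\ISG(\tau+\mu+1)$; and $E'$ must commute with every element of $\ISG(\tau+\mu+1)$ — otherwise the measurement of some future stabilizer that probes an anticommuting element would be a detector triggered by $E'$, contradicting undetectability (concretely, by \cref{thm:HowDetectorIsTriggered} and the fact that any $S \in \ISG(\tau+\mu+1)$ becomes the ancestor-component of some detector at a later time via remeasurement, using periodicity in the Floquet case and the bounded-inference hypothesis in general). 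Therefore conjugating the code state of $\ISG(\tau+\mu+1)$ by the commuting Pauli $E'$ leaves it in the same code space, and by the equivalence it is the same state produced by $E_\tau$.

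\textbf{The main obstacle} I anticipate is the sign bookkeeping for vacuous generators and the precise justification that an undetectable $E'$ at time $\tau+\mu+1$ must commute with all of $\ISG(\tau+\mu+1)$: one needs that every stabilizer in the steady stage is eventually remeasured (or inferred by a detector) so that non-commutation would be detected. In a Floquet code this follows from periodicity — every element of $\ISG(\tau+\mu+1)$ is, by \cref{thm:steadyISG}, the image of an element that recurs, and the measurement schedule that originally produced it recurs too — but phrasing this so it also covers the general bounded-inference case (where there is no periodicity, only the finite-width inference guarantee) requires a small extra argument that I would state carefully rather than hand-wave.
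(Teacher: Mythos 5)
Your opening plan paragraph and the benign-generator split paragraph do gesture at the paper's actual argument, which is very direct: by bounded inference every $S\in\ISG(\tau+\mu+1)$ is determined (with sign) from measurements in the window $[\tau+1,\tau+\mu+1]$, which lies strictly after the error; since $E_\tau$ is undetectable these outcomes form a valid history, and since no further errors occur after $\tau$, the physical state at time $\tau+\mu+1$ is stabilized by exactly those inferences. That is the whole proof. However, the version you say you would ``actually follow'' --- pushing $E_\tau$ to an equivalent $E'$ at time $\tau+\mu+1$ via~\cref{thm:PushingUndetectable} and then arguing $E'$ must commute with $\ISG(\tau+\mu+1)$ --- has a genuine gap, not a small one.

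The gap is the claim that if $E'$ at time $\tau+\mu+1$ anticommutes with some $S\in\ISG(\tau+\mu+1)$, then some future detector is triggered. For that to work, the time-$(\tau+\mu+1)$ components of the ancestries of all future detectors would have to span all of $\ISG(\tau+\mu+1)$ (up to phase). You attribute this to ``remeasurement'' via periodicity, but this is not established anywhere in the paper, and it is not obviously true: an instantaneous stabilizer may be destroyed by the very next anticommuting measurement and never reconstituted as the ancestry of any detector. Indeed, the only result in the paper that would let you conclude a single-time undetectable operator commutes with the local $\ISG$ is the downstream consequence of~\cref{thm:Stabilization} itself (via~\cref{thm:NonbenignUndetectedLogical}), so the route is circular. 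Moreover, even if the remeasurement claim held for Floquet codes, the lemma is stated for general bounded-inference dynamical codes, and you would have no periodicity to lean on there --- you flag this yourself as needing ``a small extra argument,'' but it is in fact the whole difficulty, and the paper sidesteps it entirely by never invoking detectors in this proof.

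There is also a smaller imprecision in the type-split paragraph: you argue that each vacuous generator $M_t$ ``remains a stabilizer at time $\tau+\mu+1$,'' but individual $M_t$'s need not survive in the ISG; only specific products (the ones that collapse, via the cancelling sandwiching factors, to a single-time operator $S$) are guaranteed to be stabilizers at time $\tau+\mu+1$. What saves the day is not that each factor individually stabilizes the state, but that the product telescopes to an element of $\ISG(\tau+\mu+1)$ whose inferred sign equals the product of the actual measured signs $m'_j$, and the error-afflicted state is stabilized by each $m'_jM_j$ right after it is measured. If you reorganize your first two paragraphs along these lines and drop the detector-based detour, you recover the paper's argument.
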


\begin{proof}
    All the eigenvalues of the stabilizers in~$\ISG(\tau+\mu+1)$ 
    and those in later time steps 
    are determined by the outcomes measured after the insertion of~$E_\tau$.
    These eigenvalues are valid and independent of whether~$E_\tau$ is inserted
    since~$E_\tau$ is undetectable.
    This implies that the state is in the correct code space predicted 
    by the measurement outcomes.
\end{proof}

\begin{theorem}\label{thm:NonbenignUndetectedLogical}
    In any bounded-inference dynamical code ({\it e.g.}, a Floquet code),
    an undetectable error in the steady stage
    is always equivalent to an error~$L_t$ supported on one time step~$t$
    such that $L_t$ commutes with every element of~$\ISG(t)$.
\end{theorem}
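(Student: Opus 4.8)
The plan is to combine the three preparatory results assembled earlier: \cref{thm:PushingUndetectable} (pushing an undetectable error forward in time while only multiplying by benign generators and an ISG element), \cref{thm:Stabilization} (an undetectable one-time-slice error lands back in the instantaneous code space after $\mu+1$ more steps), and \cref{thm:ISGisBenign} (ISG elements are benign). First I would take the given undetectable error $E$ supported on $[t_0,t_1]$ in the steady stage, and apply \cref{thm:PushingUndetectable} repeatedly to obtain an equivalent undetectable error $E'$ supported on the single time step $[t_1, t_1]$; the proposition even tells us that $E'E^\dagger$ is a product of benign generators supported on $[0,t_1]$, so $E'$ is genuinely equivalent to $E$ in our sense, and $t_1$ is still in the steady stage.

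Next I would set $E' = E'_{t_1}$ as a one-time-slice error and invoke \cref{thm:Stabilization} with $\tau = t_1$: after $\mu+1$ additional time steps, the physical state with $E'_{t_1}$ inserted lies in the code space of $\ISG(t)$ where $t := t_1 + \mu + 1$. The key step is then to push $E'$ forward from time $t_1$ to time $t$ using \cref{thm:PushingUndetectable} once more, obtaining an equivalent undetectable error $L_t$ supported on the single time step $t$. Because $L_t$ is equivalent to $E'$, which is equivalent to $E$, by \cref{thm:benignDoesNothing} inserting $L_t$ produces the same physical state at time $t$ as inserting $E'_{t_1}$ (the benign difference is supported on $[0,t]$). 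But that state is in the code space of $\ISG(t)$; since $L_t$ acts on the state solely by conjugation at time $t$ and leaves it stabilized by $\ISG(t)$, the operator $L_t$ must normalize $\ISG(t)$, i.e.\ commute with every element of $\ISG(t)$. Were $L_t$ to anticommute with some $S \in \ISG(t)$, the state after inserting $L_t$ would be stabilized by $-S$, not $S$, contradicting membership in the code space.

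A minor point to handle carefully: \cref{thm:Stabilization} is stated for a \emph{uniform} inference window width $\mu$, and \cref{thm:PushingUndetectable} requires the relevant intervals to start in the steady stage; both are fine here since we are in a bounded-inference code (so $\mu = \sup_t \mu(t) < \infty$, and every Floquet code qualifies by \cref{thm:InferenceWindow}) and since the steady stage is closed under moving forward in time. One should also note that \cref{thm:Stabilization} is stated for an error supported on one time step, which is exactly why the first push (to collapse $E$ onto $[t_1,t_1]$) is needed before applying it.

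The main obstacle is the logical inference in the third step: translating ``the post-insertion physical state lies in the code space of $\ISG(t)$'' into ``$L_t$ commutes with every element of $\ISG(t)$.'' This needs the observation that the state \emph{before} the time-$t$ insertion of $L_t$ is already a valid instantaneous code state (the measurements at times $\le t$ have been performed, and in the error-free-after-$t_1$ picture the state is stabilized by $\ISG(t)$), so that conjugating by $L_t$ keeps it in the code space \emph{only if} $L_t$ is in the normalizer of $\ISG(t)$ modulo phases --- equivalently, commutes with each stabilizer. I would phrase this via \cref{thm:HowDetectorIsTriggered} or directly: the state at time $t$ after inserting $L_t$ is stabilized by $(-1)^{\lambda(L_t, S_t)} S_t$ for each $S_t \in \ISG(t)$ (taking $E = L_t$ supported on $\{t\}$, where the ancestry $\ance_t^t(S_t) = S_t$), and being in the code space forces all these signs to be $+1$, i.e.\ $\lambda(L_t, S_t) = 0$ for all $S_t$.
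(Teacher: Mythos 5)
Your proposal matches the paper's proof step for step: push $E$ to a single time slice via \cref{thm:PushingUndetectable}, invoke \cref{thm:Stabilization} to land the post-error state in the code space at time $t=t_1+\mu+1$, push once more to time~$t$, and use \cref{thm:benignDoesNothing} to transfer the code-space conclusion to $L_t$ and deduce commutation with $\ISG(t)$. The extra care you give the final inference (via the sign computation from \cref{thm:HowDetectorIsTriggered}) is sound and merely spells out what the paper states tersely.
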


In fact, as we will see in the proof,
if $E$ is an undetectable error supported on a time interval~$[a,b]$,
then the operator~$L_t$ can be found by ``pushing''~$E$ into the time step~$b + \mu+1$ 
by applying~\cref{thm:PushingUndetectable}.

\begin{proof}
    Let $\mu$ be an inference window width.
    \cref{thm:InferenceWindow} says that a Floquet code has a bounded inference window.
    By~\cref{thm:PushingUndetectable} an undetectable error is equivalent to
    another undetectable error~$E_\tau$ supported on one time step~$\tau$.
    \cref{thm:Stabilization} says that the physical state at time $t=\tau+\mu+1$
    with $E_\tau$ inserted is in the instantaneous code space.
    By~\cref{thm:PushingUndetectable} again,
    we multiply~$E_\tau$ by a benign error consisting of 
    benign generators in the past of~$t$
    to obtain an equivalent error~$L_t$.
    The action of~$L_t$ on a pristine state gives the same physical state
    as the one with $E_\tau$ inserted by~\cref{thm:benignDoesNothing},
    and hence the action of a Pauli operator~$L_t$ results in the code space.
    Therefore, $L_t$ must commute with all the stabilizers at time~$t$.
\end{proof}

\begin{corollary}\label{thm:LogicalFailureNonbenign}
    Let $E$ be an undetectable error on a bounded-inference dynamical code 
    (\it{e.g.}, a Floquet code) in the steady stage. 
    The error~$E$ is correctable if and only if $E$ is benign.
\end{corollary}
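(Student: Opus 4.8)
The plan is to leverage \cref{thm:NonbenignUndetectedLogical} to reduce everything to the instantaneous code at a single time step, where the classical dichotomy for Pauli stabilizer codes applies. First I would prove the ``if'' direction: if $E$ is benign, then by \cref{thm:BenignIsUndetectable} it is undetectable, and by \cref{thm:benignDoesNothing} its action on any pristine history of density matrices is eventually trivial — it induces the same physical state at every sufficiently late time as the empty error. An error that never disturbs the encoded information (equivalently, that the decoder can always account for because it produces no syndrome and no logical effect) is correctable; more precisely, doing nothing is a valid recovery. So benign implies correctable.

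The substantive direction is ``only if'': if $E$ is correctable, then $E$ is benign. I would argue the contrapositive. Suppose $E$ is undetectable but not benign. By \cref{thm:NonbenignUndetectedLogical}, $E$ is equivalent to an error $L_t$ supported on the single time step $t = b+\mu+1$ (where $[a,b]$ is the support of $E$ and $\mu$ the inference window width), and $L_t$ commutes with every element of $\ISG(t)$ — i.e., $L_t$ is a (possibly trivial) logical operator of the instantaneous Pauli stabilizer code at time $t$. Since $t$ is in the steady stage, \cref{thm:BenignLogicalIsStabilizer} tells us that $L_t$ is benign precisely when $L_t \in \barISG(t)$. Because $E$ is not benign, $L_t \notin \barISG(t)$, so $L_t$ is a \emph{nontrivial} logical operator of $\ISG(t)$. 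Now I would invoke the standard fact about Pauli stabilizer codes: a nontrivial logical operator applied to a code state, followed by the ideal recovery map for that code, returns a state in the code space but implements a nontrivial logical transformation — hence it is not correctable. Transporting this back through the equivalence: by \cref{thm:benignDoesNothing}, inserting $E$ and inserting $L_t$ produce the same physical state at time $t$ and at all later times, so if $L_t$ is uncorrectable then so is $E$.

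The main obstacle is making the notion ``correctable'' precise enough that the argument is airtight, since the paper has been working operationally with histories of density matrices and syndromes rather than with a fixed recovery channel. I would pin down correctability as: there exists a recovery procedure (a function from the full syndrome history to a Pauli correction) such that for every valid measurement-outcome history, the corrected physical state at each late time equals the no-error physical state predicted by those outcomes. For benign $E$, the trivial recovery works by \cref{thm:benignDoesNothing}. For the converse, one must note that since $E$ is equivalent to $L_t$ and the syndrome of $E$ is empty (undetectable), any recovery sees the same data it would see with no error, so it applies the same (possibly empty) correction regardless; but $L_t$, being a nontrivial logical of $\ISG(t)$, changes the reduced state on the reference system $R$ of \cref{rem:LogicalFailure} in a way no $C$-supported correction commuting with $\ISG(t)$ can undo. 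Hence no valid recovery exists. A small care point is that the equivalence $E'E^\dagger$ being benign only guarantees agreement of physical states from some finite time onward (\cref{thm:benignDoesNothing}), which is exactly what is needed since correctability is a statement about the long-time behavior of the memory; I would remark that this is why the steady-stage hypothesis and the bounded-inference property cannot be dropped.
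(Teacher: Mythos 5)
Your proof is correct and follows the paper's own argument: reduce via \cref{thm:NonbenignUndetectedLogical} to an equivalent single-time-step operator $L_t$ commuting with $\ISG(t)$, then use \cref{thm:BenignLogicalIsStabilizer} (together with \cref{thm:ISGisBenign}) to identify ``benign'' with ``$L_t \in \barISG(t)$,'' and appeal to the stabilizer-code dichotomy to conclude. Your additional care in pinning down the operational meaning of ``correctable'' via \cref{rem:LogicalFailure} and \cref{thm:benignDoesNothing} fills in details the paper leaves informal, but the route and the key lemmas are identical.
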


\noindent 
Put verbosely,
if $E$ is benign, then the physical state eventually becomes pristine
with the code's measurement dynamics;
if $E$ is not benign, no decoder may correct~$E$ 
and the encoded qubits are unrecoverably damaged.

\begin{proof}
    \Cref{thm:NonbenignUndetectedLogical} says that 
    $E$ is equivalent to a logical operator~$L$, trivial or not, of~$\ISG(t)$ at some time~$t$.
    If $L$ is a stabilizer, then $L$ is benign, and so is~$E$.
    In this case there is no damage to the logical qubit.
    If $L$ is not a stabilizer, then $L$ cannot be corrected,
    and \cref{thm:BenignLogicalIsStabilizer} says that $L$ is not benign.
\end{proof}

\begin{remark}
    \Cref{thm:NonbenignUndetectedLogical,thm:LogicalFailureNonbenign}
    remain true
    for slightly more general dynamical codes 
    where the inference window width~$\mu(t)$
    satisfies
    \begin{equation}
        \lim_{t \to \infty} t - \mu(t) = \infty.
    \end{equation}
\end{remark}

We finally arrive at a desired characterization of the code distance of Floquet codes.
Let a {\bf weight} be a function $\calE \to \ZZ_{\ge 0}$.
One must use a well motivated function for a weight,
reflecting the underlying noise of qubits.
If every location in spacetime suffers from independent stochastic noise,
it makes sense to define the weight to be the number of nonidentity factors of a spacetime error,
which is usually assumed and we, too, use below.

\begin{definition}
    The {\bf code distance} of a bounded-inference dynamical code
    is the minimal weight of a nonbenign undetectable error~$E\in \calE$ 
    inserted into the steady stage.
\end{definition}

\noindent
This is to be compared with a conventional notion:
the code distance of a Pauli stabilizer code is 
the minimal weight of a nonstabilizer undetectable error.

\begin{proposition}
    Promote a static Pauli stabilizer code to a Floquet code
    by measuring all stabilizer generators every time step.
    Then, the code distance of this Floquet code is equal to
    the conventional code distance of the Pauli stabilizer code.
\end{proposition}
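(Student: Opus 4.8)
The plan is to show the two code distances are equal by comparing what counts as an undetectable, non-correctable error in each setting. Write $C$ for the static Pauli stabilizer code with stabilizer group $S$ on $n$ qubits, and write $F$ for the Floquet code obtained by measuring a fixed generating set of $S$ at every time step. Since every generator is remeasured every step, we have $\ISG(t) = S$ for all $t \ge 1$, so $T = 1$, the steady stage begins at $t = 1$, and by the discussion after \cref{thm:InferenceWindow} the inference window width is $\mu = 0$: each stabilizer is inferred from the single-round window $[t,t]$. This is the structural fact that makes the two notions coincide, and it should be stated first.

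Next I would unwind what the benign errors of $F$ look like in this situation. A sandwiching error $E_{t+1}E_t$ requires $E_{t+1}$ to commute with every measurement operator at time $t+1$, i.e. with every generator of $S$, hence $[E_{t+1}]$ is a logical operator (trivial or not) of $C$ — but actually for the direction we need, combine this with \cref{thm:BenignLogicalIsStabilizer}: a benign error supported on one time step $\tau$ in the steady stage is either in $\barISG(\tau) = \bar S$ or anticommutes with some element of $S$. A vacuous error is a measurement operator, hence an element of $S$. So on any single time slice in the steady stage, a benign error is precisely an element of $\bar S$, i.e. a stabilizer of $C$ up to sign.

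Then the comparison proceeds in two directions. For ``$\ge$'': given an undetectable spacetime error $E$ of $F$ in the steady stage that is non-benign, by \cref{thm:NonbenignUndetectedLogical} it is equivalent to a single-slice error $L_t$ commuting with all of $\ISG(t) = S$; since $E$ is non-benign so is $L_t$ (benignness is preserved under equivalence, being equivalence to the zero error), so by \cref{thm:BenignLogicalIsStabilizer} $L_t \notin \bar S$, meaning $L_t$ is a genuine logical operator of $C$. Moreover the weight can only have decreased or stayed the same under the pushing in \cref{thm:PushingUndetectable}? — this is the subtle point: pushing $E$ forward multiplies it by benign generators and can in principle change the weight, so one cannot directly claim $\mathrm{wt}(L_t) \le \mathrm{wt}(E)$. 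Instead I would argue the other way: a minimal-weight non-benign undetectable error of $F$ can be taken to be supported on a single time slice $t \ge 1$ from the start — because if it had support on several slices, using $\mu = 0$ and \cref{thm:PushingUndetectable} with $I' = [t_1, t_1]$ one pushes everything into the last slice $t_1$ (here $\mu + 1 = 1$, so the target slice is reachable), and I must check this push does not increase weight; with $\mu = 0$ the pushing only uses sandwiching errors of the form $[E_{t_0}]_{t_0+1}E_{t_0}^\dagger$ together with possible stabilizers $S_{t_0} \in S$, and the stabilizer factor appears only in case (ii) when $E_{t_0}$ anticommutes with the measurement operator — but every measurement operator is in $S$ here, so case (ii)'s anticommutation scenario still triggers a stabilizer multiplication. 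The clean way around this is to not optimize weight during the push but instead observe: any single-slice error $[O]_t$ with $t \ge 1$ that commutes with all of $S$ and is not in $\bar S$ is a non-benign undetectable error of $F$ of weight $\mathrm{wt}(O)$, so $d(F) \le \min\{\mathrm{wt}(O) : O \text{ a nontrivial logical of } C\} = d(C)$; and conversely any non-benign undetectable $E$ of $F$ is equivalent to such an $[O]_t$, but since logical operators of $C$ of minimal weight realize $d(C)$ and multi-slice errors have weight at least that of any single slice they contain, and an undetectable $E$ restricted to each slice $t$ must... — here is where the real care is needed.

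The main obstacle, then, is the inequality $d(F) \ge d(C)$: showing that no multi-slice undetectable non-benign error of $F$ can have weight strictly below $d(C)$. I would handle it by the following argument rather than weight-tracking through the push. Let $E$ be undetectable in the steady stage, supported on $[a,b]$ with $a \ge 1$. For each slice $t \in [a,b]$, undetectability with respect to the detectors at time $t+1$ (which measure every generator of $S$) forces $E_t$ to commute with every generator of $S$, hence $[E_t]$ is a logical operator (trivial or not) of $C$. If some $[E_t]$ is a \emph{nontrivial} logical of $C$, then $\mathrm{wt}(E) \ge \mathrm{wt}(E_t) \ge d(C)$ and we are done. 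Otherwise every $[E_t] \in \bar S$, so every single-slice restriction of $E$ is a stabilizer of $C$ up to sign, hence (being a product of measurement operators on that slice, i.e. vacuous errors) benign; therefore $E$, a product of these benign single-slice pieces, is itself benign — contradicting the assumption that $E$ is non-benign. This dichotomy gives $d(F) \ge d(C)$, and together with the construction in the previous paragraph giving $d(F) \le d(C)$, we conclude equality. The one routine check to flag is that the detectors of $F$ at time $t+1$ indeed test exactly ``$E_t$ commutes with each generator of $S$'' — this follows from \eqref{eq:detectorAsError} because with $\mu = 0$ the ancestry of $\pm[M_{t+1}]_t$ is just $\pm[M_{t+1}]_t$ itself (it is a persistent stabilizer), so $D$ is supported on the single slice $t$ and equals $M_{t+1}$ there.
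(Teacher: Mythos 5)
Your argument is correct and takes essentially the same route as the paper's: both hinge on the observation that since every stabilizer generator is remeasured every round, each time slice of an undetectable error must separately commute with all of~$S$, hence is a logical (trivial or not) of the static code; if every slice lies in~$\bar S$ the whole error is a product of vacuous errors and thus benign, while if any slice is a nontrivial logical then the total weight is at least~$d(C)$. The paper phrases this as ``every time slice of an undetectable spacetime error must be undetectable on its own'' and caps the weight at~$d$ from the outset, but the per-slice dichotomy is the same, and the converse bound $d(F)\le d(C)$ by a single-slice logical is the same.

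One small correction to the ``routine check'' you flag at the end. The detector~$D$ of~\eqref{eq:detectorAsError} for the measurement at time~$t+1$ of a persistent stabilizer $g\in\ISG(t)$ is \emph{not} supported only on slice~$t$: the ancestry $\ance_j^{t}([g]_t)$ of a persistent stabilizer has nonidentity component~$[g]_j$ at its base time~$j$, so $D_j=[g]_j$ for every $j\in[0,t]$. The fact that each slice~$E_t$ must separately commute with every~$g$ instead follows by taking the \emph{difference} of two consecutive detectors, at times~$t+1$ and~$t$, which cancels the common factors and leaves a constraint supported on slice~$t$ alone; this is precisely the inference-window reformulation of detectors the paper uses near~\cref{thm:dOver2Correctable}, and is available here because $\mu=0$. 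With that correction in the justification, the rest of your argument stands as written.
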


\begin{proof}
    Since the code distance of the Floquet code
    cannot exceed the code distance~$d$ of the instantaneous
    Pauli stabilizer code,
    it suffices to consider undetectable spacetime errors of
    weight at most~$d$.
    Since we infer all stabilizers every time step,
    every time slice of an undetectable spacetime error 
    must be undetectable on its own;
    otherwise, 
    the earliest nonlogical slice must flip some of the succeeding stabilizer measurements.
    Since the total weight is at most~$d$,
    the weight in each slice is at most~$d$
    and hence each slice is a stabilizer unless the weight is~$d$.
    Every stabilizer is benign by~\cref{thm:ISGisBenign}.
\end{proof}

\begin{corollary}
    It is no easier to compute the code distance of a Floquet code
    than to compute the conventional code distance of a Pauli stabilizer code.
\end{corollary}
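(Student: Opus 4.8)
The plan is to read this off as a polynomial-time reduction, with the preceding proposition supplying the reduction map. Given a Pauli stabilizer code on $n$ qubits presented by a generating set $\{S_1,\ldots,S_{n-k}\}$ of commuting Pauli operators, send it to the period-$1$ Floquet code whose single measurement round is $\mathbf M_1 = \{S_1,\ldots,S_{n-k}\}$. Writing down this Floquet code (its period, and the one round as a list of symplectic vectors) costs no more than copying the input, so the map is computable in linear time. By the preceding proposition, the code distance of this Floquet code equals the conventional code distance of the original stabilizer code.

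Consequently, any procedure computing the code distance of an arbitrary Floquet code --- indeed of period-$1$ Floquet codes alone --- yields, with only polynomial overhead, a procedure computing the conventional code distance of an arbitrary Pauli stabilizer code; equivalently, the decision problem ``is the Floquet code distance at most $w$?'' is at least as hard as the analogous decision problem for stabilizer codes. In particular, since deciding the minimum distance of a Pauli stabilizer code is NP-hard (for instance via the NP-hardness of minimum distance of classical binary linear codes, which embed as CSS codes), deciding the minimum distance of a Floquet code is NP-hard as well.

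There is no real obstacle here: the mathematical content was discharged in the preceding proposition, and the only point needing care is to fix a reasonable encoding of a Floquet code under which the displayed map is visibly polynomial-time --- which it is. I would therefore present the corollary's proof as essentially a one-line appeal to the proposition, noting that a distance-preserving, efficiently computable embedding transfers computational hardness in the obvious direction.
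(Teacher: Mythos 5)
Your proposal is exactly what the paper intends: the corollary is left unproved because it follows immediately from the preceding proposition, read as a linear-time, distance-preserving reduction from stabilizer codes to period-$1$ Floquet codes, and you've spelled out that reduction correctly. The extra remark on NP-hardness via classical minimum distance (and the citation one would use, namely~\cite{Iyer2013} as in~\cref{rem:ComputingFloquetCodeDistance}) is consistent with the paper but not needed for the corollary as stated.
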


\begin{theorem}\label{thm:dOver2Correctable}
    If $d$ is the code distance of a Floquet code,
    any spacetime error of weight less than~$d/2$ occurred during the steady stage
    can be corrected.
\end{theorem}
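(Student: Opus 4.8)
The plan is to run the standard minimum-distance argument, but carried out entirely in the quotient of $\calE$ by benign errors, using \cref{thm:LogicalFailureNonbenign} as the substitute for the stabilizer-code fact that correctable $\Leftrightarrow$ stabilizer. Suppose for contradiction that some spacetime error $F$ of weight $w(F) < d/2$ occurring in the steady stage cannot be corrected. First I would argue that, without loss of generality, $F$ is undetectable: a decoder sees only the syndrome, so the relevant object is the coset of errors producing a given syndrome, and the failure of correction is a statement about two distinct errors $F, F'$ with the same syndrome for which no single recovery works. Concretely, the syndrome of $F$ is also the syndrome of the benign error (namely, none) composed with $F$; a decoder that succeeds on all errors of weight $< d/2$ must in particular handle the minimum-weight member of the syndrome coset. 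So the contradiction hypothesis gives two errors $F_1, F_2$, each of weight $< d/2$ and in the steady stage, with the same syndrome, such that $F_2 F_1^\dagger$ is \emph{not} correctable in the sense that a recovery tuned to $F_1$ fails on $F_2$.

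The next step is to observe that $E := F_2 F_1^\dagger$ is undetectable (its syndrome is the difference of two equal syndromes) and lies in the steady stage, with weight $w(E) \le w(F_1) + w(F_2) < d/2 + d/2 = d$ by subadditivity of the weight under the group multiplication in $\calE$ (the support of a product is contained in the union of supports, and at each spacetime location the product of two Paulis is nonidentity only if at least one factor is). By the definition of the code distance, every nonbenign undetectable error in the steady stage has weight at least $d$; since $w(E) < d$, the error $E$ must be benign. Then \cref{thm:LogicalFailureNonbenign} (or directly \cref{thm:benignDoesNothing}) tells us that $E$ is correctable: applying $F_1$ as the presumed error and letting the code's own measurement dynamics run, the benign residual $E = F_2 F_1^\dagger$ eventually acts trivially on the history of density matrices, so $F_1$ and $F_2$ in fact lead to the same physical state and a single recovery (the one for $F_1$) works for $F_2$ as well. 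This contradicts the choice of $F_1, F_2$, so no such uncorrectable low-weight error exists.

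The main thing to get right is the reduction in the first paragraph: precisely formulating what ``correctable'' means for spacetime errors in a dynamical code (a recovery is applied at some late time after the syndrome history is read, and ``corrected'' means the encoded state is restored, possibly only asymptotically), and checking that the usual ``two errors, same syndrome, difference has weight $< d$'' packing argument goes through verbatim once benign plays the role of stabilizer. Given \cref{thm:LogicalFailureNonbenign} this is essentially bookkeeping — the one subtlety is that correction need not be instantaneous, so I would phrase the conclusion as: the state obtained by inserting $F_2$ and then performing the $F_1$-recovery coincides, at all sufficiently late times, with the correctly decoded state, which is exactly what \cref{thm:benignDoesNothing} delivers for the benign difference $E$. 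The weight subadditivity and the ``undetectable difference'' steps are immediate from the $\FF_2$-linear structure on $\calE$ and the linearity of the syndrome map implicit in \cref{thm:HowDetectorIsTriggered}.
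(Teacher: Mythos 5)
Your proposal follows the same essential strategy as the paper: the classical packing argument, with benign errors playing the role of stabilizers, and \cref{thm:LogicalFailureNonbenign} supplying the ``weight $< d$ and undetectable $\Rightarrow$ benign'' step. The weight subadditivity, linearity of the syndrome map, and the use of \cref{thm:benignDoesNothing} to interpret ``corrected'' as ``eventually the same physical state'' are all in line with the paper's reasoning.

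What you do not address, and what the paper treats as the main technical content of the proof, is the problem of the infinite time axis. Your argument (and the paper's first paragraph) needs to produce, from the observed syndrome of $F$, a candidate correction $C$ of weight $< d/2$ whose syndrome matches, so that $FC^\dagger$ is undetectable of weight $< d$ and hence benign. In a static stabilizer code this is a finite linear-algebra problem. Here the syndrome history is unbounded, and a priori a decoder might have to search over errors with unbounded temporal support, so the existence of a suitable $C$ is not immediate. The paper spends the entire second half of its proof on this: it redefines detectors so that each syndrome bit is determined by a window of width $\mu$ from \cref{thm:InferenceWindow}, argues that any error component farther than $\mu$ from all unhappy detectors is itself undetectable (hence can be stripped off as a benign factor), and concludes that it suffices to solve the linear system on the finite window $[t_0 - \mu d/2,\, t_1 + \mu d/2]$ around the unhappy detectors. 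You flag ``precisely formulating what correctable means'' as the subtle point, but the finiteness of the search window is the part that genuinely requires the bounded-inference machinery and would not go through for a general dynamical code. Without it, your proof is a correct outline of the packing argument but leaves the central gap that the paper's proof exists to close.
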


\begin{proof}
    Such an error~$E$ of weight less than~$d/2$ is either detectable or benign.
    If it is benign, no correction is needed;
    once we go sufficiently far into the future past the error,
    the physical state is the same as if no error has occurred.
    If it is detectable, then we have to find a finite time window
    in which the error can be supported.
    Given such a time window, 
    we can solve an inhomogeneous $\FF_2$-linear equation to find some error~$C$
    that reproduces the same syndrome and has weight less than~$d/2$.
    The combination~$EC^\dagger$ has weight less than~$d$ and is undetectable,
    and hence is benign by~\cref{thm:LogicalFailureNonbenign}.

    It remains to determine a sufficiently large but finite time window that supports~$E$
    (or its relevant factors).
    To this end, we redefine detectors as follows.
    Bring~$\mu$ from~\cref{thm:InferenceWindow}.
    Our detector at~$t$ is redefined to be the difference between
    the outcome of the detector at~$t$ and 
    the prediction made by the outcomes in the time window~$[t-\mu,t]$.\footnote{We have discussed a similar choice of a basis for syndrome above in the context of ancestors of stabilizers.}

    Under this redefinition of detectors,
    there must exist a nonidentity component of~$E$ within 
    the $\mu$-neighborhood of an unhappy detector along the time axis;
    otherwise, the detector cannot see the error and hence cannot be unhappy.
    If $E$ stretches farther along the time axis than $\mu d/2$ from unhappy detectors
    of the greatest or least time coordinate,
    then there must be a time interval of length~$> \mu$ that lacks any error factor.
    In that case, the factors of~$E$ that sits far away from the unhappy detectors
    must be undetectable on their own, and hence benign.
    In conclusion, it suffices to look for~$C$ in the time window~$[t_0 - \mu d/2, t_1+ \mu d/2]$
    where $t_0$ is the time coordinate of the earliest unhappy detector and $t_1$ is that of the latest.
\end{proof}

\begin{remark}\label{rem:measurementErrors}
    We have not discussed any measurement errors so far,
    but they can be accounted for by a conjugating Pauli error that anticommutes with 
    the measurement operator if all measurement operators at a time step have nonoverlapping support.
    When the weight means the number of nonidentity Pauli factors in an error,
    this means that a measurement error corresponds to a Pauli error of weight~$2$.
    If a measurement dynamics contains overlapping measurement operators at a time step,
    the representation of measurement errors by conjugating Pauli
    may not be appropriate since the conjugating Pauli error may
    correspond to two or more measurement outcome flips.
    In that case, one can introduce fictitious time steps
    to make the measurement operators nonoverlapping.
\end{remark}

\begin{remark}\label{rem:ComputingFloquetCodeDistance}
    Computing the code distance of a Pauli stabilizer code is in general inefficient~\cite{Iyer2013},
    and for Floquet code it is only more complex.
    \footnote{
        This is a necessary consequence 
        since a Floquet code can express all syndrome measurement gadgets along with data qubits
        and furthermore erases the distinction between data and ancilla qubits.
    }
    A new problem is that the time axis is infinite.
    However, thanks to periodicity, some finite computation is enough.
    Suppose that an undetectable error~$E$ is a product~$A B$ 
    where $A$ is supported on a time interval~$[a_0,a_1]$ and $B$ on~$[b_0,b_1]$.
    If $a_1 < b_0 - \mu$ where $\mu$ is a uniform bound on inference time window
    (\cref{thm:InferenceWindow}),
    then each of~$A$ and~$B$ must be undetectable.
    Thus, in the computation of the code distance of a Floquet code,
    it suffices to consider errors~$E$ that do not allow such decomposition,
    implying that the time support of~$E$ must be ``connected.''
    Let $d_0$ be the minimum of the code distances of all~$\ISG$'s in the steady stage.
    We see that it suffices to consider all undetected errors supported on~$[T, T + d_0 \mu]$
    to determine the code distance of a Floquet code.
    In practice, one must further restrict the time window to ease the search
    by, for example,
    (i) considering subsystem codes defined by two neighboring~$\ISG$'s to reduce~$d_0$,
    (ii) optimizing the inference window of detectors,
    and
    (iii) lower bounding the weight on each time slice 
    for an error to be undetectable from immediately following detectors.
    To test if an undetectable error is benign,
    one can use either~\cref{rem:LogicalFailure} or~\cref{thm:BenignInAWindow}.
\end{remark}

\begin{proposition}\label{thm:BenignInAWindow}
    For any bounded-inference dynamical code with inference window width~$\mu$,
    all benign errors supported on a time window~$[a,b]$ in the steady stage
    are generated by benign generators supported on~$[a - \mu, b + \mu]$.
\end{proposition}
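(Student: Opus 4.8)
The plan is to take an arbitrary benign error $E$ supported on $[a,b]$, write it as a finite product $E = G_1 G_2 \cdots G_N$ of benign generators (vacuous and sandwiching errors) supported anywhere in spacetime, and show that one can replace this product by another product of benign generators all of whose supports lie in $[a-\mu, b+\mu]$, without changing $E$ as an element of $\calE$. The key structural fact I would exploit is that benign generators supported outside the window must ``cancel in bulk,'' and the cancellation can be localized using the bounded-inference hypothesis together with \cref{thm:ISGisBenign} and \cref{thm:PushingUndetectable}.

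First I would handle generators supported strictly in the future, at times $t > b+\mu$. Consider the latest time step $t^\ast > b + \mu$ at which some generator of the product has support, and let $F$ be the product of all generators with support at time $t^\ast$; since no generator in the product is supported later, and since $E$ itself is supported on $[a,b]$, the operator $F$ must in fact be supported on the single time slice $t^\ast$ and equal to the $t^\ast$-component of (a product of) the remaining generators. By the converse part of \cref{thm:ISGisBenign} --- or rather its steady-stage strengthening --- $F$ is, up to sign, an element of $\barISG(t^\ast)$; but then by the bounded-inference property $F$ is a product of benign generators supported on $[t^\ast - \mu, t^\ast]$. This does not immediately shrink the window, so the real move is: because $t^\ast > b + \mu$, the error $E$ restricted to times $\le t^\ast - 1$ is still benign, and I can instead argue that the generators past $b+\mu$ can be ``pushed back'' onto the window. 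Concretely, I would use \cref{thm:PushingUndetectable} in reverse: a benign error supported on $[a,b]$ pushed forward to $[a', b']$ differs from the original by benign generators supported on $[0,a']$; running this with $a' = b+\mu$ and comparing with the future-supported representation shows the future generators were redundant modulo generators inside $[a-\mu, b+\mu]$.

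Symmetrically I would handle generators supported strictly in the past, at times $t < a - \mu$. Here the controlling principle is \cref{thm:Stabilization}: an undetectable (in particular benign) error supported at a single early time step $\tau$ has, by time $\tau + \mu + 1$, produced a state inside the instantaneous code space, which forces its residual effect to be an element of $\ISG$; and an element of $\ISG(\tau')$ for $\tau' \le a - 1$ is, by bounded inference, a product of benign generators supported on $[\tau' - \mu, \tau'] \subseteq [a - \mu - 1, a-1]$. Iterating from the earliest generator inward, together with linearity of $\ance$ (\cref{thm:anceIsLinear}) to keep the bookkeeping $\FF_2$-linear, absorbs every past generator into ones supported on $[a-\mu, b]$. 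Combining the two directions gives a representation of $E$ by benign generators supported entirely on $[a - \mu, b + \mu]$.

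The main obstacle I anticipate is the interaction between the past-cleanup and future-cleanup steps: pushing a past generator forward can create new support near time $b$, and absorbing a future generator can create new support near time $a$, so a naive iteration might not terminate. I would resolve this by ordering the elimination carefully --- first clear everything at times $> b+\mu$ working from the latest inward (each step strictly decreases the maximal time of support among out-of-window generators, with no effect on times $< a$ because those pushes only add generators supported on $[0, b+\mu]$ and in fact, by \cref{thm:PushingUndetectable}'s precise statement, on the window plus one early $\ISG$ factor which is itself in-window by bounded inference), and only then clear everything at times $< a - \mu$ working from the earliest outward (each such step adds generators supported on $[a-\mu-1, a-1]$, never reaching beyond $b+\mu$). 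Because each phase is a strictly monotone finite process and the second phase cannot reintroduce out-of-window future support, the procedure terminates with the claimed localized representation.
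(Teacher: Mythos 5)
Your plan --- take an arbitrary finite expansion $E = G_1 \cdots G_N$ of benign generators and iteratively rewrite it until all generators live in $[a-\mu,\, b+\mu]$ --- is structurally different from the paper's proof, which does not edit an expansion but directly \emph{constructs} one by pushing $E$ forward twice and accounting for the residues. The editing route as you sketch it has one false intermediate claim and one substantive missing step.

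The false claim: the product $F$ of all generators touching the latest time $t^\ast$ is \emph{not} supported on the single slice $\{t^\ast\}$. A sandwiching generator touching $t^\ast$ is supported on $\{t^\ast - 1, t^\ast\}$, and while the $t^\ast$-component of $F$ is indeed forced to be trivial (since $E$ is trivial past $b$ and nothing else contributes at $t^\ast$), the $(t^\ast-1)$-components of those sandwiching errors need not multiply to the identity. You abandon this route, but what replaces it is the bigger problem. Pushing $E$ forward via \cref{thm:PushingUndetectable} to an error $E'$ supported near $b+\mu$ tells you that $E'E^\dagger$ is a product of benign generators inside the window; it does \emph{not} tell you that $E'$ itself is. $E'$ is benign and supported on one time step, but a priori a one-time-slice benign error can be a product of benign generators reaching arbitrarily far into the future. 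The step that pins $E'$ down is exactly the chain you never invoke: push to time $b+\mu$ so that, by \cref{thm:NonbenignUndetectedLogical}, the resulting one-time-step error commutes with every element of $\ISG(b+\mu)$; then, because it is also benign and in the steady stage, \cref{thm:BenignLogicalIsStabilizer} forces it to lie in $\barISG(b+\mu)$; and bounded inference then writes it as benign generators on $[b, b+\mu]$. Without \cref{thm:BenignLogicalIsStabilizer} your future cleanup has no termination guarantee. Once the forward argument is closed in this way, your separate ``past cleanup'' phase is unnecessary: the single application of \cref{thm:PushingUndetectable} already isolates the past contribution as an element of $\ISG(a)$, which bounded inference localizes to $[a-\mu, a]$.
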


See \cref{sec:LadderCode} for an example.

\begin{proof}
    Let $E$ be a benign error supported on~$[a,b]$.
    We know that $E$ is undetectable by~\cref{thm:BenignIsUndetectable},
    and hence is equivalent to an error~$F_b$ supported on one time step~$b$
    such that $F_b E^\dagger$ is a product of an element of~$\barISG(a)$
    and some benign generators supported on~$[a,b]$
    by~\cref{thm:PushingUndetectable}.
    Since the dynamical code is assumed to be bounded-inference,
    $F_b E^\dagger$ is a product of benign generators supported on~$[a-\mu,b]$.
    We may push~$F_b$ further onto the time step~$b+\mu$
    to obtain~$G_{b+\mu}$ that commutes with every element 
    of $\ISG(b+\mu)$ by~\cref{thm:NonbenignUndetectedLogical}.
    $G_{b+\mu} F_b^\dagger$ is a product of benign generators on~$[b-\mu,b+\mu]$.
    Since $G_{b+\mu}$ is benign, 
    $G_{b+\mu}$ is an element of $\barISG(b+\mu)$
    by~\cref{thm:BenignLogicalIsStabilizer}
    and hence is a product of benign generators on~$[b,b+\mu]$.
    We have found a decomposition $E = (E F_b^\dagger) (F_b G_{b+\mu}^\dagger) G_{b+\mu}$
    of~$E$ in terms of benign generators on~$[a-\mu,b+\mu]$.
\end{proof}

\section{Examples}

Here, we examine three Floquet codes.
The first is a toy code from~\cite{Hastings2021}
that is called a ladder code. 
We include this mainly for pedagogical purposes.

The second example is a planar version of the honeycomb code
from~\cite{Vuillot2021}.
This example was noted to have a constant-weight spacetime error that causes logical failure,
but an explicit calculation seemed complicated.
In view of previous results~\cite{Hastings2021,Fuente2024}, 
it would seem necessary that one has to first calculate all detectors,
find a suitable undetectable spacetime error, 
and then show that the error induces nontrivial logical action.
Using our results, the calculation for this example becomes simple:
we just have to find an equivalent spacetime error that is a logical (trivial or not)
operator of some instantaneous stabilizer group
by ``pushing'' the error towards the future after multiplying by suitable benign errors.
By~\cref{thm:equivalentLogicals,thm:NonbenignUndetectedLogical},
this pushing procedure will reveal a unique logical operator (trivial or not)
if and only if the original error was undetectable.
This way, we can bypass the calculation of any detectors 
and the logical action of the error becomes evident.

The third example is the Floquet Bacon--Shor code of~\cite{Alam2024}.
An upper bound on the spacetime code distance was shown,
and numerical evidence was given that the upper bound was sharp.
By examining the requirement for an undetectable error 
to be equivalent to another error supported arbitrarily far in the future,
we will give a simple proof that the reported upper bound is indeed the spacetime code distance 
of the Floquet Bacon--Shor code.

\subsection{Ladder code}\label{sec:LadderCode}

\begin{figure}[ht]
  \centering
  \includegraphics[width=0.75\textwidth]{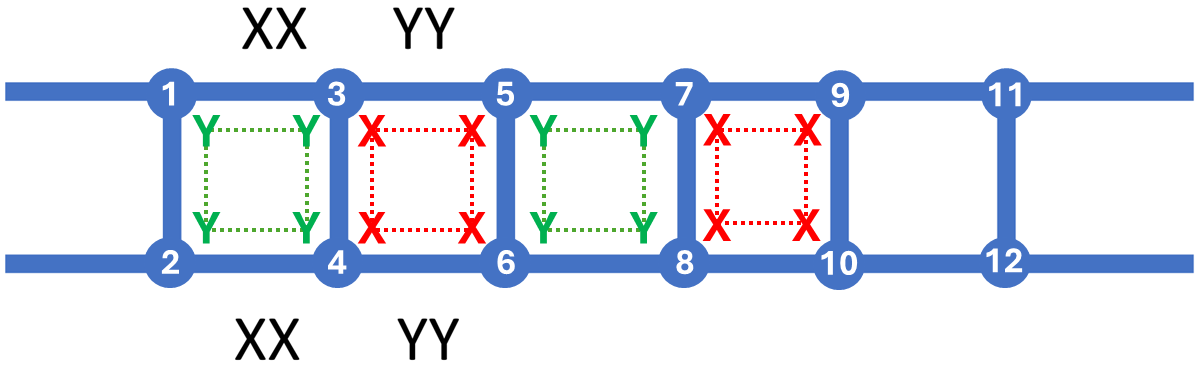}
  \caption{Ladder code for $m=3$.  Qubits are associated to vertices, enumerated as shown.  Measurement operators on vertical legs are $ZZ$ on the two qubits.  
          Measurement operators on horizontal legs are alternately $XX$ or $YY$; 
          a couple are shown. 
          Persistent stabilizers of the parent subsystem code are weight-4 plaquettes of type~$Y$ (green) or~$X$ (red) 
          which are not measured directly, 
          but rather inferred through the measurement schedule.
  }
  \label{fig:ladder}
\end{figure}

There are $n=4m$ physical qubits placed on vertices 
with the periodic boundary conditions as shown in~\cref{fig:ladder}; 
the ends of the legs are connected. 
The faces are labeled in an alternating $0$ and $1$ pattern. 
The period $P=4$ measurement schedule is as follows.
\begin{itemize}
    \item ($t = 1 \bmod 4$) Measure two-qubit $ZZ$ on the rungs.
    \item ($t = 2 \bmod 4$) Measure two-qubit $XX$ on horizontal leg segments straddling $0$-labeled faces.
    \item ($t = 3 \bmod 4$) Measure two-qubit $ZZ$ on the rungs.
    \item ($t = 4 \bmod 4$) Measure two-qubit $YY$ on horizontal leg segments straddling $1$-labeled faces.
\end{itemize}
The code has persistent stabilizers shown in~\cref{fig:ladder}; 
we let $S_X$ and $S_Y$ denote the persistent $X$- and $Y$-type stabilizers, respectively.
With the exception of $i = 1$, 
$S_X$ is inferred following the measurement of~$\mathbf M_i$ for $i = 1,4 \bmod 4$ 
and $S_Y$ is inferred following the measurement of~$\mathbf M_i$ for $i = 2,3 \bmod 4$. 
The inference window is sliding with constant width $\mu = 2$, 
meaning persistent stabilizers are inferred every round (except $i=1$) 
and that the inference is based on measurements taken at time steps $i$ and $i-1$ only. 
When $n = 4m$, we have $|S_Y| = |S_X| = m$.

\cref{thm:BenignInAWindow} says that the set of benign errors
supported on a time interval $[a,b]$ is generated by benign generators
supported on a larger time window $[a-\mu, b+\mu]$.
Let us see why this larger time window is necessary.
Let $a = 3$ and $b = 4$. 
Consider a spacetime error
\begin{equation}
    E = E_a E_b= \left[ X^1 X^2 X^3 X^4 \right]_{a} \left[ X^1 X^2 X^3 X^4 \right]_b \, .
\end{equation}
Note that $E$ is \emph{not} a sandwiching error
since $X^1 X^2 X^3 X^4$ does not commute with~$Y^3 Y^5 \in \mathbf M_4$.
Still, $E$ is benign; observe
\begin{itemize}
    \item[(i)] $E_a \in \ISG(a)$; 
    indeed, it is the product of a persistent stabilizer $Y^1 Y^2 Y^3 Y^4$ 
    and the measurement operators $Z^1 Z^2$ and $Z^3 Z^4$ measured at time $t=a$; and

    \item[(ii)]  $E_b \notin \ISG(b)$, but it commutes past $\mathbf M_{b+1}$. 
    Since $[E_b]_{b+1} \in \barISG(b+1)$, 
    it is then reabsorbed and has no lasting effect on the history of physical states.
\end{itemize}
By~\cref{thm:ISGisBenign}, the error~$E$ decomposes 
into a product of benign generators. 
Ignoring signs, we have 
\begin{align}
    E_{a=3} &= \left[ Z^1 Z^2 \right]_a \left[ Z^3 Z^4\right]_a \left[Y^1 Y^2 Y^3 Y^4\right]_a \, ,\\
    \ance_{a=3}^{b+1=5}([E_b]_{b+1}) 
    &= \left[ Z^1 Z^2 \right]_{b+1} \left[Z^3 Z^4 \right]_{b+1} 
       \left[ Y^1 Y^2 Y^3 Y^4 \right]_{a}. \nonumber
\end{align}
Hence, although $E$ is a benign error supported on~$[a,b]$, 
it is a product of vacuous errors at~$t=b+1$ and~$t=a$ 
with sandwiching errors in~$[a,b+1]$.

We next demonstrate the ``pushing'' algorithm that appears in the proof of~\cref{thm:PushingUndetectable}.
Let $\tau = 4j$ for some $j \ge 1$ corresponds to a time step 
for which the $YY$ measurements $\mathbf M_4$ are implemented. 
Consider a spacetime error of weight~$7$ given by
\begin{equation}\label{eqn:HeavyLadderError}
    E_{[\tau,\tau+3]} = 
    \left[ Z^6 X^7 Y^8 \right]_{\tau+2}
    \left[ Z^4 Z^8 \right]_{\tau+1} 
    \left[ X^3 X^6 \right]_{\tau} 
\end{equation}
It may not be obvious how to go about determining 
whether the error is undetectable, and if so, 
whether it induces a nontrivial logical action.
Let us try to push $E_{[\tau,\tau + 3]}$ into the future.
\begin{itemize}
    \item Between~$\tau$ and $\tau+1$: 
    $X^3 X^6$ anticommutes with $Z^3 Z^4, Z^5 Z^6 \in \mathbf M_1$; 
    accordingly, we insert the vacuous error $V_\tau = [Y^3 Y^5]_\tau$ so that
    \[
        V_\tau \cdot [X^3 X^6]_\tau = [Y^3 Y^5]_\tau [X^3 X^6]_\tau = [Z^3 Y^5 X^6]_\tau
    \]
    now commutes with all measurement operators in $M_1$. 
    Since $V_t \cdot [X^3 X^6]_\tau$ commutes with $M_1$, we may insert the sandwiching error 
    $W_{\tau+1} = \left[Z^3 Y^5 X^6 \right]_\tau \left[Z^3 Y^5 X^6 \right]_{\tau+1}$
    to cancel the time-$\tau$ component.

    \item Between~$\tau + 1$ and~$\tau+2$:
    the original component~$[Z^4 Z^8]_{\tau+1}$ is modified
    to $[Z^3 Z^4 Y^5 X^6 Z^8]_{\tau+1}$.
    We insert a vacuous error 
    $V_{\tau+1} = [Z^3 Z^4]_{\tau+1} [Z^7 Z^8]_{\tau+1}$ 
    to ensure commutativity with~$\mathbf M_2$:
    \[
        V_{\tau+1}\cdot W_{\tau+1} \cdot V_t \cdot \left[ Z^4 Z^8 \right]_{\tau+1} \left[ X^3 X^6 \right]_{\tau} = \left[ Y^5 X^6 Z^7 \right]_{\tau+1}.
    \]    
    We insert a sandwiching error
    $ W_{\tau+2} = \left[Y^5 X^6 Z^7\right]_{\tau+1} \left[Y^5 X^6 Z^7 \right]_{\tau+2}$.
    \item At~$\tau+2$:
    finally, we obtain
    \[
        E'_{[\tau+2]} = W_{\tau+2} \cdot  V_{\tau+1} \cdot W_{\tau+1} \cdot V_t \cdot E_{[\tau,\tau+2]} = \left[Y^5 Y^6 Y^7 Y^8 \right]_{\tau+2} \in \left[ S_Y \right]_{\tau+2}.
    \]
\end{itemize}
Thus, we have shown that the spacetime error~$E_{[\tau,\tau+2]}$ 
is equivalent to a persistent stabilizer at time~$\tau+2$.
Therefore, $E_{[\tau,\tau+2]}$ is benign and hence is an undetectable (correctable) error.

\subsection{Planar honeycomb code by Vuillot}

Vuillot's code~\cite{Vuillot2021} is shown in~\cref{fig:vuillot-honeycomb}, 
with physical qubits associated with the vertices.
It is a planar variant of the honeycomb code~\cite{Hastings2021}.
The measurement schedule has period $P=3$, 
cycling through measurements of $XX, ZZ, YY$, in order. 
At each time step, the bulk protocol is identical to~\cite{Hastings2021} 
while the boundary introduces a pair of defects, 
denoted by circles on opposite ends of the hexagonal patch.
These defects, called \emph{corners}, 
are single-qubit measurements in the same Pauli basis associated with the given time step.
As in~\cite{Hastings2021}, 
the instantaneous codes remain Clifford-equivalent to a surface code, with distance $\Theta(\sqrt{n})$.
\Cref{fig:vuillot-honeycomb} shows the Clifford-equivalent surface code at each time step; 
in the figure, 
the distance of each instantaneous code is~5.
Hence, the spacetime distance~$\delta$ is bounded as $2 \leq \delta \leq \Theta(\sqrt{n})$.

\begin{figure}[ht]
    \centering
    \includegraphics[width=0.6\linewidth]{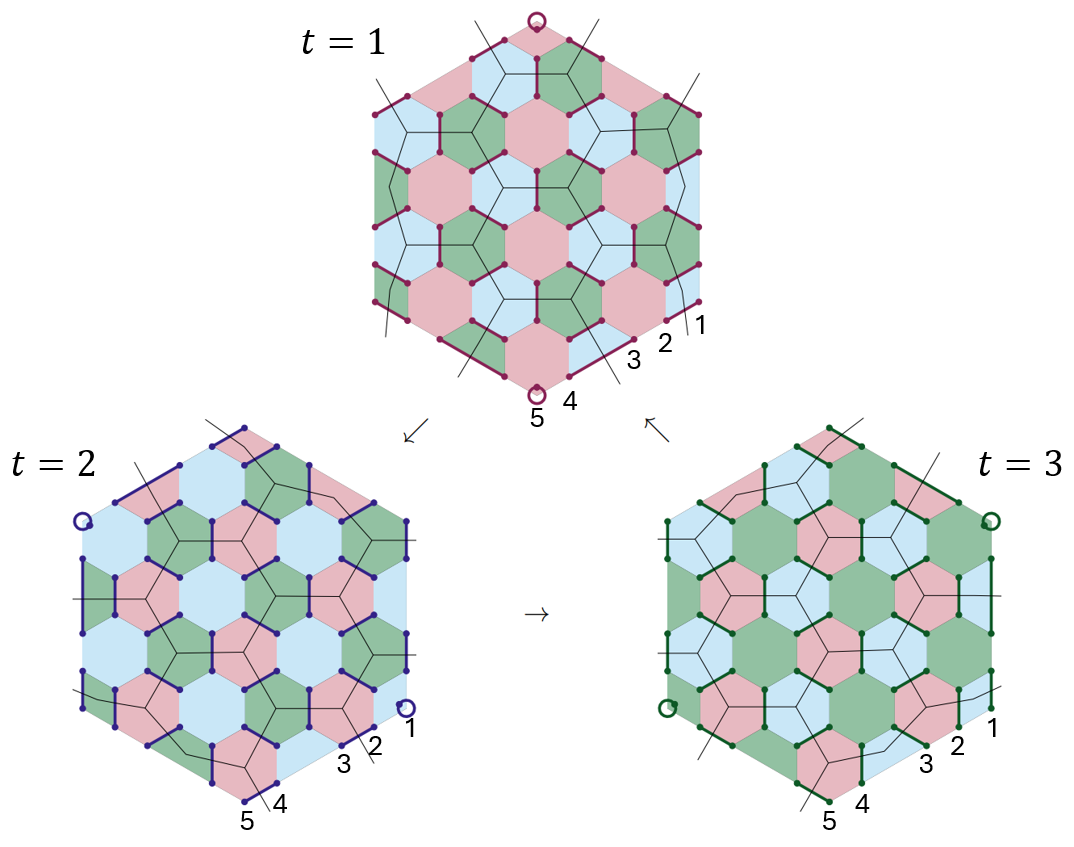}
    \caption{Measurement schedule of planar-boundary honeycomb variant. Measurement operators at each time step are two-body measurements, with a pair of single-qubit measurements at opposite corners. Red, blue, and green measurement operators are supported in $X$, $Z$, and $Y$, respectively. Thin black lines represent the equivalent hexagonal surface code describing the state at each time step. Illustration provided by C. Vuillot and used with permission.
    }
    \label{fig:vuillot-honeycomb}
\end{figure}
 
We are going to prove that a spacetime error
\[
    E = \left[ X^1 \right]_1 \left[ Y^1 \right]_3
\]
is undetectable and induces logical failure, so $\delta = 2$,
underscoring that large instantaneous code distances 
are insufficient for a large spacetime code distance.
Here, the qubits on the hexagonal patch are labeled as shown 
in~\cref{fig:vuillot-honeycomb}.
The proof of the claim follows a similar calculation as in the ladder code.
We will find equivalent errors following the procedure in the proof of~\cref{thm:PushingUndetectable},
until we see that an equilvalent error is a logical operator~$L$ of some~$\ISG(t)$.
Then, since a logical operator of an~$\ISG$ is always undetectable,
and by~\cref{thm:BenignIsUndetectable} benign errors are undetectable,
it follows that $E$ itself is undetectable.
\Cref{thm:benignDoesNothing} implies that the action on the physical state by~$E$ 
is the same as that by~$L$, and hence $E$ causes a logical failure.

Below, we will write shorthand $X^{1,2,3,5}$ and $Y^{2,3}$, {\it etc.}, 
in place of~$X^1 X^2 X^3 X^5$ and $Y^2 Y^3$.

\begin{itemize}
    \item $\mathbf{t=1}$: 
    the error $\left[ X^1 \right]_1$ is introduced to the system. 
    $X^1$ does not commute with the $Z$-measurement at time $t=2$; 
    following~\cref{thm:PushingUndetectable}, 
    we multiply $\left[X^1\right]_1$ 
    by a vacuous error $\left[ X^{1,2,3,4,5} \right]_1 \in \ISG(1)$, 
    which is a product of measurement operators $X^{1,2}, X^{3,4}, X^{5}$ at $t=1$. 
    The resulting error is given by
    \[
        \left[X^1\right]_1 \left[ X^{1,2,3,4,5} \right]_1 = \left[ X^{2,3,4,5} \right]_1
    \]
    which commutes with the $Z$-type measurement operators at time $t=2$.

    \item $\mathbf{t=2}$: 
    the error $\left[ X^{2,3,4,5} \right]_1$ commutes with the $Z$-measurements 
    and commutes forward to become $\left[ X^{2,3,4,5} \right]_2$ 
    via a sandwiching error~$W$; 
    explicitly,
    \[
        W_2 = \left[ X^{2,3,4,5} \right]_1 \left[ X^{2,3,4,5} \right]_2.
    \]
    The error $\left[ X^{2,3,4,5} \right]_2$ does not commute with the $Y$-measurements at~$t=3$, 
    so we again apply~\cref{thm:PushingUndetectable} 
    and multiply by a vacuous error $\left[ Z^{2,3,4,5} \right]_2$, 
    the product of measurement operators $Z^{2,3}$, $Z^{4,5}$ at~$t=2$.
    The resulting error is given by
    \[
        \left[ X^{2,3,4,5} \right]_2 \left[ Z^{2,3,4,5} \right]_2 = \left[ Y^{2,3,4,5} \right]_2.
    \]

    \item $\mathbf{t=2}$: 
    the error $\left[ Y^{2,3,4,5} \right]_2$ commutes with the $Y$-measurements at time $t=3$,
    so we may introduce a sandwiching error 
    $W_3 = \left[ Y^{2,3,4,5} \right]_2 \left[ Y^{2,3,4,5} \right]_3$ 
    to commute the error forward to become $\left[ Y^{2,3,4,5} \right]_3$.
    At this point, the error $\left[ Y^1 \right]_3$ is introduced,
    resulting in $\left[ Y^{1,2,3,4,5} \right]_3$.
\end{itemize}

Thus, $E = \left[ X^1 \right]_1 \left[ Y^1 \right]_3$ is equivalent to 
$E' = \left[ Y^{1,2,3,4,5} \right]_3$, which is a nontrivial logical operator of~$\ISG(3)$.
This completes the proof of the claim that~$\delta = 2$.

\begin{figure}[ht]
    \centering
    \includegraphics[width=0.8\linewidth]{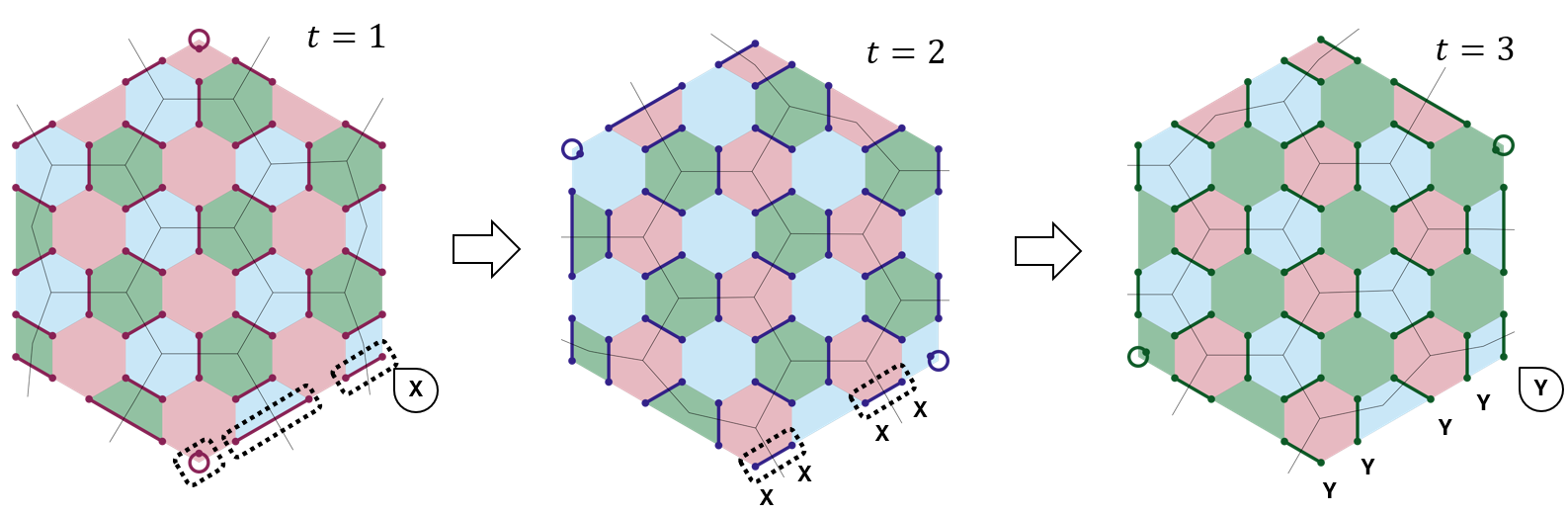}
    \caption{Compression of weight-2 spacetime error supported on time window $\{1,2,3\}$ 
        to a single-time error at time $t=3$, 
        at which point the error is a logical of $\ISG(3)$.
        Past supports of the spacetime error are commuted forward 
        by multiplying the error by measurement operators in the dashed boxes, 
        following the procedure of~\cref{thm:PushingUndetectable}.
        Teardrop arrows indicate the introduction of spacetime error. 
        Illustration provided by C. Vuillot and used with permission.
    }
    \label{fig:vuillot-spacetime-err}
\end{figure}

\subsection{Floquet Bacon--Shor code}

Alam and Rieffel~\cite{Alam2024} construct a Floquet variant of 
the Bacon--Shor code~\cite{Bacon2006}, a prototype of subsystem codes. 
While stabilizer codes directly measure the stabilizer elements, 
subsystem codes decompose the stabilizer elements over a basis of so-called \emph{gauge operators}, 
which span a generally nonablelian subgroup of the Pauli group. 
It is shown that the Floquet variant of the Bacon--Shor code can support $k$ dynamical logical qubits 
(in addition to one static logical qubit) 
through a 4-periodic measurement schedule with $k$ ``gauge defects''. 
Here we only discuss the case with $k=1$.
The code is defined on a $d \times d$ grid for odd $d = 2c+1$; 
physical qubits are associated to vertices.
The measurement schedule is shown in~\cref{fig:FBS-schedule}.

\begin{figure}[ht]
    \centering
    \includegraphics[width=1\linewidth]{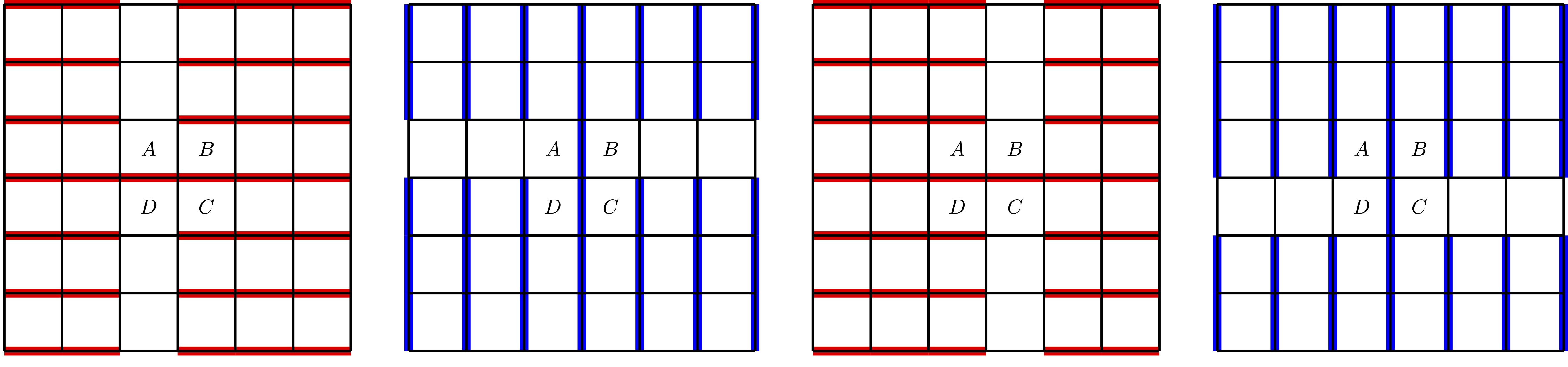}
    \caption{4-periodic measurement schedule of the Floquet Bacon--Shor code. 
        At times $t=1,3 \bmod 4$, 
        $XX$ is measured on the horizontal line segments. 
        When $t=2,4 \bmod 4$, 
        $ZZ$ is measured on the vertical line segments. We denote each set of measurements by $M_i$, $i \in [4]$.
        Reproduced from Figure~3 of Alam and Rieffel~\cite{Alam2024}, licensed under CC-BY 4.0.
    }
    \label{fig:FBS-schedule}
\end{figure}

\cite{Alam2024} proved that the Floquet Bacon-Shor code has spacetime distance bounded above by $2c$, 
which was validated in simulation.
Moreover, they conjecture the bound is tight. 
We prove this conjecture by applying \cref{thm:NonbenignUndetectedLogical},
demonstrating its use in the distance analysis of Floquet codes.

\begin{proposition}\label{thm:FBSDistance}
    Let $d \geq 3$ be odd.
    The spacetime distance of the Floquet Bacon--Shor code encoding two logical qubits 
    (one static and one dynamic) on a $d \times d$ square lattice is $\delta = d-1$.
\end{proposition}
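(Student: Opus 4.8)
The inequality $\delta \le 2c = d-1$ is proved in~\cite{Alam2024} by exhibiting an explicit weight-$2c$ undetectable error that is nonbenign, so it remains to establish the lower bound $\delta \ge d-1$: every undetectable, nonbenign spacetime error inserted in the steady stage has weight at least $d-1$. The plan is to first identify the steady-stage instantaneous stabilizer groups of the schedule in~\cref{fig:FBS-schedule} together with their logical operators, then use~\cref{thm:NonbenignUndetectedLogical} to reduce the distance question to a weight bound on \emph{arbitrary} spacetime representatives of the finitely many nontrivial logical classes, and finally prove that bound by a spacetime ``cut'' argument.

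First I would describe $\ISG(t)$ in the steady stage. Every horizontal $XX$ commutes with every full row-pair $Z$ operator, and every vertical $ZZ$ commutes with every full column-pair $X$ operator, so the ordinary Bacon--Shor stabilizers (adjacent column-pairs in $X$, adjacent row-pairs in $Z$) persist through all measurements; the single gauge defect of the $k=1$ construction breaks exactly one Bacon--Shor stabilizer into two commuting halves, which is what supplies the extra, dynamic logical qubit. Hence $\ISG(t)$ is generated by these (defect-modified) Bacon--Shor stabilizers together with the most recent round $M_i$ of measurements and those inferable from the previous round, and its logical group is spanned by the static $\bar{X}$ (a full column of $X$), the static $\bar{Z}$ (a full row of $Z$), and a dynamic logical supported along the line through the defect. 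By~\cref{thm:NonbenignUndetectedLogical} any undetectable, nonbenign error $E$ in the steady stage is equivalent to a single-time-step operator $L_t$ that is a nontrivial logical of $\ISG(t)$, and by~\cref{thm:PushingUndetectable} we may push $t$ as far into the future as we like; using~\cref{thm:CSSmeasurements} we may also split $E = E_X E_Z$ into an undetectable $X$-type and $Z$-type part, at least one of which is nonbenign. It therefore suffices to lower bound the weight of an undetectable, nonbenign, pure-type spacetime error in terms of the logical class it represents.

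For the static classes, undetectability pins down the commutation of the error with the detector probes (\cref{thm:HowDetectorIsTriggered}) coming from the persistent $Z$ row-pair and $X$ column-pair checks; a cut argument across the rows and columns of the grid then shows that any spacetime representative of $\bar{X}$ or of $\bar{Z}$ has weight at least $d$, which exceeds $d-1$. The crux is the dynamic class. Here I would construct a family of $2c$ nested ``spacetime cuts'' separating the gauge defect from the far side of the grid --- one cut per lattice step --- such that any undetectable error equivalent to the dynamic logical must intersect each cut at a spacetime location used by no other cut; its weight is then at least $2c = d-1$. Checking this count against the Alam--Rieffel weight-$2c$ representative confirms both the cut geometry and the tightness of the bound, giving $\delta = d-1$.

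The main obstacle is that the equivalence relation does not preserve weight: pushing an error forward in time via~\cref{thm:PushingUndetectable} multiplies it by weight-$2$ measurement operators, so one cannot simply quote the code distances of the instantaneous Bacon--Shor codes --- the weight bound has to hold for \emph{every} spacetime representative of the relevant logical class, however it is smeared across spacetime, and the cut argument is precisely what is built to deliver that. I expect the delicate point to be showing that the $2c$ cuts are pairwise disjoint as sets of spacetime locations and that no single Pauli factor of an undetectable error can be charged to two cuts at once; matching the cut geometry to the $4$-periodic schedule near the defect will take some care. A secondary, more routine obstacle is making the CSS reduction of~\cref{thm:CSSmeasurements} rigorous when the error has $Y$ components, which I would dispatch by decomposing each single-qubit factor into its $X$- and $Z$-parts from the start.
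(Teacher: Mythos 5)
Your outer scaffolding matches the paper (split into undetectability plus nonbenign, invoke \cref{thm:NonbenignUndetectedLogical}/\cref{thm:PushingUndetectable}, use the CSS-ness to split into $X$- and $Z$-type, account for the three logical classes), but the heart of the lower bound---the part you yourself flag as ``the crux''---is left as a plan rather than a proof, and the plan you sketch is different from, and noticeably heavier than, what the paper actually does. You propose a global-in-spacetime ``cut'' argument: build $2c$ nested spacetime cuts through the defect, argue each cut is crossed at a distinct location, and infer weight $\ge 2c$. You haven't constructed these cuts, you haven't verified that they are pairwise disjoint for the $4$-periodic schedule, and you haven't ruled out that a single Pauli factor gets charged to two cuts---precisely the delicate points you anticipate. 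As written, this is a gap, not a proof.

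The paper's argument is considerably more local and shorter. It first proves a reduction lemma (\cref{thm:FBSSimplification}): any $X$-type undetectable error whose slices are not already static logicals or persistent stabilizers (which would already have weight $\ge 2c+1$) is equivalent, without increasing weight, to an error each of whose nonidentity slices anticommutes with the next round and is supported on the far left/right columns of the grid. Then the key lemma (\cref{thm:FBSSTuck}) shows that in this normal form, if the first nonidentity slice $E_a$ has weight $< 2c$, then no stabilizer $S \in \ISG(a)$ makes $SE_a$ commute with $\mathbf M_{a+1}$---by inspecting the commutant of $\mathbf M_2$ (vertical strings) and the unique $\langle \mathbf M_1\rangle$ element supported on the edges. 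But \cref{thm:PushingUndetectable} forces such an $S$ to exist for any undetectable error, contradiction. So the weight bound drops out of a single time slice; there is no need to chase the error across $O(c)$ rounds with nested cuts. If you want to salvage your approach, you would first have to supply the missing cut geometry and the disjointness proof; but you would do better to look for the paper's local obstruction: show the reduced first slice is ``stuck'' unless its weight is $\ge 2c$.
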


We first show the following lemma. 
Let $d = 2c+1$ and index the vertices of the $d \times d$ square lattice by $[-c,c]^2$.
We say that an error $O = (O_{t_0}, \ldots, O_{t_1})$ is of $X$-type 
if $O_i \in \lbrace I, X \rbrace^{\otimes d^2}$ for all $i \in [t_0, t_1]$.

\begin{lemma}\label{thm:FBSSimplification}
    Let $\Tilde{E} = (\Tilde{E}_i)_{i \in [\alpha,\beta]}$ 
    be an $X$-type spacetime error supported on odd time steps
    such that each $\Tilde{E}_i \neq I$ 
    is neither a static logical operator nor a persistent stabilizer of the Bacon--Shor code 
    on the $d \times d$ grid.
    Then, there exists an equivalent $X$-type spacetime error 
    $E = (E_i)_{i \in [a,b]}$ of weight at most that of~$\tilde E$ such that
    \begin{itemize}
        \item[(i)] 
        every nonidentity $E_i$ anticommutes with some element of~$\mathbf M_{i+1}$; and
        \item[(ii)] $E_i$ is supported on the set $\mathcal{V}$ of far left, right vertices 
        of the grid, 
        except for the midpoint of the right edge  (see~\cref{fig:FBS-aux}).
    \end{itemize}
\end{lemma}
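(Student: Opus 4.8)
The plan is to simplify $\tilde E$ one time slice at a time, using the horizontal $XX$-measurements at odd times as vacuous (hence benign, \cref{thm:BenignIsUndetectable}) errors to put each slice in a canonical form, and then discarding the slices that turn out to be benign. Since $\tilde E$ is $X$-type, each $\tilde E_i$ commutes with every operator in $\mathbf M_i$, so multiplying $\tilde E_i$ by an arbitrary product of elements of $\mathbf M_i$ is multiplication by a product of vacuous errors and preserves the equivalence class; I would therefore replace $\tilde E_i$ by a minimum-weight representative $F_i$ of its coset modulo $\langle \mathbf M_i\rangle$. The weight on slice $i$ does not increase, and since this touches no other slice and introduces no support at even times, the result is still $X$-type and supported on odd steps. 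The crux is a finite, homology-style computation for the fixed measurement pattern of \cite{Alam2024}: one must show that a coset modulo $\langle \mathbf M_i\rangle$ is classified in the bulk by the per-row parity of the $X$-support, that a minimum-weight representative there consists of at most one $X$ per row which may be slid to any column in that row, and that near the single gauge defect the chain structure differs so that the minimum-weight representative is forced onto the boundary columns, landing in $\mathcal{V}$ and in particular avoiding the right-edge midpoint $(0,c)$ (there being no $XX$-measurement in the relevant round joining $(0,c)$ to the interior). Establishing this canonical-form statement is the step I expect to be the main obstacle; the rest is uniform bookkeeping.

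With the $F_i$ in canonical form, I would classify each by commutation with the next round $\mathbf M_{i+1}$ of vertical $ZZ$-measurements. Suppose $F_i$ commutes with every element of $\mathbf M_{i+1}$. Since $F_i$ is $X$-type and supported on $\mathcal{V}$, commuting with each vertical $ZZ$ forces $F_i$ to be constant along every column it meets; but a column-$c$ support constant along that column would have to contain $(0,c)\notin\mathcal{V}$, so $F_i$ is supported on the left column alone and is either $I$ or the full left column $\bar X_s$ (the static $X$-logical). In the first case $F_i\in\barISG(i)$, so it is benign by \cref{thm:ISGisBenign} and I discard the slice by multiplying $E$ by $[F_i^\dagger]_i$, strictly decreasing the weight. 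In the second case $F_i$ and $\tilde E_i$ differ by a product of measurement operators, hence determine the same class modulo the gauge group, so $\tilde E_i$ would itself be a static logical operator (and this argument likewise excludes persistent stabilizers), contradicting the hypothesis; so this case does not occur. If instead $F_i$ anticommutes with some element of $\mathbf M_{i+1}$, condition~(i) holds on slice $i$ and I keep it. Shrinking the time window from $[\alpha,\beta]$ to the interval $[a,b]$ spanned by the surviving slices, and noting that only benign one-slice errors have been multiplied in, one obtains an $X$-type error $E$ supported on odd steps, equivalent to $\tilde E$, of weight at most that of $\tilde E$, with~(i) satisfied.

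Condition~(ii) is then automatic: by the canonical-form statement every surviving $F_i$ is already supported on $\mathcal{V}$, and the right-column possibility was excluded precisely because $(0,c)\notin\mathcal{V}$. If, against expectation, the classification in the previous paragraph leaves a residual nontrivial slice commuting with its next round, the fallback is to push that slice forward two time steps by two sandwiching errors --- legitimate because it commutes with $\mathbf M_{i+1}$ and then, being $X$-type, with $\mathbf M_{i+2}$ --- and merge it into $F_{i+2}$, re-reducing there; this is exactly the ``pushing'' of \cref{thm:PushingUndetectable}, and the process terminates since the weight never increases. The only genuinely code-specific inputs are the canonical-form computation near the gauge defect (the homology of the $XX$-measurement pattern) and the resulting shape of $\mathcal{V}$; everything else reuses \cref{thm:ISGisBenign,thm:PushingUndetectable,thm:BenignIsUndetectable}.
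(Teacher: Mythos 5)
Your approach is recognizably the same two ingredients as the paper's --- vacuous errors (horizontal $XX$'s) to canonicalize each slice's support onto the boundary columns $\mathcal V$, and sandwiching errors to commute slices forward in time --- but you apply them in the opposite order and lean on a different mechanism to get condition~(i). The paper first pushes any slice commuting with $\mathbf M_{i+1}$ forward, observing that after at most a full period (four steps) the slice must anticommute with some subsequent round, since an $X$-type operator commuting with all four rounds commutes with the entire gauge group and is therefore a static logical or persistent stabilizer, which the hypothesis forbids; translation onto $\mathcal V$ is then a separate observation. You instead canonicalize onto $\mathcal V$ first and then argue that any $\mathcal V$-supported canonical form commuting with $\mathbf M_{i+1}$ must already be $I$ or $\bar X_s$, so the pushing step is only a fallback. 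That classification, if correct, is genuinely a tighter statement than the paper proves, but it rests on a structural claim you flag yourself as unverified: that the vertical $ZZ$'s in a single round $\mathbf M_{i+1}$ cover the boundary columns without a gap, so that commutation forces ``all or nothing'' per boundary column. The gauge defect of the Floquet Bacon--Shor code is exactly what could break this at one row, and the paper routes around the issue by never relying on a single-round classification.

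There are two concrete gaps you should fix. First, the ``fallback'' termination argument is wrong as stated: ``the weight never increases'' does not terminate a sequence of pushes, since a slice can be slid indefinitely forward at constant weight. The correct termination is the paper's period bound: if a nonidentity slice survives pushing through one full period (equivalently, commutes with $\mathbf M_{i+1}$ and $\mathbf M_{i+3}$, the only nontrivial constraints for $X$-type errors), it commutes with the full gauge group and is a static logical or stabilizer, contradicting the hypothesis on $\tilde E_i$. This bound is not optional; it is what makes the procedure well-defined. Second, your canonical-form claim needs either the explicit verification that the defect never touches the boundary columns (so the per-column all-or-nothing argument applies), or you should abandon the direct classification and just use bounded pushing as the primary mechanism for~(i), re-translating onto $\mathcal V$ after each push. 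Either patch yields a complete proof; the second is essentially the paper's.
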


\begin{proof}
For the first claim, observe that any $\Tilde{E}_i$ commuting with $M_{i+1}$ 
may be commuted forward by the sandwiching error $W_{i+1} = [\Tilde{E}_i ]_{i+1} \Tilde{E}_i$, 
which can only lower the spacetime weight of~$\Tilde{E}$. 
After $\tau \leq 4$ such steps, $\Tilde{E}_{i + \tau}$ must anticommute with $\mathbf M_{\tau + 1}$; 
otherwise, $\Tilde{E}_i$ is a static logical/stabilizer.
The second claim follows from observing that multiplication by $X$-type vacuous errors 
can translate the support of $\Tilde{E}_i$ to the far left, right vertices of the lattice 
without increasing the spacetime weight.
\end{proof}

\begin{figure}[ht]
    \centering
    \includegraphics[width=1\linewidth]{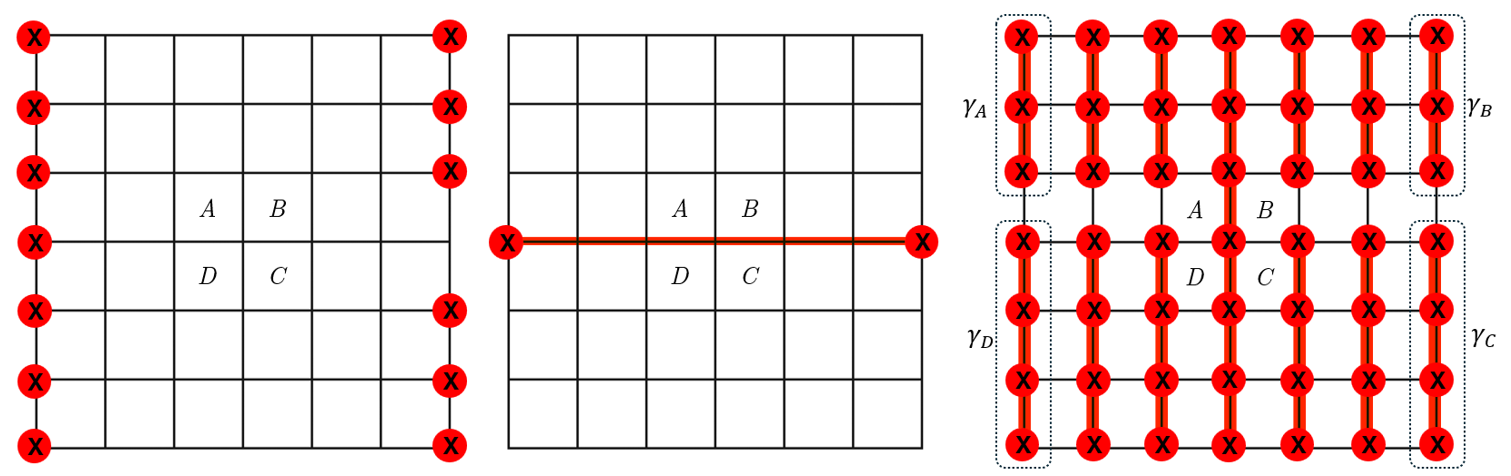}
    \caption{
        \textbf{Left:} up to vacuous errors from $M_1$, $X$-type errors are generated by single-qubit $X$ operators supported on the far left, right vertices of the lattice. Any error component lying on the $x$-axis is translated to the left. \textbf{Center:} the only element of $\langle M_1 \rangle$ supported on the far left, right vertices. \textbf{Right:} the basis of all $X$-type errors commuting with $M_2$; the far left, right elements are labeled as shown.
        Modified from Figure 3 of Alam and Rieffel~\cite{Alam2024}, licensed under CC-BY 4.0.
    }
    \label{fig:FBS-aux}
\end{figure}

\begin{lemma}\label{thm:FBSSTuck}
    Let $t \geq 5$ be an odd time step. 
    Let $E_t$ be a one-time $X$-type error supported on~$\mathcal V$ of weight~$< 2c$
    such that $E_t$ anticommutes with some element of~$\mathbf M_{t+1}$.
    Then, $S E_t$ anticommutes with some element of~$M_{t+1}$ for all $S \in \ISG(t)$.
\end{lemma}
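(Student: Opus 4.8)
The plan is to reduce the statement to a combinatorial fact about $X$-type Pauli patterns on the $d\times d$ grid, and then to exhaust the few configurations permitted by the three hypotheses on $E_t$.

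\emph{Reduction.} Since $t+1$ is even, every operator in $\mathbf M_{t+1}$ is a two-qubit $Z$-type operator $Z_{s,c}Z_{s+1,c}$ on a vertical segment. An $X$-type operator commutes with all of these exactly when its support meets each column in an all-or-nothing fashion, i.e.\ when it is a product of full columns of $X$; let $\mathcal C$ denote the group of such operators. The code is CSS in the sense of~\cref{thm:CSSmeasurements}, so $\ISG(t)$ is the direct product of its $X$-type subgroup $\Sigma$ and its $Z$-type subgroup. As $E_t$ is $X$-type and $\mathbf M_{t+1}$ is $Z$-type, $SE_t$ commutes with every element of $\mathbf M_{t+1}$ iff the $X$-part of $SE_t$ lies in $\mathcal C$, i.e.\ iff $E_t\in\Sigma+\mathcal C$ (writing Pauli multiplication additively as elsewhere in the paper). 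Hence the lemma is equivalent to $E_t\notin\Sigma+\mathcal C$.

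\emph{Computing $\Sigma+\mathcal C$.} For odd $t\ge 5$ in the steady stage, the round-$(t-1)$ vertical $ZZ$ measurements destroy every $X$-type stabilizer except the persistent ones (products of two adjacent full columns of $X$), and measuring the round-$t$ matching of horizontal $XX$ pairs adjoins exactly that matching; thus $\Sigma$ is generated by the persistent $X$-stabilizers together with $\mathbf M_t$. Unwinding this generating set --- the persistent stabilizers contribute the same even-weight pattern to every row, while a horizontal-$XX$ generator edits a single row by an element of $M$, the $\FF_2$-span of the indicator vectors of the matched pairs of round $t$ --- yields
\[
  \Sigma=\bigl\{\,v\;:\;\text{each row }v_r\text{ has even weight, and }v_r+v_{r'}\in M\text{ for all }r,r'\,\bigr\}.
\]
Adjoining $\mathcal C$, whose elements place one common pattern in every row, only replaces ``even weight'' by ``common weight parity'':
\[
  \Sigma+\mathcal C=\bigl\{\,v\;:\;\text{all }v_r\text{ share a common weight parity, and }v_r+v_{r'}\in M\text{ for all }r,r'\,\bigr\}.
\]

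\emph{Exhausting the cases.} Suppose $E_t\in\Sigma+\mathcal C$. Since $E_t$ is supported on $\mathcal V$, each of its rows is a subset of $\{e_L,e_R\}$ (the leftmost and rightmost column positions), and in the midpoint row only the left vertex is available, so that row is a subset of $\{e_L\}$. None of $e_L$, $e_R$, $e_L+e_R$ belongs to $M$: the first two have odd weight, and the third has nonzero coordinate on the extreme column left unmatched by the round-$t$ matching, whereas every element of $M$ vanishes there. Therefore $v_r+v_{r'}\in M$ forces all rows of $E_t$ to coincide, and the midpoint-row and parity constraints then force the common row to be $\emptyset$ or the full left column. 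The first gives $E_t=I$, contradicting that $E_t$ anticommutes with some element of $\mathbf M_{t+1}$; the second gives $E_t\in\mathcal C$, which commutes with all of $\mathbf M_{t+1}$ (and has weight $d>2c$, also contradicting the weight hypothesis). Hence $E_t\notin\Sigma+\mathcal C$, which is the claim.

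\emph{Main obstacle.} The delicate step is the computation of $\Sigma$ for the \emph{gauge-defected} code: the $k=1$ gauge defect alters $\ISG(t)$ near the midpoint of the right edge --- this is precisely why that vertex is excluded from $\mathcal V$ --- and one must verify that the defect neither enlarges $M$ nor introduces an $X$-type stabilizer that would place an admissible $E_t$ into $\Sigma+\mathcal C$. A lesser nuisance is that the round-$t$ matching depends on $t\bmod 4$, so the argument is run once for $t\equiv 1$ and once for $t\equiv 3\bmod 4$, using in each case that one extreme column is unmatched.
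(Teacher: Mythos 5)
Your proof is correct, and it covers the same ground as the paper's argument but repackages it more systematically. Where the paper works by iterated normal-form reductions against figures (move $S$ into $\langle\mathbf M_1\rangle$, observe $SE_t$ must be a product of vertical strings, translate the support to the left/right edges, then read off the unique remaining $S'$ from the middle pane), you instead reformulate the whole lemma as a membership test $E_t\notin\Sigma+\mathcal C$, derive an explicit row-wise description of $\Sigma+\mathcal C$ (all rows share a parity and pairwise differ by an element of the round-$t$ matching span $M$), and then exhaust the handful of row patterns compatible with support on $\mathcal V$. The key combinatorial input is the same in both: the round-$t$ matching leaves one \emph{extreme} column uncovered, so $e_L+e_R\notin M$, forcing all rows of an admissible $E_t$ to agree and hence $E_t\in\{\one,\text{full left column}\}$, both ruled out. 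Your version makes the exhaustion fully explicit and avoids the paper's dependence on the specific operator $S'\gamma_C\gamma_D$ drawn in the figure, which I find cleaner. One caveat: your ``Main obstacle'' paragraph correctly identifies that the computation of $\Sigma$ (persistent stabilizers $+\langle\mathbf M_t\rangle$) must survive the $k=1$ gauge defect, and you do not actually verify this; but the paper's own proof leans on exactly the same structure claim, citing~\cite{Alam2024}, so this is a shared dependency rather than a gap peculiar to your argument. If you want the write-up to be self-contained, that structure lemma should be proved or cited explicitly rather than sketched.
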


\begin{proof}
    Assume otherwise.
    Without loss of generality, we may fix $t=1 \bmod 4$.
    \cite{Alam2024} shows that any $X$-type element~$S \in \ISG(t)$ 
    is a product of persistent Bacon--Shor stabilizers and some elements of~$\mathbf M_1$.
    Since persistent Bacon--Shor stabilizers commute with every measurement,
    we may assume that $S \in \langle \mathbf M_1 \rangle$.
    On the other hand, the only operator that commutes with every element of~$\mathbf M_2$
    is a product of ``vertical strings'' depicted in the right pane of~\cref{fig:FBS-aux}.
    So, $S E_t$ must be some product of these vertical strings.
    It is clear by inspection that we can further modify~$S E_t$ by~$\mathbf M_1$
    such that $S' E_t$ is commuting with all of~$\mathbf M_2$ and $S' E_t$ is supported 
    on the left, right edges of the grid.
    Then, since $E_t$ is supported on the left, right edges,
    $S' = (S' E_t) (E_t)$ is supported on the left, right edges,
    but there is only one possible choice of $S'$ as shown in the middle pane of~\cref{fig:FBS-aux}.
    It follows that $E_t$ must contain a factor $S' \gamma_C \gamma_D$ of weight~$2c$.
\end{proof}

The proof of~\cref{thm:FBSDistance} is now straightforward.

\begin{proof}[Proof of~\cref{thm:FBSDistance}]    
    Let $E=(E_a, \ldots, E_b)$ be an undetectable spacetime error satisfying the conditions of~\cref{thm:FBSSimplification};
    otherwise, since the instantaneous code distance (of the Bacon--Shor code) is $2c+1$,
    there is nothing to prove.
    In particular, $E_a$ anticommutes with $M_{a+1}$. 
    By~\cref{thm:PushingUndetectable}, 
    there must exist an element $S \in \ISG(a)$ such that $SE_a$ commutes with $\mathbf M_{a+1}$. By~\cref{thm:FBSSTuck}, the weight of the first nonidentity time slice of~$E$ has weight~$\ge 2c$. 
    Combining this with the upper bound~$2c$ from~\cite{Alam2024} completes the proof.
\end{proof}

\section{Discussion}

We have derived a dichotomy of undetectable errors in bounded-inference dynamical codes
that benign errors and only benign errors are correctable.
The generating set for the benign errors is
as simple as one would guess:
the measurement operator inserted immediately after a measurement,
called a vacuous error,
and a pair of the same Pauli operators inserted on consecutive time steps
which would cancel each other
but which conjugate measurements that are undisturbed by the insertion,
called a sandwich error.
Just like Pauli stabilizers leave a static code state invariant,
these benign errors leave all but finitely many density matrices invariant
in the history of the physical state of the code.
The dichotomy is proved by showing that 
the action of any undetectable error on the history of the code
can always be realized by an essentially unique logical operator (trivial or not)
of an instantaneous stabilizer code sufficiently far in the future.
This is the best one can hope for;
some undetectable error may not commute with all instantaneous stabilizers.
To bound how far into the future one must go,
we have used the bounded-inference property,
which is enjoyed by all Floquet (time periodic) codes.
Based on the dichotomy, we have defined the code distance of Floquet codes.

Importantly, our results do not assume any time boundary conditions,
and are derived only based on the defining data of a dynamical or Floquet code.
Even if a dynamical code is defined on a finite time window,
for example by destructive single-qubit measurements in the end,
the code distance in our sense reproduces 
what is typically called the ``effective distance'' of a code
that is measured in numerical performance tests 
in the presence of circuit level noise.
Our code distance thus enables intrinsic comparison between Floquet codes,
generalizing the role of the code distance in static Pauli stabilizer codes.

Below are further comments and open problems.

\begin{remark}\label{rem:QubitReplacement}
    We have assumed that there is a fixed set of qubits from the beginning of time,
    on which a dynamical code is defined.
    This is reasonable if a system of qubits is fabricated once and for all,
    but does not obviously cover situations 
    in which physical qubits are discarded and replenished frequently.
    However, our setting can handle the latter case
    provided that any single qubit is measured in a Pauli basis before it is discarded
    and any fresh qubit is initialized in Pauli basis before it interacts with any other qubit,
    which we believe is a reasonable assumption.
    The discard can be modeled by 
    the completely depolarizing channel~$\calD: \rho \mapsto \frac 1 2 \one$ on a qubit~$q$.
    It is no harm for the purpose of analysis
    to declare that a fresh qubit that compensates a discarded qubit
    is the same qubit~$q$ that just went through~$\calD$.
    Overall, the qubit~$q$ is acted on
    by a single-qubit Pauli measurement~$M$, and then~$\calD$, 
    and then another single-qubit Pauli measurement~$M'$.
    If $M$ and $M'$ do not commute,
    then by benign errors we can insert any single-qubit Pauli error at the location of~$\calD$.
    This effectively removes the role of~$\calD$,
    and it suffices to consider a circuit without~$\calD$ but with~$M$ and $M'$.
    If $M$ and $M'$ commute,
    then we may insert a noncommuting measurement for the purpose of analysis in between~$M$ and~$M'$,
    and we may remove~$\calD$ from further consideration.
\end{remark}

\begin{remark}
    A logical operator in the steady stage 
    is an undetectable error supported on one time step.
    Every time step we have an equivalent logical operator 
    for each logical operator we start with.
    These evolved logical operators set the frame of reference,
    based on which we can read logical qubit.
    So, it is \emph{not} meaningful to say that some Floquet code implements
    some logical Clifford transformation every cycle.
    For example, the honeycomb code without boundary~\cite{Hastings2021}
    may appear to implement the logical Clifford corresponding 
    to $e \leftrightarrow m$,
    but this transformation is never used for any computational tasks.
    A meaningful statement can be made when a Floquet code 
    or its temporary modification such as insertion 
    of a transversal logical unitary gate
    is \emph{compared} to another code block 
    and the two code blocks may interact.
\end{remark}

\begin{problem}
    Periodicity in time resembles translation invariance in space.
    Spatially translation-invariant codes can be compactly represented
    by polynomial matrices~\cite{Haah2013},
    and a similar representation is possible
    for a space-translation-invariant Floquet codes.
    In the static translation-invariant case,
    the excitation map (the map from errors to syndromes)
    is the adjoint of the stabilizer map.
    It is unknown however
    what an analogous excitation map is
    for spacetime translation-invariant Floquet codes.
    The ancestry would be a starting point for this question, 
    and it remains to find an explicit generating set of the kernel of~$\ance_0$
    on the domain of all deterministic measurements.
\end{problem}

\begin{problem}
    Many statements in this paper cease to be true 
    in the initial stage of a dynamical code.
    The initial stage is perhaps the most important for the question of 
    preparation of logical eigenstates.
    What logical states can be fault-tolerantly initialized
    without too heavy modification of the initialization step?
    More concretely, 
    we can ask what logical states can be prepared fault-tolerantly
    by running the measurement schedule of a Floquet code 
    starting with a product state.
    In a similar vein, what logical operators can be measured fault-tolerantly
    by single-qubit measurements especially if a Floquet code is not CSS?
\end{problem}

\bibliography{fd-refs}

\newcommand{\etalchar}[1]{$^{#1}$}
\begin{thebibliography}{dlFOTT{\etalchar{+}}25}

\bibitem[AR24]{Alam2024}
M.~Sohaib Alam and Eleanor Rieffel.
\newblock Dynamical logical qubits in the bacon-shor code.
\newblock {\em Physical Review A}, 2024.
\newblock \href {https://arxiv.org/abs/2403.03291} {\path{arXiv:2403.03291}}, \href {https://doi.org/10.1103/nfxv-3dp7} {\path{doi:10.1103/nfxv-3dp7}}.

\bibitem[Bac06]{Bacon2006}
Dave Bacon.
\newblock Operator quantum error‐correcting subsystems for self‐correcting quantum memories.
\newblock {\em Physical Review A}, 73(1):012340, 2006.
\newblock \href {https://arxiv.org/abs/quant-ph/0506023} {\path{arXiv:quant-ph/0506023}}, \href {https://doi.org/10.1103/PhysRevA.73.012340} {\path{doi:10.1103/PhysRevA.73.012340}}.

\bibitem[BLN{\etalchar{+}}24]{Bombin2024}
Hector Bombin, Daniel Litinski, Naomi Nickerson, Fernando Pastawski, and Sam Roberts.
\newblock Unifying flavors of fault tolerance with the {ZX} calculus.
\newblock {\em Quantum}, 8:1379, 2024.
\newblock \href {https://arxiv.org/abs/2303.08829} {\path{arXiv:2303.08829}}, \href {https://doi.org/10.22331/q-2024-06-18-1379} {\path{doi:10.22331/q-2024-06-18-1379}}.

\bibitem[DKLP02]{Dennis2001}
Eric Dennis, Alexei Kitaev, Andrew Landahl, and John Preskill.
\newblock Topological quantum memory.
\newblock {\em J. Math. Phys.}, 43(9):4452--4505, 2002.
\newblock \href {https://arxiv.org/abs/quant-ph/0110143} {\path{arXiv:quant-ph/0110143}}, \href {https://doi.org/10.1063/1.1499754} {\path{doi:10.1063/1.1499754}}.

\bibitem[dlFOTT{\etalchar{+}}25]{Fuente2024}
Julio C.~Magdalena de~la Fuente, Josias Old, Alex Townsend-Teague, Manuel Rispler, Jens Eisert, and Markus Müller.
\newblock Xyz ruby code: Making a case for a three-colored graphical calculus for quantum error correction in spacetime.
\newblock {\em PRX Quantum}, 6(1):010360, March 2025.
\newblock \href {https://arxiv.org/abs/2407.08566} {\path{arXiv:2407.08566}}, \href {https://doi.org/10.1103/prxquantum.6.010360} {\path{doi:10.1103/prxquantum.6.010360}}.

\bibitem[DP23]{Delfosse2023}
Nicolas Delfosse and Adam Paetznick.
\newblock Spacetime codes of clifford circuits.
\newblock 2023.
\newblock \href {https://arxiv.org/abs/2304.05943} {\path{arXiv:2304.05943}}, \href {https://doi.org/10.48550/ARXIV.2304.05943} {\path{doi:10.48550/ARXIV.2304.05943}}.

\bibitem[DTB23]{Davydova2022}
Margarita Davydova, Nathanan Tantivasadakarn, and Shankar Balasubramanian.
\newblock {Floquet} codes without parent subsystem codes.
\newblock {\em PRX Quantum}, 4(2):020341, 2023.
\newblock \href {https://arxiv.org/abs/2210.02468} {\path{arXiv:2210.02468}}, \href {https://doi.org/10.1103/prxquantum.4.020341} {\path{doi:10.1103/prxquantum.4.020341}}.

\bibitem[FDB{\etalchar{+}}23]{Fahimniya2023}
Ali Fahimniya, Hossein Dehghani, Kishor Bharti, Sheryl Mathew, Alicia~J. Kollár, Alexey~V. Gorshkov, and Michael~J. Gullans.
\newblock Fault-tolerant hyperbolic {Floquet} quantum error correcting codes.
\newblock 2023.
\newblock \href {https://arxiv.org/abs/2309.10033} {\path{arXiv:2309.10033}}, \href {https://doi.org/10.48550/ARXIV.2309.10033} {\path{doi:10.48550/ARXIV.2309.10033}}.

\bibitem[FG24]{Fu2024}
Xiaozhen Fu and Daniel Gottesman.
\newblock Error correction in dynamical codes.
\newblock 2024.
\newblock \href {https://arxiv.org/abs/2403.04163} {\path{arXiv:2403.04163}}, \href {https://doi.org/10.48550/ARXIV.2403.04163} {\path{doi:10.48550/ARXIV.2403.04163}}.

\bibitem[GNFB21]{Gidney2021}
Craig Gidney, Michael Newman, Austin Fowler, and Michael Broughton.
\newblock A fault-tolerant honeycomb memory.
\newblock {\em Quantum}, 5:605, 2021.
\newblock \href {https://arxiv.org/abs/2108.10457} {\path{arXiv:2108.10457}}, \href {https://doi.org/10.22331/q-2021-12-20-605} {\path{doi:10.22331/q-2021-12-20-605}}.

\bibitem[GNM22]{Gidney2022}
Craig Gidney, Michael Newman, and Matt McEwen.
\newblock Benchmarking the planar honeycomb code.
\newblock {\em Quantum}, 6:813, 2022.
\newblock \href {https://arxiv.org/abs/2202.11845} {\path{arXiv:2202.11845}}, \href {https://doi.org/10.22331/q-2022-09-21-813} {\path{doi:10.22331/q-2022-09-21-813}}.

\bibitem[Got22]{Gottesman2022}
Daniel Gottesman.
\newblock Opportunities and challenges in fault-tolerant quantum computation.
\newblock 2022.
\newblock \href {https://arxiv.org/abs/2210.15844} {\path{arXiv:2210.15844}}, \href {https://doi.org/10.48550/ARXIV.2210.15844} {\path{doi:10.48550/ARXIV.2210.15844}}.

\bibitem[Haa13]{Haah2013}
Jeongwan Haah.
\newblock Commuting pauli hamiltonians as maps between free modules.
\newblock {\em Commun. Math. Phys.}, 324(2):351--399, October 2013.
\newblock \href {https://arxiv.org/abs/1204.1063} {\path{arXiv:1204.1063}}, \href {https://doi.org/10.1007/s00220-013-1810-2} {\path{doi:10.1007/s00220-013-1810-2}}.

\bibitem[HH21]{Hastings2021}
Matthew~B. Hastings and Jeongwan Haah.
\newblock Dynamically generated logical qubits.
\newblock {\em Quantum}, 5:564, 2021.
\newblock \href {https://arxiv.org/abs/2107.02194} {\path{arXiv:2107.02194}}, \href {https://doi.org/10.22331/q-2021-10-19-564} {\path{doi:10.22331/q-2021-10-19-564}}.

\bibitem[HH22]{Haah2022}
Jeongwan Haah and Matthew~B. Hastings.
\newblock Boundaries for the honeycomb code.
\newblock {\em Quantum}, 6:693, 2022.
\newblock \href {https://arxiv.org/abs/2110.09545} {\path{arXiv:2110.09545}}, \href {https://doi.org/10.22331/q-2022-04-21-693} {\path{doi:10.22331/q-2022-04-21-693}}.

\bibitem[IP15]{Iyer2013}
Pavithran Iyer and David Poulin.
\newblock Hardness of decoding quantum stabilizer codes.
\newblock {\em IEEE Trans. Info. Theory}, 61:5209--5223, 2015.
\newblock \href {https://arxiv.org/abs/1310.3235} {\path{arXiv:1310.3235}}, \href {https://doi.org/10.1109/TIT.2015.2422294} {\path{doi:10.1109/TIT.2015.2422294}}.

\bibitem[MBG23]{McEwen2023}
Matt McEwen, Dave Bacon, and Craig Gidney.
\newblock Relaxing hardware requirements for surface code circuits using time-dynamics.
\newblock {\em Quantum}, 7:1172, 2023.
\newblock \href {https://arxiv.org/abs/2302.02192} {\path{arXiv:2302.02192}}, \href {https://doi.org/10.22331/q-2023-11-07-1172} {\path{doi:10.22331/q-2023-11-07-1172}}.

\bibitem[PKD{\etalchar{+}}23]{Paetznick2022}
Adam Paetznick, Christina Knapp, Nicolas Delfosse, Bela Bauer, Jeongwan Haah, Matthew~B. Hastings, and Marcus~P. da~Silva.
\newblock Performance of planar {Floquet} codes with {Majorana}-based qubits.
\newblock {\em PRX Quantum}, 4(1):010310, 2023.
\newblock \href {https://arxiv.org/abs/2202.11829} {\path{arXiv:2202.11829}}, \href {https://doi.org/10.1103/prxquantum.4.010310} {\path{doi:10.1103/prxquantum.4.010310}}.

\bibitem[vdW20]{Wetering2020}
John van~de Wetering.
\newblock Zx-calculus for the working quantum computer scientist.
\newblock 2020.
\newblock \href {https://arxiv.org/abs/2012.13966} {\path{arXiv:2012.13966}}, \href {https://doi.org/10.48550/ARXIV.2012.13966} {\path{doi:10.48550/ARXIV.2012.13966}}.

\bibitem[Vui21]{Vuillot2021}
Christophe Vuillot.
\newblock Planar {Floquet} codes.
\newblock 2021.
\newblock \href {https://arxiv.org/abs/2110.05348} {\path{arXiv:2110.05348}}, \href {https://doi.org/10.48550/ARXIV.2110.05348} {\path{doi:10.48550/ARXIV.2110.05348}}.

\bibitem[XD25]{Xu2025}
Yichen Xu and Arpit Dua.
\newblock Fault-tolerant protocols through spacetime concatenation.
\newblock 2025.
\newblock \href {https://arxiv.org/abs/2504.08918} {\path{arXiv:2504.08918}}, \href {https://doi.org/10.48550/ARXIV.2504.08918} {\path{doi:10.48550/ARXIV.2504.08918}}.

\bibitem[YCC24]{Yan2024}
Bowen Yan, Penghua Chen, and Shawn~X. Cui.
\newblock {Floquet} codes from coupled spin chains.
\newblock 2024.
\newblock \href {https://arxiv.org/abs/2410.18265} {\path{arXiv:2410.18265}}, \href {https://doi.org/10.48550/ARXIV.2410.18265} {\path{doi:10.48550/ARXIV.2410.18265}}.

\bibitem[ZAV23]{Zhang2022}
Zhehao Zhang, David Aasen, and Sagar Vijay.
\newblock The {X}-cube {Floquet} code.
\newblock {\em Phys. Rev. B}, 108:205116, 2023.
\newblock \href {https://arxiv.org/abs/2211.05784} {\path{arXiv:2211.05784}}, \href {https://doi.org/10.1103/PhysRevB.108.205116} {\path{doi:10.1103/PhysRevB.108.205116}}.

\end{thebibliography}
\bibliographystyle{alphaurl}
\end{document}